\newcommand{\beq}{\begin{equation}}
\newcommand{\eeq}{\end{equation}}
\newcommand{\beqa}{\begin{eqnarray}}
\newcommand{\eeqa}{\end{eqnarray}}
\newcommand{\beqan}{\begin{eqnarray*}}
\newcommand{\eenan}{\end{eqnarray*}}
\def\Bbb{\mathbb}
\newcommand{\Dslash}{{\slash{\kern -0.5em}\partial}}
\newcommand{\Aslash}{{\slash{\kern -0.5em}A}}
\def\sqr#1#2{{\vcenter{\hrule height.#2pt
     \hbox{\vrule width.#2pt height#1pt \kern#1pt
        \vrule width.#2pt}
     \hrule height.#2pt}}}
\def\thinspace{\kern .16667em}
\def\xp{x_{{\kern -.2em}_\perp}}
\def\subp{_{{\kern -.2em}_\perp}}
\def\defeq{:{\kern -0.5em}=}
\def\inlinedefeq{:{\kern -0.8em}=}
\def\Tr{{\rm Tr}\,}
\def\JN{{\mathcal J}_N}
\def\X{{\mathcal X}}
\def\XP{{\mathcal X}^+}
\def\XPP{{\mathcal X}^{++}}
\def\XNP{{\mathcal X}_N^+}
\def\XNPP{{\mathcal X}_N^{++}}
\def\Part{{\mathfrak P}}
\def\Fmax{F_\mathrm{max}}
\def\Fhat{\hat{F}}
\def\V{\mathcal V}
\def\dotro{\varrho}
\def\scale{\sf{scale}}
\def\dom{\mathrm{dom}\,}
\def\epi{\mathrm{epi}\,}
\def\graph{\mathrm{graph}\,}
\def\rest{{\kern -0.3em}\upharpoonright}
\newtheorem{lem}{Lemma}[section]
\newtheorem{thm}{Theorem}[section]
\newtheorem{cor}{Corollary}[section]
\newtheorem*{cor*}{Corollary}
\newtheorem*{thm*}{Theorem}
\newtheorem{prop}{Proposition}[section]
\begin{document}

\title{Taming Density Functional Theory by Coarse-Graining}
\author{Paul~E.~Lammert}
\affiliation{
Department of Physics, 104B Davey Lab \\
Pennsylvania State University \\
University Park, PA 16802-6300 \\
pel1@psu.edu}

\begin{abstract}
The standard (``fine-grained'') interpretation of quantum density functional theory,
in which densities are specified with infinitely-fine spatial resolution,
is mathematically unruly.  Here,
a coarse-grained version of DFT, featuring limited spatial resolution, and its
relation to the fine-grained theory in the $L^1\cap L^3$ formulation of
Lieb, is studied, with the object of showing it to be not only mathematically
well-behaved, but consonant with the spirit of DFT, practically (computationally)
adequate and sufficiently close to the standard interpretation as to accurately
reflect its non-pathological properties.
The coarse-grained interpretation is shown to be a good model of formal DFT
in the sense that:
all densities are (ensemble)-V-representable; the intrinsic energy functional $F$
is a continuous function of the density and the representing external potential is 
the (directional) functional derivative of the intrinsic energy.
Also, the representing potential $v[\rho]$ is quasi-continuous, in that
$v[\rho]\rho$ is continuous as a function of $\rho$.
The limit of coarse-graining scale going to zero is studied to see if convergence
to the non-pathological aspects of the fine-grained theory is adequate to justify
regarding coarse-graining as a good approximation.
Suitable limiting behaviors or intrinsic energy, densities and representing potentials
are found.  Intrinsic energy converges monotonically, coarse-grained densities
converge uniformly strongly to their low-intrinsic-energy fine-grainings, and
$L^{3/2}+L^\infty$ representability of a density is equivalent to the existence
of a convergent sequence of coarse-grained potential/ground-state density pairs.
\end{abstract}
\maketitle


\section{Introduction}

Underlying electronic density functional theory 
(DFT)\cite{Parr-Yang,Dreizler-Gross,Eschrig,Martin,Capelle06} in both the
dominant Kohn-Sham\cite{KS65} and Orbital-Free\cite{Ho08,Chai07,Garcia-Cervera07}
variants, is a density functional, $F[\rho]$, representing the minimum kinetic plus 
(Coulomb) interaction energy of $N$ electrons compatible with the density $\rho$.
Clearly, the properties of $F$ are of great importance, 
and the formal development of DFT seems, at least implicitly, to involve 
assumptions that
({\it a}) the intrinsic energy functional is differentiable,
({\it b}) each density can be selected as a ground state density of an
external potential, which is the functional derivative of $F$ at $\rho$,
({\it c}) $F$ is continuous.
None of these is true.  Better to say: none of them holds in the 
standard interpretation.  For, there are things to be interpreted.
Continuity requires a topology, and derivatives come in various types.
Might the interpretational task extend even to the terms
`density' and `potential'?
The standard interpretation is {\it fine-grained} in that a density is 
assigned to every point (``almost-every'' point, technically).  
If we construct a regular partition of space into cells and regard 
number-in-cell divided by cell-volume as `{density}', then we 
arrive at a {\it coarse-grained}, or resolution-limited, interpretation.  
The thesis of this paper is that this coarse-grained model is consonant with
the spirit of DFT, mathematically benign, and a good approximation 
to the fine-grained theory in that it faithfully reflects its 
non-pathological aspects.
The coarse-grained interpretation renders every density the ground state
of some potential, essentially uniquely (Hohenberg-Kohn theorem), it satisfies 
assumptions ({a-c}) above, and the representing potential $v[\rho]$ is 
`quasi-continuous' in that $v[\rho]\rho$ is $L^1$ continuous.  The good 
approximation properties
are phrased in terms of limits as the coarse-graining scale is taken to
zero.  Intrinsic energy converges monotonically to its fine-grained value.
A fine-grained density is V-representable by a potential in $L^{3/2}+L^\infty$
if and only if there is a sequence of coarse-grained densities which
converge in an appropriate way, along with their representing potentials,
to the fine-grained data.  Coarse-grained densities approximate in a uniform
way all the fine-grained densities of low intrinsic energy which project to them.
 
To expose a little better the basic idea and its consonance with DFT, consider the 
constrained-search formulation\cite{Levy79,Levy82,Valone80,Lieb83,Argaman02}.
The focus of our interest is the Valone-Lieb intrinsic energy 
functional\cite{Valone80,Lieb83} defined as
\beq
{F}[\rho] \defeq \inf \left\{\Tr \Gamma(T+V_{ee}) :\, \Gamma \mapsto \rho  
\right\}.
\label{F1}
\eeq
Minimization is over antisymmetric mixed states of $N$ identical fermions
$\Gamma = \sum_i \lambda_i |\psi_i\rangle\langle \psi_i|$
with $\sum_i \lambda_i = 1$ having single-particle density $\rho$
(this relation is denoted $\Gamma \mapsto \rho$).
$T$ is the kinetic energy and $V_{ee}$ the interaction energy among the fermions,
so that $\Tr \Gamma(T+V_{ee})$ is the expectation value of kinetic-plus-interaction 
energy in the state $\Gamma$. 
We have in mind electrons in particular, and the interaction is assumed 
no more singular than Coulomb.
States of the system are partitioned into equivalence classes according
to their density.  Ultimately, only the lowest intrinsic energy state in each 
equivalence class matters (assuming that the minimizer exists).
For external single-particle potentials $v$ in some 
set ${\V}$, we define the ground-state energy as a Legendre-Fenchel 
transform of $F$:
\beq
E[v] \defeq \inf \left\{ F[\rho] + \int v({\bm x}) \rho({\bm x})\, d{\bm x} \, : \, 
\rho \in {\mathcal D} \right\}.
\label{LF-transform}
\eeq
If $T + V_{ee} + v$ really has a ground state with a density in ${\mathcal D}$, 
this will pick out its energy.
The point is that the potential couples to the density only, so that the
minimization problem ``find the ground state'' can be split: 
first find the minimizing $\rho$ in Eq. (\ref{LF-transform}), then
go back and identify the state from Eq. (\ref{F1}).  

Suppose now a partition $\Part$ of space into disjoint cells, and 
take ${\V}$ to consist of potentials uniform over each cell
($\Part$-measurable).
In that case, the potentials are sensitive to the average density
over a cell, but not to the variation within; that variation still
exists, but we relinquish control over it.  Densities fall into 
equivalence classes, {\it coarse-grained densities}, 
identified by average density over each cell.
Ultimately, only densities (possibly nonunique) of lowest intrinsic 
energy in each equivalence class matter.  
With ${\V}$ thus adapted to $\Part$, we make
a reinterpretation, taking ${\mathcal D}$ to consist of 
{\it coarse-grained\/} densities,
and Eqs. (\ref{F1}-\ref{LF-transform}) apply {\em as they stand}.
This is the basic idea of coarse-graining.  We are {\em not\/} restricting
the densities or the states or changing the quantum mechanics by
working on a lattice, for example, but are working with a limited
spatial resolution, and thus cannot distinguish all densities.  
It is in the nature of the problem that the fine-grained member of 
each coarse-grained density which is selected is the one with
lowest intrinsic energy.  The projective character of the coarse-graining
on the density side of the ledger is naturally paired with a restriction
on the potentials.  The partition $\Part$ has logical priority. 
It determines a collection of number observables which in turn
determine the denotations of `density' and `potential'.

Some densities are ground-state densities in some external potential
(V-representable).  If $v$ is one such potential, adding an arbitrary 
constant to $v$ preserves this relationship.  
The Hohenberg-Kohn theorem (which holds also in the coarse-grained theory,
see Section \ref{CG}) says that this is the only freedom.
We establish a convention for fixing the constant: 
$v[\rho]$ denotes the potential having $\rho$ as a ground-state density
with the constant adjusted so that
\beq
F[\rho] + \langle v, \rho\rangle \defeq
F[\rho] + \int v({\bm x})\rho({\bm x})\, d{\bm x} =0.
\label{gauge-convention}
\eeq
Equivalently, $E[v[\rho]] = 0$.
This particular convention is somewhat arbitrary, but some such in necessary for
all the statements in the introductory paragraph to make sense.

The body of the paper unfolds as follows.
Section \ref{fine-grained} discusses relevant aspects of the fine-grained, 
theory.  In particular, the ``bad-behavior'' alluded to earlier,
which is not so well-known, is exposed.  This provides both motivation 
and context for the coarse-grained model.  A conclusion is that the
difficulties are connected to short distance scales, which suggests
that some sort of short-distance regularization is called for.
As Chayes, Chayes and Ruskai\cite{CCR85} put it, ``long-distance difficulties 
do not generally occur in HK theory.''  But short-distance ones
certainly do.  Lattice models\cite{Kohn83,CCR85,Ullrich05} are one means of
short-distance regularization.  But such discretizations, if taken seriously, 
involve fundamental changes to quantum mechanics.  Their relation to the 
continuum theory, and the continuum limit, is tricky and ambiguous.  
The projective character of the regularization achieved by coarse-graining is 
much gentler.  One might say that the modification is at an epistemic rather
than ontic level.  It is for that reason that we use the somewhat awkward
term `fine-grained' rather than `continuum'.  The continuum nature of
space is recognized by the coarse-grained interpretation.
Some technical results are also developed for use in Section \ref{continuum}.

As for lattice formulations, approximations which use a limited basis for states or 
density matrices (studied in great detail in the series of 
papers \cite{Harriman-GDM1,Harriman-GDM2,Harriman-GDM3,Harriman-GDM4}) must
be differentiated from the current approach.
A coarse-grained density represents an equivalence class of infinitely many densities.  
Insofar as low energies are of interest, most of them are not very relevant, but the 
selection is an energetic one, not an {\it a priori} choice.
Coupled with the {\it projective} treatment of densities in the coarse-grained
model is an {\it injective} approximation of densities; it is here that a
functional palette is limited.  This might seem a somewhat uncomfortable aspect.  
We would like to handle Coulomb potentials of point charges, for instance.  
But an atomic nucleus is not a point charge and if we treat it as such, we are 
half a step from coarse-graining at the femtometer scale, anyway.  

The coarse-grained model is set up in Section \ref{CG} and results from a 
previous paper\cite{Lammert06} are reviewed.  The new results for the single-scale 
coarse-grained model are discussed and proven in Section \ref{single-scale}, showing that
it is free of the bad behavior which afflicts the fine-grained theory.  
Theorem \ref{Continuity-Thm} shows essential continuity of the intrinsic energy,
Theorem \ref{Diff-Thm} shows that physical directional derivatives of $F[\rho]$ coincide
with representing potentials $v[\rho]$ and Theorem \ref{V-Cont-Thm} shows that 
$v[\rho]\rho$ is continuous.  Kohn-Sham theory is taken up in Section \ref{KS},
where the exchange-correlation potential is shown to exist at least as the
directional derivative of the exchange-correlation energy.

Sections \ref{CG} through \ref{KS} show that the single-scale coarse-grained model 
is a well-behaved model of formal DFT.  Its acceptability, however, depends to
some extent on its ability to reasonably approximate the fine-grained theory
as the coarse-graining scale goes to zero.  This is the subject of Section \ref{continuum},
where a multi-scale coarse-grained model is set up.  Taking the coarse-graining
scale to zero is naturally thought of as a process carried out in time,
and the multi-scale model is ultimately little other than this process 
viewed {\it sub specie aeternitatis}.  It is a convenient tool however,
and serves to remove the impression that a fundamental length scale
is inherent in the coarse-grained interpretation. 
A basic result is that all the intrinsic energy of a fine-grained density
is recovered monotonically in the limit.  
Representability of a density by a $L^{3/2}+ L^\infty$ potential 
is reliably and faithfully signalled by the coarse-grained model.
Proximity of coarse-grained densities to the low-intrinsic-energy 
fine-grained densities in the equivalence classes they represent 
is also examined.  Among other things, these results are taken 
as validating the claim that the coarse-grained model approximates
what it is supposed to.

A nodding familiarity with basic Banach space functional analysis is presumed in 
the body of the paper.  Appendix \ref{fa-appendix} contains a brief review of 
concepts, definitions and notations which may be helpful to readers needing 
a quick reminder, or more.

A couple of delimiting remarks are in order before we begin.
In this paper, mixed states are always allowed.  So, all statements about
V-representability refer specifically to {\em ensemble}-V-representability.
The results are applicable to abelian spin-density-functional theory, where
up-spin and down-spin are roughly separate species.
However, genuine non-abelian spin-density-functional 
theory,\cite{Gunnarsson-Lundqvist76,Ullrich05,Gidopoulos07}
treating all components of the spin, is unfortunately well beyond the scope of this work. 

\section{Fine-Grained DFT and its Discontents}
\label{fine-grained}

This section is an essay on fine-grained DFT, focussing on
the difficulties mentioned in the Introduction.
Lieb's landmark 1983 paper\cite{Lieb83} formulates the theory in the
precise form in which we will consider it.
The recent article\cite{vanLeeuwen03} of van Leeuwen and the book\cite{Eschrig} 
by Eschrig are also recommended, though some assertions found in them are here rejected.
Further background on convex and nonsmooth analysis can be sought
in the books \cite{ET,Aubin-Ekeland,Borwein-Zhu,Phelps88}.
Chapter 5 of Aubin and Ekeland\cite{Aubin-Ekeland} contains an interesting and 
idiosyncratic discussion of the Ekeland variational principle. 
Appendix \ref{fa-appendix} contains a refresher on functional analysis 
and Lebesgue spaces which is relevant.

It is demonstrated in \S 1 of Lieb's paper\cite{Lieb83} that the 
effective domain of $F$ is
\beq
{\mathcal J}_N \defeq \left\{ \rho: \rho({\bm x}) \ge 0, \, 
\rho^{1/2} \in H^1({\Bbb R})^3, \,
\int \rho({\bm x})\, d{\bm x} = N \right\}.
\eeq
Here, $H^1({\Bbb R}^3)$ is the Sobolev space of functions $f$
such that $\int |f|^2 + |\nabla f({\bm x})|^2 \, d{\bm x} < \infty$.
The gradient can be interpreted in terms of Fourier transform;
classical differentiability is not a requirement.
The remarkable fact which makes $\JN$ the right domain is that there
is an $N$-particle wavefunction $\psi$ with finite kinetic energy
and single-particle density $\rho$ {\em if and only if\/} 
$\rho \in {\mathcal J}_N$\cite{Harriman81,Lieb83,Zumbach-Maschke,Zumbach85}. 
``Finite kinetic energy'' is understood in quadratic form sense, i.e.,
$\sum_{i=1}^N \int |\nabla_i \psi|^2 < \infty$, so 
$\nabla^2 \psi$ does not need to be square-integrable, just $\nabla \psi$.
This condition is sufficient to guarantee that also the Coulomb interaction 
energy is finite.  
Some perturbation, for example by an arbitrary bounded interaction,
would not affect conclusions.
We assume without loss that the interaction is bounded below by zero.
Since ${\mathcal J}_N$ is convex\cite{Lieb83}, forming mixtures will not 
produce densities outside ${\mathcal J}_N$, so it is the right domain
whether considering mixed states or pure states.
(A set $X$ is convex if, whenever $x$ and $y$ are in $X$, so
is the line segment $\lambda x + (1-\lambda)y$, $0\le \lambda \le 1$.)

$F[\rho]$ is now defined as in Eq. (\ref{F1}), with ${\mathcal D}=\JN$.
$F$ has two important properties which allow the analysis to move forward:
\renewcommand{\theenumi}{\alph{enumi}}
\begin{enumerate}
\item 
$F$ is convex: 
$F[s \rho + (1-s)\rho^\prime] \le sF[\rho] + (1-s)F[\rho^\prime]$ for $0\le s \le 1$. 
\item 
$F$ is lower semicontinuous with respect to the $L^1$ topology.
\end{enumerate}
Assertions about ways in which nearby densities 
are similar requires a well-defined notion of `nearby,' i.e., a topology
(see Appendix \ref{fa-appendix}).
Different topologies can be appropriate and useful in different ways.
The statement about lower semicontinuity means:
Given $\rho \in {\mathcal J}_N$ and $\epsilon > 0$, there is
$d > 0$ such that $\|\rho^\prime - \rho\|_{1} < d$ implies that
$F[\rho^\prime] > F[\rho] - \epsilon$.  
The name should now be clear.  {\em If\/} the conclusion had instead been
$F[\rho^\prime] < F[\rho] + \epsilon$, that would be {\em upper} semicontinuity of $F$.
Continuity is the conjuction of lower and upper semicontinuity.

Why is $F$ not upper semicontinuous?  That is connected with the 
important inequalies\cite{Lieb83}
\beq 
c |\rho^{1/2}|_{H^1}^2 \le F[\rho] \le c^\prime + c^{\prime\prime} |\rho^{1/2}|_{H^1}^2, 
\label{Lieb-sandwich}
\eeq 
where the squared Sobolev seminorm is 
\[
|\rho^{1/2}|_{H^1}^2 = \int |\nabla \rho^{1/2}|^2 \, d{\bm x}
= \int \frac {|\nabla \rho|^2}{4\rho} \, d{\bm x},
\]
which will also be recognized as the von Weiszacker term\cite{Dreizler-Gross}.
The constants in the upper bound can depend on the interaction, 
but the lower bound depends only on the kinetic energy.
It is a precise form of the physical intuition that strong oscillations in the 
density cost kinetic energy.  Consider, for instance the density
\hbox{$\rho^\prime({\bm x}) \sim [1+\epsilon \eta({\bm x})\sin x/\ell]^2\rho({\bm x})$},
where $0 \le \eta \le 1$ is a smooth function with bounded support.
$\|\rho^\prime - \rho\|_1 = {\mathcal O}(\epsilon)$, but
$|{\rho^\prime}^{1/2}|_{H^1}^2 = {\mathcal O}(\epsilon^2/\ell^2)$
as $\ell \to 0$.  This is the first serious short-distance difficulty. 
Such problems are a major motivation for the development of the
coarse-grained approach.  
Although this discussion was framed with respect to $L^1$ for concreteness, 
it is easy to see that it also applies to $L^1 \cap L^p$ for $1 < p \le 3$.

Now we embed ${\mathcal J}_N$ in a Banach space $X$ with a norm at
least as strong as $L^1$ (meaning the norm dominates some fixed multiple 
of the $L^1$-norm).
Lieb noticed that a Sobolev inequality 
\beq
\|\rho\|_3 \le c (|\rho^{1/2}|_{H^1}^2 + \|\rho\|_1)
\label{L3-Sobolev}
\eeq
combines with (\ref{Lieb-sandwich}) to show that
${\mathcal J}_N$ is actually contained in $L^3$ as well as $L^1$.
Consequently, he chose $X = L^1\cap L^3$ with norm 
$\|\cdot \|_X = \|\cdot\|_1 + \| \cdot \|_3$.
But there is freedom here.  Any $L^1 \cap L^p$ with $1 < p \le 3$ 
would work, as would $L^1$.  However, a norm at least as strong as
$L^1$ is important to maintain lower semicontinuity of $F$.
The move to a Banach space puts us in position to use some relevant ideas and 
results of convex analaysis, and is suggested by simple observation that
an external potential $v$ acts as a linear functional
$\langle v,\rho \rangle \defeq \int v({\bm x}) \rho({\bm x}) \, d{\bm x}$.
$F$ is extended by taking $F \equiv +\infty$ off ${\mathcal J}_N$.
This preserves convexity and lower semicontinuity\cite{Lieb83}.
The subset of $X$ on which $F < +\infty$ is called the effective domain of $F$ 
and denoted $\dom F$.  Of course, $\dom F$ is simply $\JN$, but we will use the former
notation sometimes because the effective domain in the coarse-grained 
model is different, but plays the same role.  
Note that $\JN$ is contained in the closed affine hyperplane 
\[
X_N \defeq \left\{ \rho \in X: \int \rho({\bm x}) \, d{\bm x} = N \right\}.
\]
Since elements of $X$, and even of $X_N$, can be negative, we refer to them 
generally as `{quasi-densities},' reserving `{density}' for the non-negative elements.

The ground-state energy $E[v]$ is defined according to Eq. (\ref{LF-transform}).
But, for which potentials?  The expression on the right-hand side 
of Eq. (\ref{LF-transform}) makes sense for many potentials, and we will return
to that point.
But insofar as we want $F[\rho]$ and $E[v]$ to form a Legendre-Fenchel 
transform pair, we should concentrate on its restriction to the dual 
space $X^*$ of $X$.
$E$, as a function on $X^*$ is concave and upper semicontinuous 
from its definition, and general results of convex analysis 
(due to the known convexity and lower semicontinuity of $F$), guarantee that
\beq
F[\rho] = \sup
\left\{ E[v] - \langle v,\rho\rangle :\,  {v\in X^*} \right\}.
\label{inverse-LF-transform}
\eeq
There is an apparent asymmetry here due to our choices of signs,
but also a more glaring asymmetry: $L^1({\Bbb R}^3)$ is {\em not\/} 
the topological dual space of $L^\infty({\Bbb R})$.
This is repaired by using the weak and weak-$*$ topologies on $X$ and $X^*$
(see Appendix \ref{fa-appendix}) under which they are topological duals of each other.  
Symmetry is thus restored.  
Just as important, $F$ and $E$ are lower and
upper semicontinuous respectively with respect to these 
topologies due to the remarkable fact that for a convex
subset of a Banach space, closure with respect to the norm
or weak topology are the same.   

Lieb worked with $X=L^1\cap L^3$.
That has the advantage over $L^1$ that 
the Coulomb potential of a point charge is in the dual 
space $X^* = L^{3/2}+L^\infty$, but it seems we need to 
go outside $X^*$ to find all the potentials which may be of
interest anyway.
Consider, for example, the harmonic oscillator 
potential $|{\bm x}|^2$.  It is not in any of our $X^*$s,
yet it does not really pose any particular difficulty.
Eq. (\ref{LF-transform}) works for it, as long as we restrict 
the minimization to $\rho \in {\mathcal J}_N$.  [Outside that set, 
the right-hand side of Eq. (\ref{LF-transform}) involves nonsense 
such as $\infty + (-\infty)$.]
It begins to appear already that $V$-representability by a
potential in $X^*$ has a special status.  We will call this
restricted notion {\it $X^*$-representability.}

If we suppose that $\rho$ is {\it V-representable}, that is, a ground
state density for some potential $v[\rho]$, and that 
$F[\cdot\,] + \langle v,\cdot\,\rangle$ is somehow smooth,
Eq. (\ref{LF-transform}) suggests that the representing potential is 
some kind of functional derivative of $F$: $\delta F/\delta \rho = -v[\rho]$.
Smoothness is not in the cards, but convex analysis offers the derivative-like 
notion of {\it subdifferential\/} that does not depend on it.
The subdifferential of $F$ at $\rho$, traditionally denoted $\partial F$,
is a subset of $X^*$, the elements of which are called
{\it subgradients}.  The definition is
\[
v \in \partial F[\rho]
\iff
F[\rho^\prime] \ge F[\rho] + \langle v,\rho^\prime - \rho\rangle,\; \forall \rho^\prime\in X.
\]
A subgradient is the `slope' of a continous hyperplane which touches
the graph of $F$ at $\rho$, but is nowhere above it.
This is a looser notion than the ordinary one of tangency:
the subdifferential of an ice-cream cone at its point has many elements.
The relation (\ref{LF-transform}) shows that $\partial F(\rho)$ 
consists precisely of those potentials for which $\rho$ is a 
ground-state density.  From the Hohenberg-Kohn theorem, 
we deduce that the subdifferential at $\rho$ either is empty, or
its elements differ only by a constant.

\begin{figure}
\includegraphics[width=70mm]{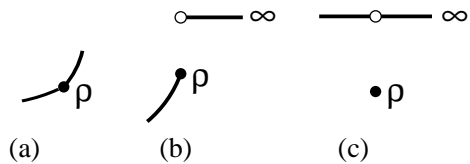}
\caption{Varieties of local behavior of $F[\rho]$ along linear slices in ${X}_N$,
through a density $\rho$.  In (b) and (c), $F$ takes the value $+\infty$.
}
\label{F-on-slices}
\end{figure}

How does the subdifferential relate to other notions of derivative?
Probably the simplest such is that of directional derivative.
The directional derivative of $F$ at $\rho \in {\dom} F$ in the 
direction $\delta \rho \in {X}$ is
\beq
F^\prime[\rho;\delta \rho] = 
\lim_{s\downarrow 0} \frac{1}{s}\Big( F[\rho + s\, \delta \rho] - F[\rho]\Big),
\label{directional-derivative}
\eeq
if the limit exists.  But, convexity guarantees that the difference
quotients are nonincreasing as $s\to 0$, so the limit {\em always\/}
exists, though it may be infinite.
In particular, if $F[\rho + s\, \delta \rho] = +\infty$ for all $s > 0$, 
then $F^\prime[\rho;\delta \rho] = +\infty$.
The domain of the directional derivative at $\rho$, denoted $\dom F^\prime[\rho;\cdot\,]$,
is defined as all $\delta \rho$ such that $F^\prime[\rho;\delta \rho] < +\infty$
If, for every $\delta \rho$, the directional derivative is finite and satisfies
$F^\prime[\rho;-\delta \rho] = -F^\prime[\rho;\delta \rho]$,
then the various directional derivatives fit together into a linear functional.  
If this linear functional is continuous, it is called a {\it G\^ateaux derivative},
or {\it G-derivative.} 
(Beware, the continuity requirement is not always imposed\cite{Phelps88}.) 
If $\rho$ is $X^*$-representable, the variational principle assures
us that 
\[
F^\prime[\rho;\delta \rho] \ge -\int v \, \delta \rho \, d{\bm x}.
\]
An important question then is to what extent equality holds.
There are certainly $\delta\rho$ for which $F^\prime[\rho;\cdot\,]$ 
is infinite.

The situation is illuminated by looking at varieties of local behavior of $F$ 
along line segments in $X$.  They are depicted in Fig. \ref{F-on-slices}.  
First, of course if $\int \delta\rho \, d{\bm x} \not= 0$, then
for any $s \not= 0$,
$\rho + s\, \delta\rho$ is outside $X_N$ and $F[\rho + s\, \delta\rho] = +\infty$,
so this is a trivial case of Fig. \ref{F-on-slices} (c).  
But, even if the perturbation does not stray from $X_N$, (b) and (c) occur for
every $\rho \in \dom F$.  This may be surprising at first.
But it is not so hard to find two sources of this kind of behavior.
First, there are perturbations for which
$\rho + s\, \delta\rho$, for $s$ on one or both sides of zero,
falls outside $X_N^+$, the set of elements of $X_N$ which are
everywhere non-negative.  For example, this will happen for $s > 0$ if 
the negative part of $\delta\rho$ falls off too slowly at infinity,
depending on $\rho$.  If the positive part also falls off too slowly,
then we have a picture like (c).
Secondly, there are perturbations for which $\rho + s\, \delta\rho$ is
in $X_N^+$ for $s$ in some open neighborhood of zero, but not in $\JN$.  
This, too, leads to case (c).
A severe enough discontinuity of $\delta\rho$, for example, will do the trick.
A variation on the construction we used to show that $F$ is locally
unbounded on $\JN$ provides a more subtle example.
Adding oscillations of the right amplitude to $\rho$ at a hierarchy of 
ever shorter wavelengths will produce a sequence converging with 
respect to $X$-norm to $\rho + \delta \rho \in X_N^+$, but with an 
{\em infinite\/} kinetic energy.

Falling outside $X_N^+$ under perturbation produces both scenarios
(b) and (c).  Falling outside $\JN$ while remaining in $X_N^+$ gives
only scenario (c).
To see this, consider $\rho, \rho^\prime \in \JN$, and suppose 
$\| \rho^\prime/\rho\|_\infty = M > 1$ is finite. Then, 
we already know that
$\rho_\lambda = (1-\lambda) \rho + \lambda \rho^\prime$
is in $\JN$ for $0 \le \lambda \le 1$. 
And, $\rho_\lambda$ is in $X_N^+$ for $-M \le \lambda < 0$.
For $\lambda$ in this latter range, $\rho_\lambda$ is in $\JN$
if $|\sqrt{\rho_\lambda}|_{H^1}$ is finite.
Since
\[
\frac{|\nabla\rho_\lambda|^2}{4\rho_\lambda}
 = \frac{1}{4\rho_\lambda}\Big( (1-\lambda)^2 |\nabla\rho|^2 + \lambda^2 |\nabla \rho^\prime|^2
+ 2\lambda(1-\lambda) \nabla\rho \cdot \nabla\rho^\prime \Big),
\]
and
$|2\lambda(1-\lambda) \nabla\rho \cdot \nabla\rho^\prime| \le
|\lambda(1-\lambda)| (|\nabla\rho|^2 + |\nabla\rho^\prime|^2)$,
taking $\lambda = - 1/(2M)$ 
gives $\tilde{\rho} \defeq \rho_{-1/2M} \ge \rho/2 \ge \rho^\prime/(2M)$, so that
\[
\frac{|\nabla\tilde{\rho}|^2}{4\tilde{\rho}}
 \le  6\frac{|\nabla\rho|^2}{4\rho}
 +  2M \frac{|\nabla\rho^\prime|^2}{4\rho^\prime} < \infty.
\]
But, $-1/(2M)$ is halfway to the critical value of $\lambda$,
and nothing prevents taking that step again and again.
Thus, if $\rho$ and $\rho^\prime$ are in $\JN$, the interior
of the line segment 
$\{ (1-\lambda)\rho + \lambda\rho^\prime : \lambda \in {\Bbb R}\} \cap X_N^+$
is also in $\JN$.  This argument does not tell us what happens at an end-point of 
this segment.  However, lower semicontinuity of $F$ implies that either it is 
in $\JN$, or $F$ diverges to $+\infty$ there.

This second class of density perturbations is apparently another 
short-distance problem, and will be absent in the coarse-grained model.
But, the first kind, connected with the fact that every density has a
tail which dies away, will still occur there.
On a slice entirely in $\dom F$, we will get something like
Fig. \ref{F-on-slices} (a).  We would hope that there was no kink, but
there is no proof that is the case, even if $F$ has a 
(necessarily essentially unique) subgradient at $\rho$. 

Here is a tempting argument put forward some time ago\cite{EE84}, 
which unfortunately keeps recurring.  It purports to show that at 
an $X^*$-representable density, a conventional G\^ateaux 
functional derivative exists and coincides with the representing potential.
The epigraph of $F$, the set of points on or above its graph,
\[
\epi F = \left\{ (\rho,z)\in X \times{\Bbb R} \, : \, F[\rho] \le z \right\},
\]
is convex.  Any directional derivative gives a line disjoint from 
the interior of $\epi F$, hence can be extended to a full hyperplane
by the Hahn-Banach theorem (the interior is the set of points which have
a full open neighborhood contained in $\epi F$).  Since such a hyperplane 
must coincide with the subgradient by assumption, the conclusion follows.
This {\em does not work}.  We have just seen that with Fig. \ref{F-on-slices}.
A G\^ateaux derivative does not give infinite directional derivatives.
That same figure also shows the flaw in the argument: 
{\em the interior of $\epi F$ relative to $X_N$ is empty}.

Of course, the only directional derivatives $F[\rho;\delta\rho]$
which have a real physical significance are those for which both
$\rho$ and $\rho + s\, \delta\rho$ for some $s > 0$ are
in $\dom F$.  The others are an artifact of the Banach space
imbedding, and are sure to be ill-behaved, so that
the best that could be expected is `quasi-G-differentiability,' 
meaning that all those physically significant directional
derivatives fit together into a continuous linear functional.

Now we turn to the $V$-representability issue.  Which densities are
ground state densities of some potential ($V$-representable)?
A coherent answer to the question probably requires deciding what will count as a potential.
One possibility, motivated by the Legendre-Fenchel duality, is to 
consider only $v$'s belonging to $X^*$.  This $X^*$-representability,
already introduced, is a somewhat restricted notion but it is well-structured 
and something can be said about it based on corollories of the celebrated Ekeland variational 
principle\cite{Ekeland79}.  These results will be used in Section \ref{continuum}.
From Eq. (\ref{LF-transform}) we have 
\[
E[v] \le F[\rho] + \langle v,\rho\rangle, \quad v\in X^*, \rho\in X.
\]
If $\rho$ and $v$ saturate the inequality, then $v \in \partial F(\rho)$.
Suppose that it is almost saturated, so
\beq
F[\rho] + \langle v,\rho\rangle \le E[v] + \epsilon.
\label{eps-subdiff}
\eeq
In that case, we say that $v$ is in the $\epsilon$-subdifferential 
$\partial_\epsilon F[\rho]$.  Physically, this means that $\rho$ 
almost achieves the ground-state energy in $v$.
Does that imply that it is almost a ground-state density of $v$?
Not quite, but the Ekeland variational principle does warrant 
(for example, Thm. I.6.2 in Ref. \cite{ET}) the following:
For any $\lambda > 0$, there is a density-potential pair 
$\rho_\lambda \in \dom F$ and  $v_\lambda \in X^*$, such that 
$\rho_\lambda$ is a ground-state density of $v_\lambda$ and
\beqa
F[\rho_\lambda] + \langle v,\rho_\lambda \rangle & \le &
F[\rho] + \langle v,\rho\rangle,
\nonumber \\
\|\rho_\lambda - \rho\|_{X}  & \le & \lambda,
\nonumber \\
\|v_\lambda - v\|_{X^*} & \le & \epsilon/\lambda.
\label{Ekeland-variational}
\eeqa
(Note that $v$ is on both sides of the first inequality.)
Interesting conclusions can be drawn from this.
Given a density $\rho$, pick any potential and find
$\epsilon$ to satisfy inequality (\ref{eps-subdiff}).  Then, taking
$\lambda = 1/n$, the theorem gives us a sequence of $X^*$-representable
densities converging to $\rho$, showing that $X^*$-representable densities
are dense in $\dom F$.  This is the Br\o nsted-Rockafellar theorem.
But, note that control over the norms of the representing potentials 
gets progressively worse as $n\to\infty$.
Another interesting choice is to set $\lambda = \sqrt{\epsilon}$.
In that case, 
$\|\rho_\lambda - \rho\|_{X}  \le \sqrt{\epsilon}$ and
$\|v_\lambda - v\|_{X^*}  \le \sqrt{\epsilon}$,
so that if $\epsilon$ is small only small perturbations of both 
$\rho$ and $v$ to find a ground-state density/representing potential pair.
This does {\em not\/} say that all nearby $X^*$-representable densities
have nearly the same representing potential.

In thinking about the relationship between $X^*$-representable densities and their 
associated potentials, it is useful to work in the product space $X \times X^*$.
The graph of the subdifferential, 
\[
\graph \partial F \defeq \left\{ (\rho,v)\in X \times X^* \, : \, v \in \partial F(\rho)\right\}, 
\]
then naturally suggests itself as an object of study.
If a pair $(\rho,v)$ is not in $\graph \partial F$, then $\rho$ is not
a ground state density for $v$.  This is a little crude.  It might be
even better to have a function which measured by how much it fails.
To this end, we introduce the {\it energetic excess} of $\rho$ with respect to $v$,
\beq
\Delta[\rho,v] \defeq F[\rho] + \langle v, \rho \rangle - E[v].
\eeq
$\Delta[\rho,v]$ is the lowest energy attainable with quasi-density $\rho$ in the
presence of $v$, relative to the ground state energy of $v$, and therefore
$\Delta: X\times X^* \to [0,\infty]$.  Clearly, $\graph \partial F = \Delta^{-1}(0)$.
Note that, although the natural description of the significance of $\Delta$ seems
asymmetric, $\rho$ and $v$ are really on essentially equivalent footing in
$\Delta[\rho,v]$ as follows from the conjugacy of $F$ and $E$.  

Some properties of $\Delta$ follow immediately from those of $F$ and $E$
which have already been discussed.
Namely, it is convex separately in each of $\rho$ and $v$ since the pairing 
$\langle v,\rho\rangle$ is linear, and it is lower semicontinuous with respect to 
either (weak)$\times$(norm) or (norm)$\times$(weak-*) topology on $X\times X^*$  
since $(v,\rho) \mapsto \langle v,\rho\rangle$ 
is continuous with respect to either of these.
This means in particular that $\graph \partial F = \Delta^{-1}(0)$ is closed 
with respect to either of these topologies.
In fact, if only
\[
\Delta[\rho_n,v_n] \to 0 \quad \mbox{and}\quad (\rho_n,v_n) \to (\rho,v),
\] 
then 
\[
(\rho,v)\in \Delta^{-1}(0),\;\; F[\rho] = \lim_{n\to\infty} F[\rho_n],\;\;
\mbox{and}\;\; E[v] = \lim_{n\to\infty} E[v_n].\]
The last two limits follow since $F$ can only decrease at the limit and $E$ 
can only increase, but $\Delta \ge 0$.

The bonus convergence of $F$ and $E$ implies that the subset of 
$\graph \partial F$ which respects the gauge convention of Eq. (\ref{gauge-convention})
is itself closed.  And, it further follows that the map $\rho \mapsto \|v[\rho]\|_{X^*}$,
like $F[\cdot]$, is lower semicontinuous, where for convenience we declare
$\|v[\rho]\|_{X^*} = +\infty$ if $\rho$ is not $X^*$-representable.  
For, suppose that $v[\rho]$ exists but $\|v[\cdot]\|_{X^*}$ is not lower semicontinuous
at $\rho$.  Then, there is a sequence $\rho_1,\rho_2,\ldots$ in the domain of
$v[\cdot]$ such that $\|v[\rho_j]\|_{X^*} \le \|v[\rho]\|_{X^*} - \epsilon$.
Since $X^*$ is a Banach space dual, bounded subsets are weak-* compact, which
means that there is a subsequence $v[\rho_{n_j}]$ which is weak-* convergent
to $v_0$ with norm not exceeding $\|v[\rho]\|_{X^*} - \epsilon$.  
But that would mean that $v_0$ is {\em the} representing potential for $\rho$
respecting the gauge-convention, which is clearly impossible since its norm
is too small.

Part of the conclusion drawn from Ekeland's variational 
principle [see (\ref{Ekeland-variational})] takes a curious form when 
written in terms of the energetic excess $\Delta$.  
We equip $X\times X^*$ with the norm $\|(x,w)\|_{X\times X^*} = \|x\|_X + \|w\|_{X^*}$.  
Since this is stronger than the topologies discussed in the previous paragraph, the
convergence results stated there still hold.  Now, if $\Delta[\rho,v] \le \epsilon$,
then there is $(\rho^\prime,v^\prime) \in \Delta^{-1}(0)$ with  
{$\|(\rho^\prime,v^\prime) - (\rho,v)\|_{X\times X^*} \le 2\sqrt{\epsilon}$}.
Thus, the minimum of $\Delta$ is necessarily ``narrow'' in some sense.

Beyond $X^*$-representability, not much can be said in general because not
much is known.  Englisch and Englisch\cite{EE83} produced some examples 
of densities (in $\JN$) which are not $V$-nonrepresentable by any potential 
which is a function.  This shows at least that such densities exist.
The examples are for a single particle and use the fact that the only candidate 
potential in that case is given by $(\nabla^2\sqrt{\rho})/\sqrt{\rho}$.
These examples were further analysed by Chayes, Chayes and Ruskai\cite{CCR85}.  
One is $\rho = (a + b |x|^{\alpha + 1/2})^2$ for $x$ near zero with 
$0 < b < a$ and $0 < \alpha < 1/2$.  
Non-V-representable densities can be undeniably physical, as shown by
the examples\cite{Martin} of single-particle excited states in a central
potential with a node in the radial wavefunction, $\rho \sim (r-r_0)^2$.
The non-V-representability of all of these examples is clearly due to
short-distance problems, which does not prove lack of other sorts, but
provides more motivation for short-distance regularization.  

Thus, $X^*$-nonrepresentable and even V-nonrepresentable exist, but
how common are they?  Suppose we equip $\JN$ with the metric
$d(\rho,\rho^\prime) = \|\rho-\rho^\prime\|_{X} + |F[\rho]-F[\rho^\prime]|$.
Prop. \ref{bad-potentials} in Appendix \ref{bad-potentials} shows that
$\rho \mapsto \|v[\rho]\|_{X^*}$ is nowhere upper semicontinuous,
and this has the consequence that $X^*$-nonrepresentable 
densities are {\it topologically generic} in $\JN$ with respect to the
topology induced by the metric $d$.  In this sense, most densities
in $\JN$ fail to be $X^*$-representable.  
The demonstration of this involves potentials which oscillate on extremely
short wavelengths, so this is again a short-distance issue.  However,
one should note that only low intrinsic-energy densities are involved
(at least explicitly).  

It might be suggested that absence of V-representability
over all of $\dom F$ is disquieting, but far from a disaster.
Ultimately, we are only interested in the V-representable
densities, whichever ones those might turn out to be.
The problem becomes more acute in Kohn-Sham theory.
The Kohn-Sham decomposition of the intrinsic energy is 
\beq
F[\rho] = T_s[\rho]+E_H[\rho]+E_{xc}[\rho].
\label{KS-decomp}
\eeq
Here, $T_s[\rho]$ denotes the intrinsic energy for a non-interacting system 
of quasi-density $\rho$.  Note that this has the same domain as does $F$.
The Hartree energy $E_H[\rho]$ is simply the classical Coulomb energy of 
the charge distribution $-e \rho$.  The exchange-correlation energy 
$E_{xc}[\rho]$ is then defined by the equation.
The point of this decomposition is that it facilitates a sort of
self-consistent field approach.  One solves a problem for non-interacting
particles with a potential which is the sum of external, Hartree and
exchange-correlation potentials, self-consistency being imposed on the
latter two.  This has proved a very successful strategy for practical computations.
If $\rho$ is a ground-state density of an interacting system in
external potential $v[\rho]$, then it should be a ground state density of the
non-interacting system in external potential
\beq
v_s[\rho] = v[\rho] + \phi[\rho] + v_{xc}[\rho],
\label{KS-potls}
\eeq
with $\phi[\rho] = \delta E_H/\delta \rho$ and $v_{xc}[\rho] = \delta E_{xc}/\delta \rho$
(note the sign convention).
This raises a couple of problems.  What if $\rho$ is (interacting)
V-representable but not non-interacting V-representable?
Then, evidently, the solution does not exist.
Consequently, it is very important to the Kohn-Sham enterprise that
the sets of V-representable, or maybe we should say $X^*$-representable,
densities for the interacting and non-interacting systems coincide.
At a density in the intersection, Eq. (\ref{KS-potls}) can be used to define
$v_{xc}[\rho]$.  Even then, however, the existence of a functional derivative 
of $E_{xc}$ and its coincidence with $v_{xc}$ is not assured.  
Subdifferentiability of $F$ and $T_s$ is not strong enough for that.

Since it is the bad behaviors of the fine-grained theory which interest us,
let us summarize them.  $F$ is not continuous in $X$-norm, 
{\em even restricted to its effective domain}.  There are densities
that are not V-representable, and many more that are not $X^*$-representable.
This puts two blocks in the way of well-definedness of a functional
derivative of $F$.  First the non-universality of subdifferentiability,
and second the lack of a demonstration that directional derivatives
coincide with a subgradient.  All of these seem at least partially 
connected to abnormal occurences at asymptotically short distance scales.
This motivates the coarse-grained approach, wherein the short-distance
scale degrees of freedom are allowed to relax energetically.  

\section{Coarse-Graining: Notation and Background}
\label{CG}

In this section, we turn our attention to the coarse-grained model.
Much of it will formally resemble the fine-grained theory
outlined in the previous section.  
{\em The term `density' is intended in the coarse-grained sense
unless otherwise noted}.

As explained in the Introduction, a (coarse-grained) {\it quasi-density} is 
a collection of number observables $\rho({\bm R}) \mathrm{vol}({\bm R})$,
one for each cell ${\bm R}$ of a partition $\Part$ of ${\Bbb R}^3$, 
satisfying a summability condition (see below).  
For purposes of working in a vector space, the ``number observables''
are allowed to have any real value.
Thus, a quasi-density is identified with an equivalence class of elements 
of $X$, where two elements of $X$ are equivalent if they have equal
integrals over every cell.  Ultimately, the only members 
of such an equivalence class which are essential are those
which belong to $\JN$, if there are any.
Still, a useful and easily visualized surrogate for a quasi-density is the 
everywhere-defined function which is uniform over each cell 
${\bm R}$ (${\mathfrak P}$-measurable), and equal to 
$\rho({\bm R})$.  This is a special element of the equivalence
class associated with $\rho$ as we have defined it, but it far 
too discontinuous to be in $\JN$.
Some variation in the sizes and shapes of the cells of $\Part$ is permissible, 
but it must be limited (see Ref. \cite{Lammert06}).  The partition generated 
by any sort of regular lattice is acceptable, and for simplicity, we keep 
a cubical lattice in mind.  

For a real-valued function $f$ on $\Part$, define the $L^1$ norm 
$\|f \|_{1} = \sum_{{\bm R}\in{\Part}} |f({\bm R})| \mathrm{vol}({\bm R})
 = \int |{f}({\bm r})| \, d{\bm r}$.
The Banach space of such functions with finite norm is denoted $\X$.
This is the space of {\it quasi-densities}.
Certain subsets of ${\X}$ are given names.
The set of everywhere non-negative elements, that is, the
(proper) densities, is denoted $\XP$; that of those everywhere strictly 
greater than zero is denoted $\XPP$.
Subsets of elements which integrate to a particular value are indicated with 
a subscript, e.g., ${\XNP}$ denotes the properly normalized densities,
and legitimate density perturbations are located in ${\X}_0$.
It is the nature of the normalization condition in particular which suggests
the $L^1$ norm.  Although it seems fairly natural, one might question its
appropriateness.  Most of the results actually depend only on the topology
associated with this norm, and one should note that it is the weakest
topology making the number-in-cell observables and the total number
continuous.  More discussion of this point can be found in Section \ref{continuum-ideas}.

The intrinsic energy is defined as in Eq. (\ref{F1}).  
Since the equivalence class in $X$ associated to $\rho \in {\X}$ 
intersects $\JN$ if and only if $\rho \in \XPP$,
the effective domain of $F$ is
\[
\dom F \defeq \{\rho\in\X : F[\rho] < +\infty\} = \XNP.
\]
The infimum in the definition of $F[\rho]$ is guaranteed to be attained\cite{Lammert06}, 
but there is no guarantee that there is only one {\em fine-grained} density which attains it.
In some rough sense, $\dom F$ here is a much bigger subset of $\X$ than 
$\JN$ is of $X$.
As in the fine-grained theory, $F$ is convex and lower semicontinuous\cite{Lammert06}. 
One thing coarse-graining has achieved is a bound for $F$:
\beq
0 \le F[\rho] \le N \Fmax, \quad \rho \in{\XNP}.
\label{F-bound}
\eeq
$N \Fmax$ is the energy required to pack all $N$ particles into a 
single cell (maximum such if the cells are not identical).

The main result needed from Ref. \cite{Lammert06} concerns {\em unique and universal} 
V-representability.  As discussed in the Introduction, potentials are constant on cells.  
${\X}^*$, the dual space of $\X$, consists of $\Part$-measurable functions which are
uniformly bounded.  But we must go outside that space to find many of the needed potentials.
The interesting set, that of $\Part$-measurable functions bounded below but not necessarily above,
is denoted ${\V}$. 

\begin{thm}[Coarse-grained Hohenberg-Kohn]
For $\rho \in {\X}^{++}$, the infimum in the definition of $F[\rho]$
is attained (at mixed state $\gamma$, say) and there is precisely one potential 
$v[\rho] \in {\V}$ with a ground state ($\gamma$) 
of density $\rho$.
\label{CG-HK-P}
\end{thm}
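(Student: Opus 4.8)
The plan is to establish the theorem in two parts: \emph{existence} of a representing potential, and \emph{uniqueness} (the coarse-grained Hohenberg–Kohn statement). I would first dispose of the attainment of the infimum in $F[\rho]$; this is cited as being proved in Ref.~\cite{Lammert06} and rests on the uniform bound \eqref{F-bound} together with weak-$*$/weak lower semicontinuity of the energy functional on mixed states and a compactness argument for minimizing sequences of density matrices with bounded kinetic energy. So I would take that as given and call $\gamma$ a minimizer, $\gamma \mapsto \rho$.

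For \emph{existence}, the key observation is that, unlike the fine-grained case, $F$ on $\X$ is globally bounded by \eqref{F-bound}, is convex, and is lower semicontinuous, and crucially its effective domain $\XNP$ has \emph{nonempty interior relative to} $\X_N$ — indeed $\XNPP$ is an open subset of $\X_N$ on which $F$ is finite and bounded. A convex function that is bounded above on a neighborhood of a point is continuous there and hence subdifferentiable there (this is the standard continuity $\Rightarrow$ local Lipschitz $\Rightarrow$ subdifferentiability chain in a Banach space). Therefore at any $\rho \in \XPP$ the subdifferential $\partial F[\rho]$ (taken relative to the hyperplane $\X_N$, then extended by Hahn–Banach to all of $\X$) is nonempty. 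Any element $v \in \partial F[\rho]$ satisfies, by the very definition of subgradient and the Legendre–Fenchel relation \eqref{inverse-LF-transform} transcribed to the coarse-grained setting, $F[\rho'] + \langle v,\rho'\rangle \ge F[\rho] + \langle v,\rho\rangle$ for all $\rho'$, i.e.\ $\rho$ minimizes \eqref{LF-transform} and is thus a ground-state density of $v$; and since $\gamma$ attains $F[\rho]$, $\gamma$ is a ground state of $T+V_{ee}+v$. After adding a constant to enforce the gauge convention \eqref{gauge-convention} we obtain $v[\rho] \in \X^* \subset \V$ with ground state $\gamma$ of density $\rho$. I expect this step to go through cleanly; the only care needed is to check that the relative-interior/Hahn–Banach extension lands inside $\V$, which it does since a subgradient of a function bounded on a neighborhood is norm-bounded, hence in $\X^*$.

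For \emph{uniqueness} — the actual Hohenberg–Kohn content — I would argue as follows. Suppose $v_1, v_2 \in \V$ both have ground states of density $\rho$, with ground-state mixed states $\gamma_1, \gamma_2$ and ground-state energies $E_1, E_2$. Since the potentials are $\Part$-measurable, $\langle v_i, \rho'\rangle$ depends on $\rho'$ only through its coarse-grained (cell-averaged) data, so for \emph{any} density $\rho'$ and any state $\gamma' \mapsto \rho'$ we have $\Tr \gamma'(T+V_{ee}+v_i) = \Tr\gamma'(T+V_{ee}) + \langle v_i,\rho'\rangle$. Apply the Rayleigh–Ritz variational inequality with trial state $\gamma_2$ for Hamiltonian $H_1 = T+V_{ee}+v_1$ and vice versa:
\begin{align}
E_1 &\le \Tr\gamma_2(T+V_{ee}) + \langle v_1,\rho\rangle, \nonumber\\
E_2 &\le \Tr\gamma_1(T+V_{ee}) + \langle v_2,\rho\rangle. \nonumber
\end{align}
Adding, and using $E_i = \Tr\gamma_i(T+V_{ee}) + \langle v_i,\rho\rangle$ (since $\gamma_i$ is a ground state of $H_i$ of density $\rho$), gives $E_1 + E_2 \le E_1 + E_2$, so both inequalities are equalities: $\gamma_2$ is \emph{also} a ground state of $H_1$, and $\gamma_1$ of $H_2$. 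Thus $H_1\gamma_2 = E_1\gamma_2$ and $H_2\gamma_2 = E_2\gamma_2$ on the support of $\gamma_2$, whence $(v_1 - v_2)\gamma_2 = (E_1 - E_2)\gamma_2$, i.e.\ $v_1 - v_2$ equals the constant $E_1 - E_2$ on the support (in position space) of the ground state. The remaining task — the genuine obstacle — is a \textbf{unique continuation}-type argument showing this support has full measure, or else directly that a $\Part$-measurable multiplication operator annihilating a finite-kinetic-energy many-body wavefunction on a positive-measure set must be constant. Here I would invoke that $\rho \in \XPP$ is \emph{everywhere strictly positive cell-wise}, so $\gamma$ puts nonzero density in every cell; combined with the standard strong-unique-continuation property for solutions of Schrödinger equations with the assumed (no-worse-than-Coulomb) potentials, the ground-state wavefunction cannot vanish on a whole cell, forcing $v_1 - v_2$ to be the \emph{same} constant on every cell. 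Subtracting that constant, $v_1 = v_2$ in $\V$. This is precisely the point where one leans on results established in Ref.~\cite{Lammert06} (the coarse-grained HK theorem is quoted as proved there), so in the write-up I would cite that reference for the unique-continuation input while presenting the Rayleigh–Ritz/equalized-inequalities skeleton above in full.
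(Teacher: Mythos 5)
The paper itself offers no argument here---its ``proof'' is a citation to Ref.~\cite{Lammert06}---so your reconstruction has to stand on its own. Your uniqueness half does: the Rayleigh--Ritz exchange of trial states, the conclusion that each $\gamma_i$ is a ground state of both Hamiltonians, and the reduction to a unique-continuation statement forcing $v_1-v_2$ to be a single constant across all cells is the standard Hohenberg--Kohn skeleton, correctly adapted to $\Part$-measurable potentials, and deferring the unique-continuation input to the reference is reasonable.

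The existence half, however, contains a genuine gap. Your argument rests on the claim that $\XNPP$ is open in $\X_N$, so that $F$ is bounded above on a neighborhood of $\rho$ and hence continuous and subdifferentiable there. This is false in the $L^1$ topology: since $\rho$ is integrable over infinitely many cells, $\rho({\bm R})\to 0$ along any sequence of cells going to infinity, so every $L^1$-ball around $\rho$ in $\X_N$ contains quasi-densities that are negative on some distant cell, where $F=+\infty$. Thus $\dom F = \XNP$ has empty interior relative to $\X_N$, exactly as in the fine-grained theory, and the ``bounded above on a neighborhood $\Rightarrow$ subdifferentiable'' chain never starts. The paper says as much in Section \ref{single-scale-ideas}: cases (b) and (c) of Fig.~\ref{F-on-slices} still occur in the coarse-grained model for directions that exit $\XNP$ immediately, which is possible only because $\rho$ is not an interior point of the domain. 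A second symptom that the argument proves too much is its conclusion that $v[\rho]\in\X^*$, i.e.\ bounded: the theorem places $v[\rho]$ only in $\V$ (bounded below, generally unbounded above), and the paper explicitly warns that one ``must go outside that space [$\X^*$] to find many of the needed potentials''---a rapidly decaying $\rho$ requires a confining, hence unbounded, representing potential. The universal $\V$-representability asserted here is in fact the hard content of Ref.~\cite{Lammert06} and cannot be recovered from generic convex-analysis subdifferentiability; it requires a construction tailored to the cell structure (e.g.\ a cell-by-cell Lagrange-multiplier/penalization argument controlling the potential from below via the bound (\ref{F-bound})).
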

\begin{proof}
See Ref. \cite{Lammert06}.
\end{proof}
The constant offset in the potential is fixed here by the 
convention of Eq. (\ref{gauge-convention}):
\beq
\langle T + V_{ee} + v \rangle_\gamma = E[v[\rho]] = 0.  
\label{offset-convention}
\eeq
This has the effect that all external potentials associated 
with densities are bounded below by $-\Fmax$ (see Eq. \ref{F-bound}).

\section{Single-Scale Model}
\label{single-scale}

In this section, the new results on the single-scale coarse-grained
theory are studied.  
In Section \ref{single-scale-ideas}, the Theorems are stated,
and discussed with some indication of the methods of proof.
Detailed proofs are given in Section \ref{ss-proofs}.

\subsection{ideas}
\label{single-scale-ideas}

It is cumbersome to have to maintain strict normalization
of densities and states at all times, so we will work with a slight 
modification of $F$. $\Fhat[\rho]$ is defined as in Eq. (\ref{F1}),
but with the normalization restriction on states lifted.  
Thus, $\Fhat$ agrees with $F$ on $\XNP$ (which is the physically 
important set).  But, for $\rho\in\XP$, it scales linearly with the normalization, 
$\Fhat[\lambda\rho] = \lambda \Fhat[\rho]$, and is $+\infty$ elsewhere.  
Note this maintains convexity and lower semicontinuity.  
The theorems are stated in terms of $\Fhat$, but their translations into
terms of $F$ are easy.

As seen already, coarse-graining renders $F$ bounded in a fairly trivial way.
It is actually even continous with respect to the $L^1$ topology on its 
domain, ${\X}^{+}$.  
\begin{thm}
\label{Continuity-Thm}
$\Fhat$ is continuous on ${\X}^{+}$ with respect to the $L^1$ topology.
\end{thm}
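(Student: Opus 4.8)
The plan is to exploit convexity together with the uniform bound $\Fhat[\rho]\le N\Fmax$ on normalized densities, which is exactly the feature coarse-graining buys us. Recall that a convex function on a Banach space which is bounded above on a neighborhood of a point is automatically continuous there; the obstruction in the fine-grained theory was precisely that $F$ is \emph{un}bounded on every $X$-neighborhood of every point of $\JN$. Here that obstruction is gone on the normalized slice, and the homogeneity $\Fhat[\lambda\rho]=\lambda\Fhat[\rho]$ should let us transfer the bound off the slice $\XNP$ to a genuine $\X$-neighborhood inside $\XP$.

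Concretely, I would proceed as follows. Fix $\rho_0\in\XP$, and first treat the generic case $\rho_0\in\XPP$ with $\|\rho_0\|_1=N>0$ (the case $\rho_0=0$ is handled separately and easily, since $\Fhat\ge 0$ and $\Fhat[\rho]\le \|\rho\|_1\,\Fmax$ by homogeneity, giving continuity at the origin directly). For $\rho$ near $\rho_0$ in $L^1$, write $M=\|\rho\|_1$, which is close to $N$ and positive, and note $\rho/M$ is a normalized density, so $\Fhat[\rho]=M\,\Fhat[\rho/M]\le M\Fmax$ by Eq.~(\ref{F-bound}) and homogeneity. Hence $\Fhat$ is bounded above by roughly $(N+1)\Fmax$ on a whole $L^1$-ball around $\rho_0$ that stays inside $\XP$ (using that $\rho_0$ is strictly positive on each relevant cell, small $L^1$ perturbations... — actually one must be a little careful: $L^1$-small perturbations can leave $\XP$, so the right statement is that $\Fhat$ restricted to the convex set $\XP$ is bounded above near $\rho_0$, and one invokes the convex-analysis continuity criterion \emph{relative to} the affine/convex structure of $\XP$, or better, works on the slice $\XNP$ where normalization is automatic). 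Convexity of $\Fhat$ plus this local upper bound then yields local Lipschitz continuity, hence continuity, at $\rho_0$; for points of $\XP\setminus\XPP$ on the boundary (some cell occupancy zero) one uses lower semicontinuity of $\Fhat$ together with the upper bound to get continuity there as well, or reduces to the interior case by the homogeneity/convexity combination.

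The main obstacle I anticipate is the interplay between two different "neighborhood" notions: the convex-function continuity theorem wants an upper bound on a neighborhood in the ambient space $\X$, but the natural upper bound $\Fhat[\rho]\le\|\rho\|_1\Fmax$ only holds on the positive cone $\XP$, which has empty interior in $\X$. The clean way around this is to run the argument on the closed affine hyperplane-type slice, i.e.\ to prove continuity of $\Fhat\!\restriction\!\XNP$ using that $\XNP$ is a closed convex set on which $\Fhat$ is bounded by $N\Fmax$, apply the relative-interior version of the convex continuity theorem, and then glue the slices together via $\Fhat[\lambda\rho]=\lambda\Fhat[\rho]$ to get continuity on all of $\XP$. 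So the real content is: (i) the a priori bound (\ref{F-bound}), (ii) convexity and lower semicontinuity of $\Fhat$, (iii) the standard fact that a convex, locally-bounded-above function is locally Lipschitz, applied relative to the cone/slice, and (iv) a short homogeneity argument to pass between slices and to handle the origin. None of the individual pieces is deep; the care is entirely in the relative-interior bookkeeping, so that is where I would spend the detailed proof's effort.
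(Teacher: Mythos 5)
Your proposal takes a genuinely different route from the paper's, but it has a real gap that I don't think can be patched along the lines you sketch.

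The paper does not argue by abstract convexity at all. It proves upper semicontinuity by going behind the densities to the states: Lemma~\ref{Deformation-Lemma} uses the unique continuation principle to deform a ground-state wavefunction so as to vary the cell averages over a \emph{bounded} region $\Sigma$ smoothly, exploiting the fact that $\iota_n\X^n$ restricted to $\Sigma$ is finite-dimensional; separately, a smooth truncation $\Phi_L$ chops off the state's tail at arbitrarily small energy cost, and the lost density is re-grafted using the bound (\ref{F-bound}). The convexity and boundedness of $\Fhat$ are used, but only as auxiliary inputs; the real content is the deformation lemma and the tail truncation.

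Your argument instead invokes a "relative-interior version of the convex continuity theorem." The problem is that no such theorem holds in infinite dimensions, and this is not merely a bookkeeping issue. In a Banach space, a convex, lsc function bounded above on a convex set $C$ need not be continuous on $C$ unless $C$ has nonempty \emph{norm} interior. The set $\XNP$ has empty interior relative to the hyperplane $\X_N$: for any $\rho_0\in\XNPP$ and any $\epsilon>0$, there are $\delta\rho\in\X_0$ with $\|\delta\rho\|_1<\epsilon$ and $\rho_0+\delta\rho\notin\XP$ (take $\delta\rho$ negative in a distant cell where $\rho_0$ is tiny). So no $\rho_0$ lies in the interior of $\XNP$, and its "relative interior" in the finite-dimensional sense is not what you get topologically; the set of topologically-interior points is empty. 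Without an open neighborhood on which $\Fhat$ is bounded, the standard chain (bounded above $\Rightarrow$ locally Lipschitz $\Rightarrow$ continuous) does not start.

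You can see the breakdown concretely by trying to carry out the argument. Given $\rho_0\in\XNPP$ and nearby $\rho'\in\XNP$ with $\|\rho'-\rho_0\|_1<\epsilon$, the convexity route to an \emph{upper} bound on $\Fhat[\rho']$ (which is what's needed, since lsc is already known) is to write $\rho'=(1-s)\rho_0 + s\rho'''$ with $\rho'''\in\XNP$ and $s$ small and then use the $N\Fmax$ bound on $\Fhat[\rho''']$. But nonnegativity of $\rho'''$ forces $\rho'\ge(1-s)\rho_0$ pointwise, a condition on \emph{ratios} that is not at all controlled by $\|\rho'-\rho_0\|_1$. The $L^1$ ball around $\rho_0$ inside $\XNP$ contains densities that drop almost to zero in cells where $\rho_0$ is small, and no fixed $s$ works for all of them. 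This is exactly the difficulty the paper's two-step construction (local deformation on a large but finite region, plus tail truncation) is designed to get around. So the boundedness bound (\ref{F-bound}) and convexity alone are not enough; the physics input (unique continuation, mixed states, and the ability to graft density back on) appears to be essential.
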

Since $F[\rho]$ is already known to be lower semicontinuous, proving this
requires showing only upper semicontinuity: $F$ does not exceed 
$F[\rho] + \epsilon$ in some neighborhood of $\rho$. 
The proof involves showing that wavefunctions
can be deformed to produce nearby densities without costing too much energy.
Two aspects of the coarse-grained situation make this possible.
First, imperfections are hidden.  We have to get only the cell averages right.
Secondly, $\X$ is, so to speak, locally finite dimensional.  Thus, to show
that any required small modification can be made in a bounded region $\Sigma$
only requires showing that a finite number of directions of modification
can be handled.  If we had an infinite number of dimensions, as in the fine-grained
theory, the margin for change in successive directions can shrink to zero.
In fact, there are infinitely many dimensions --- outside $\Sigma$.
These are handled in a completely different way: the tail of the state
can be chopped off, at arbitrarily small energy cost far enough out.
Then, using the fact that we have mixed states to work with, a new
tail can be grafted on with the energy cost bounded by Eq. (\ref{F-bound}).
(For arbitrarily normalized $\gamma$, that bound extends to 
$\langle T + V_{ee}\rangle_\gamma \le \Fmax \|\rho\|_1$, where $\gamma \mapsto \rho$.)
This is the only theorem which requires getting behind the densities and working
with states, but it is a key ingredient in Theorem \ref{Diff-Thm}.

Theorem \ref{CG-HK-P} showed that any $\rho \in {\X}_N^{++}$ is V-representable.  
Naively, we expect $-v[\rho]$ to coincide with the functional derivative 
$\delta F/\delta \rho$.  If $-v[\rho]$ is in ${\X}^*$ (bounded), it is a 
subgradient of $F$ at $\rho$.  But as discussed in Section \ref{fine-grained}, 
it is unclear to what extent directional derivatives $F[\rho;\delta\rho]$
agree with it even in that case.  
Certainly, it is impossible to put {\em all} directional derivatives together 
into a {\em linear} functional, so that classical G\^ateaux differentiability 
is out of the question, even if $v[\rho] \in {\X}^*$.
Situations (b) and (c) of Fig. \ref{F-on-slices} still arise for directions
$\delta\rho$ which cause immediate exit from $\XNP$.
But, since those are the only directions which cause that problem, 
it is from the beginning less severe than for the fine-grained interpretation.
Fortunately, directions $\delta\rho$ which lead immediately out of ${\X}_N^+$
do not seem to hold any physical interest.
It is certainly satisfactory if
$\langle -v[\rho],\delta\rho \rangle  = F^\prime[\rho;\delta\rho]$ for
$\delta\rho$ satisfying $\rho + s\, \delta\rho \in \XNP$ for some $s > 0$.  
That this set of directions really is coextensive with
\[
{\dom} F^\prime[\rho;\cdot\,] \defeq \{ \delta\rho \in {\X}: \, F^\prime[\rho;\delta\rho] < +\infty \},
\]
follows from
\[
F[\rho + s\, \delta\rho] \ge F[\rho] + sF^\prime[\rho;\delta\rho],
\]
which is a consequence of convexity.
It is not ruled out that $F^\prime[\rho;\delta\rho] = -\infty$ 
for some $\delta\rho \in \dom F^\prime[\rho;\cdot\,]$.
However, this could happen only if $\rho + s\, \delta\rho$
falls outside $\XNP$ for all $s < 0$, because
$F^\prime[\rho; -\delta\rho] \ge -F^\prime[\rho; \delta\rho]$,
which is another simple consequence of convexity.

\begin{thm}
\label{Diff-Thm}
Suppose $\rho \in {\mathcal X}^{++}$ is represented by the potential $v$.
For $\delta \rho \in {\X}$, either
\renewcommand{\theenumi}{\alph{enumi}}
\begin{enumerate}
\item there is no $s>0$ for which $\rho+s\delta \rho \in {\X}^+$,
in which case $\Fhat^\prime[\rho;\delta \rho] = +\infty$, or
\item
$-\infty \le \Fhat^\prime[\rho;\delta \rho] < +\infty$ and 
$\Fhat^\prime[\rho;\delta \rho] = -\int v \, \delta \rho \, d{\bm x}.$
\end{enumerate}
\end{thm}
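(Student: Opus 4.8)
Here is how I would approach the proof.

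The plan is to settle case (a) at once, extract the inequality $\Fhat'[\rho;\delta\rho]\ge-\int v\,\delta\rho$ cheaply, and then devote the real work to the reverse inequality. For case (a): if $\rho+s\delta\rho\notin\X^{+}$ for every $s>0$, then $\Fhat[\rho+s\delta\rho]=+\infty$ for all such $s$ (since $\dom\Fhat=\X^{+}$), so every difference quotient is $+\infty$ and $\Fhat'[\rho;\delta\rho]=+\infty$. In case (b), fix $s_{0}>0$ with $\rho+s_{0}\delta\rho\in\X^{+}$; convexity of $\X^{+}$ puts $\rho+s\delta\rho$ in $\X^{+}$ for $0\le s\le s_{0}$, so the difference quotient at $s_{0}$ shows $\Fhat'[\rho;\delta\rho]<+\infty$. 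From $\rho+s_{0}\delta\rho\ge 0$ one gets $\delta\rho^{-}\le\rho/s_{0}$ on $\{\delta\rho<0\}$, and because $v\ge-\Fmax$ together with $\langle v,\rho\rangle=-\Fhat[\rho]$ (the gauge convention) gives $\langle|v|,\rho\rangle<\infty$, the pairing $\int v\,\delta\rho$ is well defined in $(-\infty,+\infty]$. The gauge convention also gives $\Fhat[\sigma]+\langle v,\sigma\rangle\ge E[v]=0=\Fhat[\rho]+\langle v,\rho\rangle$ for all $\sigma\in\X^{+}$; with $\sigma=\rho+s\delta\rho$ this rearranges to $\tfrac1s(\Fhat[\rho+s\delta\rho]-\Fhat[\rho])\ge-\int v\,\delta\rho$, hence $\Fhat'[\rho;\delta\rho]\ge-\int v\,\delta\rho$.

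For the reverse inequality I would first reduce to the case of $\delta\rho$ supported on finitely many cells. Let $\Sigma_{1}\subset\Sigma_{2}\subset\cdots$ be finite cell-sets exhausting $\mathbb{R}^{3}$ and let $\delta\rho_{k}$ be the restriction of $\delta\rho$ to $\Sigma_{k}$. Then $\rho+s_{0}\delta\rho_{k}$ and $\rho+s_{0}(\delta\rho-\delta\rho_{k})$ both lie in $\X^{+}$, and since $\delta\rho\mapsto\Fhat'[\rho;\delta\rho]$ is sublinear (from convexity of $\Fhat$) we have $\Fhat'[\rho;\delta\rho]\le\Fhat'[\rho;\delta\rho_{k}]+\Fhat'[\rho;\delta\rho-\delta\rho_{k}]$, with no summand $+\infty$. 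The second summand is at most $\tfrac1{s_{0}}\big(\Fhat[\rho+s_{0}(\delta\rho-\delta\rho_{k})]-\Fhat[\rho]\big)$, which tends to $0$ because $\delta\rho-\delta\rho_{k}\to0$ in $L^{1}$ and $\Fhat$ is $L^{1}$-continuous on $\X^{+}$ (Theorem~\ref{Continuity-Thm}); and $\int v\,\delta\rho_{k}\to\int v\,\delta\rho$ by monotone convergence on $\delta\rho^{+}$ and dominated convergence on $\delta\rho^{-}$. So, once the bounded-support case is known, letting $k\to\infty$ gives $\Fhat'[\rho;\delta\rho]\le-\int v\,\delta\rho$.

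For $\delta\rho$ supported on a finite cell-set $\Sigma$, the goal is to show that $\Fhat$ restricted to the finite-dimensional slice $A_{\Sigma}$ of densities agreeing with $\rho$ off $\Sigma$ is G\^ateaux-differentiable at $\rho$ with derivative $-v|_{\Sigma}$. Since $\rho\in\X^{++}$ lies in the relative interior of $A_{\Sigma}\cap\X^{+}$ and $\Fhat$ is finite and convex there, its subdifferential at $\rho$ is nonempty with support function $\Fhat'[\rho;\,\cdot\,]$; thus the claim is that this subdifferential is the singleton $\{-v|_{\Sigma}\}$. The structural input is that the homogeneity $\Fhat[\lambda\rho]=\lambda\Fhat[\rho]$ makes $\Fhat$ sublinear on all of $\X$, which forces any generalized subgradient of $\Fhat$ at $\rho$ --- a linear functional $w$ with $w(\sigma-\rho)\le\Fhat[\sigma]-\Fhat[\rho]$ for $\sigma\in\X^{+}$ --- to satisfy $w(\rho)=\Fhat[\rho]$ (test $\sigma=2\rho,\ \tfrac12\rho$) and, cell by cell, $w\le\Fmax$ (test $\sigma=\rho+t\mathbf{1}_{\bm R}$, $t\to\infty$, using $\Fhat[\sigma]\le\Fmax\|\sigma\|_{1}$); then $-w\in\V$ has $\rho$ as a ground-state density, so by coarse-grained Hohenberg--Kohn (Theorem~\ref{CG-HK-P}) it differs from $v$ by a constant, and $w(\rho)=\Fhat[\rho]=-\langle v,\rho\rangle$ pins the constant to zero, so $w=-v$. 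Finally, any subdifferential element $w_{0}$ of $\Fhat|_{A_{\Sigma}}$ at $\rho$ obeys $w_{0}\le\Fhat'[\rho;\,\cdot\,]$ on $A_{\Sigma}-\rho$ and, by a chain of finite-dimensional Hahn--Banach extensions along the slices $A_{\Sigma_{k}}$ together with the continuity of $\Fhat$, extends to such a generalized subgradient; hence $w_{0}=-v|_{\Sigma}$, which finishes the bounded-support case and, with the reduction above, the theorem.

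The hard part is this last step. Because the representing potential $v$ need not be bounded above, the ordinary subdifferential of $\Fhat$ in $\X^{*}$ can be empty, so one cannot simply quote ``directional derivative $=$ support function of the subdifferential'' in $\X$; the argument must be routed through finite-dimensional slices, the generalized (possibly discontinuous) subdifferential, and the positive homogeneity of $\Fhat$, with Theorem~\ref{Continuity-Thm} doing double duty --- controlling the tail in the reduction step and upgrading slice inequalities to all of $\X^{+}$. The remaining labor is bookkeeping of the various $\pm\infty$ that occur in the pairings $\int v\,\delta\rho$ and in the extended functionals.
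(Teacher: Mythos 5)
Your proposal follows essentially the same route as the paper: case (a) dispatched immediately, the lower bound $\Fhat'[\rho;\delta\rho]\ge-\int v\,\delta\rho$ from the variational principle, a reduction to bounded-support perturbations using $L^1$-continuity of $\Fhat$ (Theorem~\ref{Continuity-Thm}), and the core finite-support step via Hahn--Banach extension combined with the coarse-grained Hohenberg--Kohn theorem and the bound $\Fhat\le\Fmax\|\cdot\|_1$ to force the extension to be trivial. This is exactly the content of the paper's Lemma~\ref{Baby-Diff} plus the proof of Theorem~\ref{Diff-Thm}. The two differences worth noting are organizational. First, in the reduction to bounded support you invoke sublinearity of $\delta\rho\mapsto\Fhat'[\rho;\delta\rho]$ to write $\Fhat'[\rho;\delta\rho]\le\Fhat'[\rho;\delta\rho_k]+\Fhat'[\rho;\delta\rho-\delta\rho_k]$ and then let the second term die via a fixed difference quotient; the paper instead splits the difference $\Fhat[\rho+s\delta\rho]-\Fhat[\rho]$ itself and uses a convex-combination identity plus neighborhoods $B_\delta(\rho)$. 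These are equivalent in spirit, and yours is arguably tidier. Second, in the bounded-support step you recast the argument as ``the subdifferential of $\Fhat|_{A_\Sigma}$ at $\rho$ is the singleton $\{-v|_\Sigma\}$'' and exploit the positive homogeneity of $\Fhat$ (hence sublinearity on all of $\X$) to pin down any generalized subgradient, rather than arguing by contradiction that $\Fhat'[\rho;\cdot]+\langle v,\cdot\rangle$ must vanish on $\mathcal K$ as the paper does. One loose end in your version: you pin the additive constant via $w(\rho)=\Fhat[\rho]$ by testing $\sigma=2\rho,\tfrac12\rho$, but $\rho$ has unbounded support, so the Hahn--Banach extension to $\mathcal K$ does not automatically furnish a well-defined $w(\rho)$; you would need to extend to $\mathcal K+\mathbb R\rho$ (noting $\Fhat'[\rho;\pm\rho]=\pm\Fhat[\rho]$ is linear there) or argue through $\int w_*\rho$ directly. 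The paper's corresponding step (deducing $w\equiv 0$ from Hohenberg--Kohn) is itself quite terse on the constant-offset point, so this is not a defect relative to the paper's standard, but it is the one place where the chain of extensions needs a word of care.
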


To paraphrase, if $\rho \in {\XNPP}$ is represented by 
the potential $v$, then for any $\delta \rho \in {\X}$ we have the 
following dichotomy:
Either $\rho+s\, \delta\rho \notin {\XNP}$ for any $s>0$, so that
$\delta\rho$ is simply not in $\dom F^\prime[\rho;\cdot]$,
and $F^\prime[\rho;\delta \rho] = +\infty$, 
or {$F^\prime[\rho;\delta \rho] = -\int v \, \delta \rho \, d{\bm r}$}.

There are domain issues which should be discussed before
sketching the method of proof.  
In the fine-grained case, only
potentials in $X^*$ could be handled systematically, and these
are bounded by nature: 
$\left| \int -v \, \delta\rho \, d{\bm x} \right| \le \|v\|_{X^*} \|\delta\rho\|_X$.
Now, $v[\rho]$ lies in ${\V}$, and it is not immediately
obvious when $\int -v[\rho] \, \delta\rho \, d{\bm x}$ is well-defined, since
both the positive and negative parts can be infinite for some $\delta\rho\in \X$.  
But in fact, $\int v[\rho] \cdot \, d{\bm x}$ is unambiguous on all
of $\dom F^\prime[\rho;\cdot\,]$, as the following argument shows.  
Certainly, $v[\rho]\rho$ is integrable.
Shift $v[\rho]$ by a (finite) constant to make it positive and
split $\delta\rho$ into positive and negative parts as
$\delta\rho = (\delta \rho)^+ - (\delta \rho)^-$,
so that
$-v[\rho]\, \delta \rho  = -v[\rho](\delta \rho)^+  + v[\rho](\delta \rho)^-$.  
If $\delta \rho \in {\dom} F^\prime[\rho;\cdot\,]$, then $(\delta \rho)^-/\rho$ is bounded,
and since $v[\rho] \rho$ is integrable, $v[\rho](\delta \rho)^-$ is also.
Thus, for such $\delta \rho$, $\int -v[\rho]\, \delta \rho \, d{\bm r}$ is either 
real, or $-\infty$.  The second possibility is not particularly exotic if $v[\rho]$
diverges as $r\to\infty$.

The argument used to prove the theorem is similar to the one
suggested in Section \ref{fine-grained} for G\^ateaux differentiability
of $F$ at V-representable densities and which was shown to fail in that
context.   
It follows from convexity of $F$ that $F^\prime[\rho;\cdot\,]$ 
satisfies $F^\prime[\rho;x+y]  \le F^\prime[\rho;x] + F^\prime[\rho;y]$,
and $F^\prime[\rho;\lambda x]  = \lambda F^\prime[\rho;x]$ for $\lambda \ge 0$.
That is, it is {\it sublinear}, but not necessarily continuous.
Since $\int v[\rho] \cdot \, d{\bm x}$ is finite and linear on
density perturbations with {\em bounded} support (call this set ${\mathcal K}$), 
the sum functional $F^\prime[\rho;\cdot\,] + \int v[\rho] \cdot \, d{\bm x}$ is also
sublinear on such perturbations.  Now, the variational principle guarantees that
this functional is non-negative, so the problem is to show that it is
exactly zero on $\dom F^\prime[\rho;\cdot\,]$.  If there is a direction
in ${\mathcal K}$ along which it is not, then on that one-dimensional
space, there is a nonzero linear functional dominated by
$F^\prime[\rho;\cdot\,] + \int v[\rho] \cdot \, d{\bm x}$.
One version of the Hahn-Banach theorem says that such a linear functional 
can be extended to a linear functional $w$ on the entire vector space ${\mathcal K}$
which is still dominated by $F^\prime[\rho;\cdot\,] + \int v[\rho] \cdot \, d{\bm x}$,
since the latter is sublinear.  
In the coarse-grained setting, $w$ is guaranteed
to have the same form as a potential (in the fine-grained case, distributions,
among other things, might arise as linear functionals), and that would make
$v+w$ another potential having $\rho$ as ground state density, violating
the Hohenberg-Kohn theorem.  The result is then extended to all of
$\dom F^\prime[\rho;\cdot\,]$ using Theorem \ref{Continuity-Thm} and
convexity of $F$.

Since all densities are ${\V}$-representable in the
coarse-grained model, a natural next question is whether $v[\rho]$
is a continous function of $\rho$.
This would say that, if $\|\rho^\prime - \rho\|_1$ is small enough,
$v[\rho^\prime]$ is `close' to $v[\rho]$, which would seem to require 
a topology on ${\V}$ to make `close' meaningful.
One topology to consider is the product topology, in which a
neighborhood of $v$ consists of all $v^\prime \in {\V}$
which are close to $v$ on a specified bounded region, but unconstrained
outside it, so that open sets are unions of sets of the form
\[
U(v,\Sigma,\epsilon) = \{v^\prime \in {\V}: 
|v^\prime({\bm x}) - v({\bm x})| < \epsilon, \; \forall {\bm x} \in \Sigma \},
\]
for bounded $\Sigma$.
This topology really is weak in the current context:
for example, if $v_n$ is zero for $|{\bm x}| < n$, but goes below
$-n$ and above $n$ somewhere outside that radius, it converges to
zero in the product topology.
Be that as it may, we can prove that $\rho \mapsto v[\rho]$ is
continuous from $\XNP$ with the $L^1$ topology (as always) to
${\V}$ with the product topology.
What needs to be shown is that the restriction $v[\rho]\rest_\Sigma$ 
of $v[\rho]$ to
a bounded region $\Sigma$, viewed simply as a vector in a finite-dimensional
Euclidean space, is continuous as a function of $\rho$.
The key is to view $F[\rho]$ as a family $F[\rho\rest_\Sigma; \rho\rest_{\Sigma^c}]$
of functions of the finite dimensional variable $\rho\rest_\Sigma$ 
parametrized by $\rho\rest_{\Sigma^c}$, where the superscript `$c$' indicates
a complement, i.e., $\Sigma^c = {\Bbb R}^3\setminus\Sigma$.  These are differentiable convex 
functions on a finite-dimensional space, continuous with respect to the
parameter $\rho\rest_{\Sigma^c}$, and the very nice properties of
convex functions on finite-dimensional spaces imply that the derivatives,
which are $v[\rho]\rest_{\Sigma}$, are continuous in both 
$\rho\rest_{\Sigma}$ and $\rho\rest_{\Sigma^c}$.

A stronger statement is available, but not, as might be expected, by 
using a stronger topology on ${\V}$.  
The crucial observation is that, although $v[\rho]$ may be unbounded, 
the product $v[\rho]\rho$ is always relatively tame.  
It is integrable, that is, it has finite $L^1$ norm.
\begin{thm} 
\label{V-Cont-Thm}
The map $\rho \mapsto v[\rho] \rho$ of ${\XPP}$ into $\X$ is continuous 
with respect to $L^1$-norm.
\end{thm}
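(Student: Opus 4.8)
The obstacle to estimating $v[\rho]\rho$ directly is that $v[\rho]$ carries no uniform bound in the tails, so control of the potentials on bounded sets says nothing about mass escaping to infinity or cancelling there. The plan is to sidestep this by passing to a \emph{nonnegative} surrogate. By the gauge convention (\ref{offset-convention}) and the bound (\ref{F-bound}), every representing potential satisfies $v[\rho]\ge -\Fmax$ (if $v[\rho]$ dipped below $-\Fmax$ on some cell, testing the infimum defining $E[v[\rho]]$ against a density of growing mass concentrated there would drive it to $-\infty$, contradicting $E[v[\rho]]=0$), so $g[\rho] \defeq (v[\rho]+\Fmax)\,\rho \ge 0$, with $\int g[\rho] = \langle v[\rho],\rho\rangle + \Fmax\|\rho\|_1 = -\Fhat[\rho] + \Fmax\|\rho\|_1 < \infty$. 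In particular $v[\rho]\rho = g[\rho] - \Fmax\rho \in \X$, so the statement is meaningful; and since $\rho\mapsto\rho$ is trivially $L^1$-continuous, it suffices to show $\rho\mapsto g[\rho]$ is $L^1$-continuous on $\XPP$.

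Given $\rho_n\to\rho$ in $\X$ with all terms in $\XPP$, I would first prove \emph{local} convergence: on any bounded region $\Sigma$ (a finite union of cells) the function $v[\rho]$ is bounded, and the product-topology continuity of $\rho\mapsto v[\rho]$ established above (its proof applies to $\Fhat$ on $\XPP$ without change) gives $v[\rho_n]\to v[\rho]$ uniformly on $\Sigma$; combining this with $\|\rho_n\|_1\to\|\rho\|_1$ and $\rho_n\to\rho$ in $L^1$ via the splitting $g[\rho_n]-g[\rho] = (v[\rho_n]-v[\rho])\rho_n + (v[\rho]+\Fmax)(\rho_n-\rho)$ yields $\|g[\rho_n]-g[\rho]\|_{L^1(\Sigma)}\to 0$. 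Then I would control the \emph{tails} by a Scheff\'e-type argument made possible by nonnegativity together with convergence of the total masses: Theorem \ref{Continuity-Thm} gives $\Fhat[\rho_n]\to\Fhat[\rho]$, hence $\int g[\rho_n]\to\int g[\rho]$, and so $\int_{\Sigma^c} g[\rho_n] = \int g[\rho_n] - \int_\Sigma g[\rho_n] \to \int_{\Sigma^c} g[\rho]$. Since $g[\rho]\in\X$, for given $\epsilon>0$ one fixes $\Sigma$ with $\int_{\Sigma^c}g[\rho]<\epsilon$; then $\|g[\rho_n]-g[\rho]\|_1 \le \|g[\rho_n]-g[\rho]\|_{L^1(\Sigma)} + \int_{\Sigma^c}g[\rho_n] + \int_{\Sigma^c}g[\rho]$ forces $\limsup_n\|g[\rho_n]-g[\rho]\|_1 \le 2\epsilon$. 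As $\epsilon$ was arbitrary, $g[\rho_n]\to g[\rho]$ in $\X$, whence $v[\rho_n]\rho_n = g[\rho_n]-\Fmax\rho_n \to g[\rho]-\Fmax\rho = v[\rho]\rho$ in $\X$.

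The step I expect to be the real point is not an estimate but the reformulation: once one recognizes that $(v[\rho]+\Fmax)\rho$ is a genuine nonnegative density whose integral is controlled by $\Fhat$, the continuity of $\Fhat$ (Theorem \ref{Continuity-Thm}) pins down the total mass, the already-established product-topology continuity of $v[\cdot]$ pins down the behaviour on each bounded window, and nonnegativity supplies the rest through the monotone-convergence mechanism --- so no new bound on the unbounded potentials themselves is required. The only technical loose ends to tie down are that the product-topology continuity of $v[\cdot]$ survives dropping the normalization constraint, and that $\Sigma$ can indeed be taken as a finite cell union on which $v[\rho]$ is bounded.
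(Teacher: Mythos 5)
Your proof is correct, and it uses exactly the same ingredients as the paper's --- the gauge convention $\langle v[\rho],\rho\rangle=-\Fhat[\rho]$, the continuity of $\Fhat$ (Theorem \ref{Continuity-Thm}), the product-topology continuity of $v[\cdot]$ (Lemma \ref{local-niceness}), and the uniform lower bound $v[\rho]\ge-\Fmax$ --- but packages them differently. The paper argues by contradiction: it assumes $\|v[\rho_m]\rho_m-v[\rho]\rho\|_1>\epsilon$ persistently, pushes the discrepancy into the tail $\Sigma^c$, splits the tail integral into positive and negative parts $I_m^{\pm}(\Sigma)$, shows their difference converges while their sum stays bounded away from zero, and then kills $I_m^-$ with the bound $v_m^-\le\Fmax$ and the tightness of $\{\rho_m\}$. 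Your nonnegative surrogate $g[\rho]=(v[\rho]+\Fmax)\rho$ absorbs that last step at the outset, and the positive/negative bookkeeping in the tails collapses into a clean Scheff\'e-type statement: local $L^1$ convergence plus convergence of total masses of nonnegative functions gives global $L^1$ convergence. That is a genuinely tidier organization of the same argument, and it is direct rather than by contradiction. The two technical loose ends you flag are indeed harmless: $v[\rho]$ is finite on each cell and $\Sigma$ is a finite union of cells, so boundedness on $\Sigma$ is automatic; and Lemma \ref{local-niceness} is already formulated for $\Fhat$ on $\XPP$ (the paper itself oscillates between $\XPP$ and $\XNPP$ in the statement and gloss of the theorem), so no renormalization issue arises.
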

Theorem \ref{V-Cont-Thm} says that the map 
$\rho \mapsto v[\rho] \rho$ of $\XNPP$ into $\X$ is continuous with 
respect to $L^1$ norm.
Technically, this is different than saying that $\rho \mapsto v[\rho]$
is continuous, but it has a similar import.
The theorem represents a strengthening of the earlier
result because, for a bounded region $\Sigma$,
\[
\int_\Sigma \left| (v[\rho^\prime] - v[\rho])\right| \rho^\prime \, d{\bm x}
 \le
\int_\Sigma \left| v[\rho^\prime]\rho^\prime - v[\rho]\rho \right| \, d{\bm x}
+ \int_\Sigma \left| v[\rho] (\rho^\prime - \rho) \right| \, d{\bm x}.
\]
For $\rho^\prime$ in a small enough neighborhood of $\rho$ (depending on
$\rho$ indirectly through $v[\rho]$) the second integral on the right-hand
side can be made small and $\rho^\prime$ will be bounded uniformly away
from zero on $\Sigma$, so that some multiple of
$\int_\Sigma \left| v[\rho^\prime]- v[\rho] \right| \, d{\bm x}$
is bounded by 
$\int_\Sigma \left| v[\rho^\prime]\rho^\prime - v[\rho]\rho \right| \, d{\bm x}$
plus a small correction.

The Proposition in Appendix \ref{bad-potentials} shows that 
this quasi-continuity of $v[\rho]$ does not extend to even 
the $X^*$-representable fine-grained densities in $\JN$.

\subsection{deferred proofs}
\label{ss-proofs}

The next lemma is preparation for proving Thm. \ref{Continuity-Thm}.

\begin{lem}
\label{Deformation-Lemma}
Suppose $\gamma_0 \mapsto \rho_0$ is a ground state of $T + V_{ee} + v$,
and let $\Sigma$ be a bounded region consisting of entire cells.  
Then, there exists a neighborhood $\Omega$ of $\rho_0\rest_\Sigma$,
and a family of states $\gamma[\rho\rest_\Sigma]$ indexed by 
$\rho\rest_\Sigma \in \Omega$ such that:
$\gamma[\rho_0\rest_\Sigma] = \gamma_0$,
the density of $\gamma[\rho\rest_\Sigma]$ is $\rho\rest_\Sigma$ on $\Sigma$
and $\rho_0\rest_{\Sigma^c}$ outside it,
and
$\Tr \{ (T+V_{ee})\gamma[\rho\rest_\Sigma] \}$ is an infinitely differentiable
function of $\rho\rest_\Sigma$.
\end{lem}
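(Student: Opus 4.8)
The plan is to reduce the lemma to a finite-dimensional statement and then settle it by an explicit multiplicative deformation of $\gamma_0$ together with the inverse function theorem. Since $\Sigma$ is a union of finitely many cells ${\bm R}_1,\dots,{\bm R}_k$, the datum $\rho\rest_\Sigma$ is just a point of the open positive orthant of ${\Bbb R}^k$, and $\rho_0\rest_\Sigma$ lies there because $\rho_0\in\XPP$. Write $\gamma_0=\sum_i\lambda_i|\psi_i\rangle\langle\psi_i|$; finiteness of $\Tr(T+V_{ee})\gamma_0=\Fhat[\rho_0]$ (the summands being nonnegative) puts every $\psi_i$ in $H^1({\Bbb R}^{3N})$, and each $\psi_i$ is a ground eigenfunction of the Schr\"odinger operator $T+V_{ee}+v$. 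I would fix, once and for all, smooth permutation-symmetric functions $\chi_1,\dots,\chi_k$ compactly supported in the interior of $\Sigma^N$, the set of configurations with all $N$ coordinates in $\Sigma$, and set
\[
\gamma[\rho\rest_\Sigma]\ \defeq\ \sum_i\lambda_i\,|\phi_i(t)\rangle\langle\phi_i(t)|,\qquad \phi_i(t):=\Big(1+\sum_{j=1}^k t_j\chi_j\Big)\psi_i ,
\]
with $t=(t_1,\dots,t_k)$ to be chosen below as a $C^\infty$ function $t(\rho\rest_\Sigma)$ that vanishes at $\rho_0\rest_\Sigma$. Three things are immediate: each $\phi_i(t)$ is again antisymmetric and in $H^1$; because every $\chi_j$ vanishes whenever one of its arguments leaves $\Sigma$, the prefactor equals $1$ on every configuration contributing to the density outside $\Sigma$, so that part of the density is unchanged and equal to $\rho_0\rest_{\Sigma^c}$; and at $t=0$ one recovers $\gamma_0$.

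The next step is the inverse function theorem. The cell averages over $\Sigma$ of the density of $\gamma[\rho\rest_\Sigma]$ are polynomials of degree $\le 2$ in $t$, and their Jacobian at $t=0$ is the $k\times k$ matrix $M_{lj}=\frac{2}{\mathrm{vol}({\bm R}_l)}\int_{\Sigma^N}\chi_j\,h_l\,\mu$, where $h_l(x_1,\dots,x_N)$ counts the coordinates lying in ${\bm R}_l$ and $\mu=\sum_i\lambda_i|\psi_i|^2$ is the diagonal of $\gamma_0$. Since each $\psi_i$ vanishes only on a null set (unique continuation for eigenfunctions of $T+V_{ee}+v$), $\mu>0$ almost everywhere, so $h_1\mu,\dots,h_k\mu$ are linearly independent on $\Sigma^N$: a relation $\sum_l c_l h_l\mu\equiv0$ forces $\sum_l c_l h_l\equiv0$ a.e.\ on $\Sigma^N$, hence $\sum_l c_l n_l=0$ for every occupation vector $(n_l)$ with $\sum_l n_l=N$, and taking $(N,0,\dots,0),\dots,(0,\dots,0,N)$ gives $c=0$. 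As symmetric functions in $C^\infty_c(\mathrm{int}\,\Sigma^N)$ are $L^2$-dense, the $\chi_j$ can be chosen so that $M$ is invertible. The inverse function theorem then delivers a neighborhood $\Omega$ of $\rho_0\rest_\Sigma$ and a $C^\infty$ map $t(\cdot)$ realizing the prescribed cell averages; shrinking $\Omega$ keeps $1+\sum_j t_j\chi_j$ bounded away from zero, so the densities on $\Sigma$ stay positive.

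For the smoothness of the energy I would just expand
\[
\Tr(T+V_{ee})\gamma[\rho\rest_\Sigma]=\sum_i\lambda_i\,\langle\phi_i(t)\,|\,T+V_{ee}\,|\,\phi_i(t)\rangle ,
\]
a polynomial of degree $\le 2$ in $t$ whose coefficients are the numbers $\langle\psi_i|T+V_{ee}|\psi_i\rangle$, $\langle\psi_i|T+V_{ee}|\chi_j\psi_i\rangle$ and $\langle\chi_j\psi_i|T+V_{ee}|\chi_{j'}\psi_i\rangle$. Each of these is finite: $T+V_{ee}$ is a bounded quadratic form on $H^1({\Bbb R}^{3N})$ — the kinetic part trivially, the interaction part because a Coulombic singularity is form-bounded relative to the Laplacian — and the vectors involved all lie in $H^1$ (for $\langle\psi_i|T+V_{ee}|\psi_i\rangle$ one may also simply invoke $\Fhat[\rho_0]<\infty$). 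So $\Tr(T+V_{ee})\gamma[\rho\rest_\Sigma]$ is polynomial, in fact real-analytic, in $t$, and therefore infinitely differentiable in $\rho\rest_\Sigma$ by the previous step. The point I expect to be genuinely delicate is the nondegeneracy of $M$ — making the $\chi_j$ so that the $k$ cell averages respond independently — which is exactly where the a.e.\ nonvanishing of the ground-state wavefunctions enters, together with $\rho_0\in\XPP$ (so that $\rho_0\rest_\Sigma$ is interior to the orthant and $\Omega$ exists); the inverse-function-theorem step, the automatic freezing of the exterior density, and the finiteness and polynomial nature of the energy are by comparison routine.
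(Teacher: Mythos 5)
Your construction is correct and shares the paper's two essential ingredients: a multiplicative deformation of the wavefunction supported on configurations lying entirely inside $\Sigma$ (so that the exterior density is automatically frozen), and unique continuation to guarantee the deformation actually has purchase on the density. The difference is in how the finite-dimensional parametrization by $\rho\rest_\Sigma$ is achieved. The paper supports its bump $g_{\bm R}$ on $U_{\bm R}$, the set of configurations with \emph{all $N$ particles in the single cell} ${\bm R}$, and multiplies $|\Psi|^2$ (not $\Psi$) by $1+s({\bm R})g_{\bm R}$ with the normalization $\int g_{\bm R}|\Psi|^2=1$; the density response to ${\bm s}$ is then exactly linear and diagonal over the cells of $\Sigma$, so the parametrization is read off by inspection and no inverse function theorem or Jacobian nondegeneracy is needed. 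Your choice of supports in all of $\Sigma^N$, together with applying the prefactor to $\psi$ rather than to $|\psi|^2$, makes the response quadratic and non-diagonal, which is exactly why you must prove $M$ invertible; that argument (linear independence of the $h_l\mu$ via unique continuation, plus surjectivity of $\chi\mapsto(\int\chi h_l\mu)_l$) is sound, though the duality you invoke should be between test functions and $L^1$, not $L^2$, since $h_l\mu$ is only integrable. Two smaller discrepancies are worth noting. First, the paper modifies only \emph{one} pure state in the canonical decomposition of a mixed $\gamma_0$, which sidesteps summation over an infinite mixture; since you deform every $\psi_i$, you should remark that the polynomial coefficients $\sum_i\lambda_i\langle\chi_j\psi_i|(T+V_{ee})|\chi_{j'}\psi_i\rangle$ converge, which does follow from form-boundedness of $T+V_{ee}$ on $H^1$ together with $\Tr(T+V_{ee})\gamma_0<\infty$ and $\sum_i\lambda_i\|\psi_i\|^2=1$, but is not automatic. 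Second, your remark about keeping $1+\sum_j t_j\chi_j$ bounded away from zero is not actually needed in your version (the squared prefactor is nonnegative regardless), whereas it is needed in the paper's, where a square root is taken. Net assessment: same mechanism, with the paper trading your inverse-function-theorem step for a cleverer choice of supports that makes the parametrization explicit.
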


\noindent{\it Remark.}
Since $\Fhat[\rho]$ is lower semicontinuous, the
lemma implies that $\Fhat[\rho]$ varies continuously with 
density perturbations in a {\em spatially bounded} region.
It is a key step toward Theorem \ref{Continuity-Thm}, which
allows perturbations with unbounded support.

\begin{proof}
First, assume that $\gamma$ is a pure state $\Psi$.  The general case will follow 
very easily from that.  

Let $U_{\bm R}$ denote the region of $N$-particle configuration 
space where all particles are in the interior of a cell $\bm R$ in $\Sigma$.
$U_{\bm R}$ is open. 

Now, we appeal to the unique continuation principle which assures us that 
$\Psi$ is not identically zero on $U_{\bm R}$.  This is where we need $\Psi$ to
be an eigenstate of $T+V_{ee}+v$.
The simplest version of the principle, only requiring a locally bounded
potential suffices here.  See the Appendix to \S XIII.13 in Ref. \cite{RSIV}.  
Choose a smooth ($C^\infty$) non-negative function $g_{\bm R}({\bm x})$ on 
configuration space, compactly supported in $U_{\bm R}$, normalized so that 
$\int g_{\bm R}({\bm x}) |\Psi|^2 \, d{\bm x} = 1$, and define 
${\psi}_{s({\bm R})} = (1+s({\bm R}) g_{\bm R})^{1/2}\Psi$,
for a real parameter $s({\bm R})$.
Then, $1+s({\bm R}) g_{\bm R} > 0$ for $s({\bm R})$ sufficiently small,
say $|s({\bm R})| < \epsilon_{\bm R}$.
Using the Cauchy-Schwarz inequality, the finiteness of the various pieces of
the energy of $\Psi$, and the fact that derivatives of $g_{\bm R}$ are continuous
and compactly supported,  it is easy to see that $\|\nabla{\psi}_{s({\bm R})}\|^2$
and $\langle {\psi}_{s({\bm R})}|V_{ee}|{\psi}_{s({\bm R})}\rangle$
are smooth functions of $s({\bm R})$.
The (coarse-grained!) density of ${\psi}_{s({\bm R})}$ is identical to that of
$\Psi$ except in cell $\bm R$, where it is $\rho({\bm r}) + s({\bm R})$.
Now, to complete the construction, just repeat with the other cells in $\Sigma$, to get
${\psi}_{\bm s} = 
\prod_{{\bm R}\in\Sigma} (1+s({\bm R}) g_{\bm R})^{1/2}\, \Psi$.
We observe in passing that to get smoothness for variations in arbitrary directions
requires bounds on the derivatives uniform with respect to cell indices.  This is
a major part of the reason $\Sigma$ must be finite.

If $\gamma$ is a mixed state, we can perform the modification on just one of
the pure states in its canonical decomposition.  
\end{proof}

\noindent{\bf Proof of Theorem \ref{Continuity-Thm}.}
As $\Fhat$ is lower semicontinuous, we only 
need to show upper semicontinuity at $\rho$.  Further, density can always be {\em added}
with a density matrix corresponding to the desired extra density at an intrinsic 
energy cost bounded according to $\Fhat[\rho] \le \|\rho\|_1 \Fmax$,
the appropriate variant of Eq. (\ref{F-bound}).

So, we only need to show that, given $\epsilon$, there is $\delta > 0$ such that 
$\Fhat[\rho^\prime] < \Fhat[\rho]+\epsilon$ whenever $\rho^\prime$ is in 
the $\delta$-ball centered at $\rho$,
$B_\delta(\rho) \defeq \{ \rho^\prime \in {\X}^+\, : \|\rho^\prime-\rho\|_1 \le \delta \}$
and satisfies $\rho^\prime \le \rho$ everywhere.
We will refer to $B_\delta(\rho)\cap\{\rho^\prime: \rho^\prime \le \rho\}$ as the `lower half' 
of $B_\delta(\rho)$.

Suppose $\gamma = \sum \lambda_\alpha |\psi_\alpha\rangle\langle \psi_\alpha|$ 
is a ground state with density $\rho$.  We modify it as follows.
Let $\varphi(x)$ be a smooth, monotonically decreasing, function 
${\mathbb R}^+ \to {\mathbb R}^+$ which is $1$ for $0 \le x \le 1$ and $0$ for $x \ge 2$,
and define 
$\Phi_L {\defeq} \prod_{i=1}^N \varphi(|{\bm x}_i|/L)$.
The modified state is 
\[
\gamma_L = \sum \lambda_\alpha | \Phi_L \psi_\alpha\rangle \langle \Phi_L \psi_\alpha|.
\]
Then $\gamma_L \mapsto \rho_L$, where $\rho_L = 0$ outside a sphere $S_{2L}$ of radius $2L$.

As $L \to \infty$, $\rho_L$ converges to $\rho$ uniformly on any given bounded region
$\Sigma$ and the intrinsic energy of $\gamma_L$ tends to that of $\gamma$.
Find $L$ large enough that
\[
\|\rho-\rho_L\|_1 < {\epsilon}/({3\Fmax}),
\]
and
\[
\Fhat[\rho_L] \le \langle T+V_{ee}\rangle_{\gamma_L} < \Fhat[\rho]+\epsilon/3.
\]
By  Lemma \ref{Deformation-Lemma}, there is some $\delta$ such that,
for $\tilde{\rho}$ in $B_\delta(\rho_L)$ and supported in $S_{2L}$,
\[
\Fhat[\tilde{\rho}] < \Fhat[\rho_L] + \epsilon/3 < \Fhat[\rho]+ 2\epsilon/3.
\]

Now, take $\rho^\prime$ in the lower half of $B_\delta(\rho)$,
and decompose it as $\rho^\prime = {\min}(\rho^\prime,\rho_L) + \rho^{\prime\prime}$,
so that $\rho^{\prime\prime} \le \rho-\rho_L$.
Then, according to the previous paragraph,
\[
\Fhat[{\min}(\rho^\prime,\rho_L)] < \Fhat[\rho] + 2\epsilon/3,
\]
and
$\rho^{\prime\prime}$ can be added with intrinsic energy cost not 
exceeding $\epsilon/3$.  Thus, $\Fhat[\rho^\prime] < \Fhat[\rho] + \epsilon$.
\qedsymbol

The next lemma is preparation for Thm. \ref{Diff-Thm} on directional 
derivatives of $\Fhat$, and deals with the special case of perturbations
with spatially bounded support.
The subset of ${\X}$ consisting of {\it simple functions}, functions which
are nonzero only on a bounded set, is denoted ${\mathcal K}$.

\begin{lem}[] 
\label{Baby-Diff}
If $\rho \in {\X}^{++}$ 
is represented by the potential $v$ and $\delta \rho \in {\mathcal K}$,
then 
\[
\Fhat^\prime[\rho;\delta \rho] = -\int v \, \delta \rho \, d{\bm r}.
\]
\end{lem}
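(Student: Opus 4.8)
The plan is to recast the claim as the statement that the functional
$\Lambda[\delta\rho]\defeq\Fhat^\prime[\rho;\delta\rho]+\int v\,\delta\rho\,d{\bm r}$
vanishes identically on ${\mathcal K}$. First one checks that $\Lambda$ is well defined and tame there: since $\rho\in{\X}^{++}$ is bounded below by a positive constant on any bounded region while $\delta\rho\in{\mathcal K}$ is bounded, $\rho+s\,\delta\rho$ lies in ${\X}^{+}=\dom\Fhat$ for all small $s>0$ and all small $s<0$, so the bound $\Fhat\le\Fmax\|\cdot\|_1$ forces $\Fhat^\prime[\rho;\pm\delta\rho]$ to be finite (hence real), and convexity of $\Fhat$ makes $\Fhat^\prime[\rho;\cdot]$ sublinear; since $v\in{\V}$ takes only finitely many values on the support of $\delta\rho$, the term $\int v\,\delta\rho\,d{\bm r}$ is a finite linear functional of $\delta\rho\in{\mathcal K}$. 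Thus $\Lambda$ is real-valued and sublinear on ${\mathcal K}$, and the variational principle together with the gauge convention $E[v]=0$ gives $\Lambda\ge 0$. Because $\Lambda\ge 0$ and $\Lambda$ is sublinear, it suffices to prove $\Lambda[\eta]=0$ for every $\eta\in{\mathcal K}$ with $\int\eta\,d{\bm r}=0$ and $\Lambda[\pm\mathbf{1}_{\bm R}]=0$ for each cell ${\bm R}$: for a general $\delta\rho\in{\mathcal K}$ one writes $\delta\rho=\eta+a\,\mathbf{1}_{\bm R_0}$ with $\eta$ mean-zero, and subadditivity plus positive homogeneity give $\Lambda[\delta\rho]\le 0$, hence $=0$.

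For the mean-zero directions I would run the Hahn--Banach argument that fails in the fine-grained theory. Suppose, for contradiction, that $\Lambda[\eta]>0$ for some $\eta\in{\mathcal K}$ with $\int\eta\,d{\bm r}=0$. On the line ${\Bbb R}\eta$ the linear functional $t\eta\mapsto t\,\Lambda[\eta]$ is dominated by $\Lambda$ (for $t<0$ because subadditivity gives $\Lambda[\eta]+\Lambda[-\eta]\ge\Lambda[0]=0$), so the dominated-extension form of Hahn--Banach extends it to a linear functional $w$ on all of ${\mathcal K}$ with $w\le\Lambda$. Here the coarse-grained structure is essential: ${\mathcal K}$ is spanned by the cell indicators, so $w$ is automatically represented by a ${\Part}$-measurable function (in the fine-grained theory this step could yield a distribution or worse). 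Comparing $w\le\Lambda$ on $\mathbf{1}_{\bm R}$ with the bound $\Fhat^\prime[\rho;\mathbf{1}_{\bm R}]\le\Fhat[\rho+\mathbf{1}_{\bm R}]-\Fhat[\rho]\le\Fmax(\|\rho\|_1+\mathrm{vol}({\bm R}))$, and using that cell volumes are bounded away from $0$, one finds that $v-w$ is bounded below, so $v-w\in{\V}$; moreover $w$ is not a constant function (it is nonzero on the mean-zero $\eta$, where every constant vanishes). The aim is then to contradict the uniqueness half of Theorem~\ref{CG-HK-P} by showing that $\rho$ is also a ground-state density of $v-w$.

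Rearranging $w\le\Lambda$ gives $\Fhat^\prime[\rho;\delta\rho]\ge\langle w-v,\delta\rho\rangle$ for $\delta\rho\in{\mathcal K}$, and convexity upgrades this to $\Fhat[\rho+\delta\rho]\ge\Fhat[\rho]+\langle w-v,\delta\rho\rangle$ for all $\delta\rho\in{\mathcal K}$; rewritten, $\Fhat[\rho^\prime]+\langle v-w,\rho^\prime\rangle\ge\Fhat[\rho]+\langle v-w,\rho\rangle$ whenever $\rho^\prime-\rho\in{\mathcal K}$. What remains is to extend this to arbitrary $\rho^\prime\in{\X}^{+}$. I would truncate: let $\rho^\prime_L$ agree with $\rho^\prime$ on a large ball built of whole cells and with $\rho$ outside it, so that $\rho^\prime_L-\rho\in{\mathcal K}$ and $\|\rho^\prime_L-\rho^\prime\|_1\to 0$; Theorem~\ref{Continuity-Thm} then gives $\Fhat[\rho^\prime_L]\to\Fhat[\rho^\prime]$, and since $w-v$ is bounded above the positive parts of the relevant cell-sums are summable, so the partial pairings converge or tend to $-\infty$; the boundedness of $\Fhat$ on ${\X}^{+}$ forbids them from tending to $+\infty$, so in every case the inequality passes to $\rho^\prime$ (taking $\rho^\prime=0$ incidentally yields $\langle v-w,\rho\rangle\le-\Fhat[\rho]<\infty$, so the putative ground-state energy is genuinely finite). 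Thus $v-w$ represents $\rho$, contradicting Theorem~\ref{CG-HK-P}; hence $\Lambda$ vanishes on mean-zero directions. Finally $\Lambda[\pm\mathbf{1}_{\bm R}]=0$ is extracted from this: $\mathbf{1}_{\bm R}-c_\Sigma\,\rho\,\mathbf{1}_{\Sigma}$ is mean-zero and in ${\mathcal K}$ for bounded $\Sigma\ni{\bm R}$, and letting $\Sigma\uparrow{\Bbb R}^3$ while using the homogeneity $\Fhat[\lambda\rho]=\lambda\Fhat[\rho]$, the continuity of $\Fhat$, and the integrability of $v\rho$, one pins $\Fhat^\prime[\rho;\pm\mathbf{1}_{\bm R}]$ to $\mp v({\bm R})\,\mathrm{vol}({\bm R})$ from the non-variational side, matching the lower bound.

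The step I expect to be the main obstacle is this extension of the subgradient inequality: the Hahn--Banach functional $w$, and hence $v-w$, is only guaranteed bounded below and may behave erratically (growing without bound) in the tails, so a pairing such as $\langle v-w,\rho^\prime-\rho\rangle$ is not controlled by $L^1$ norms and cannot be handled by continuity of a linear functional. What rescues the argument is precisely that $\Fhat$ is both bounded and continuous (Theorem~\ref{Continuity-Thm}): along the truncations $\rho^\prime_L$ the values $\Fhat[\rho^\prime_L]$ stay bounded and converge to $\Fhat[\rho^\prime]$, which, once the uniform upper bound on $w-v$ has made the positive parts summable, forces the partial pairings either to converge (delivering the inequality for $\rho^\prime$) or to tend to $-\infty$ (making the inequality trivial), while a divergence to $+\infty$ would contradict the boundedness of $\Fhat$. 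That $w$ has genuine potential form in the first place --- rather than being a distribution or worse --- is exactly the dividend of the cell-wise finite-dimensionality of ${\X}$, without which the clash with the Hohenberg--Kohn theorem could not even be formulated.
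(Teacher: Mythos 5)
Your proposal is correct and follows essentially the same route as the paper's proof: sublinearity of $\Fhat^\prime[\rho;\cdot\,]+\langle v,\cdot\,\rangle$ together with its nonnegativity from the variational principle, a Hahn--Banach extension of a hypothetical strictly positive direction to a functional represented by a $\Part$-measurable $w$, the cell-indicator bound to show the perturbed potential is bounded below, and the coarse-grained Hohenberg--Kohn theorem to force $w$ to be trivial. The additional scaffolding you supply --- splitting off the normalization-changing directions $\pm\mathbf{1}_{\bm R}$ and explicitly extending the subgradient inequality from ${\mathcal K}$ to all of ${\X}^{+}$ by truncation --- fills in steps the paper passes over silently, and is sound apart from one harmless sign slip: since $v-w$ is bounded below, the partial pairings $\langle v-w,\rho^\prime_L\rangle$ can only converge or diverge to $+\infty$ (not $-\infty$), and either alternative delivers the desired inequality in the limit.
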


\begin{proof}
It follows from convexity of $\Fhat$ that $\Fhat^\prime[\rho;\cdot\,]$ is a sublinear 
functional on ${\mathcal K}$, i.e., 
$\Fhat^\prime[\rho;x+y]  \le \Fhat^\prime[\rho;x] + \Fhat^\prime[\rho;y]$,
and $\Fhat^\prime[\rho;\lambda x]  = \lambda \Fhat^\prime[\rho;x]$ for $\lambda \ge 0$.
Since $\langle v,\cdot\,\rangle$ is a linear functional, 
the sum $\Fhat^\prime[\rho;\cdot\,] + \langle v,\cdot\, \rangle$ is also sublinear.  
Since $\rho$ is the ground-state density of $v$, 
$\Fhat^\prime[\rho;\cdot\,] + \langle v,\cdot \rangle \ge 0$. 
We are trying to show this is an equality.  
Suppose not.  Then, there is some $\delta \rho \in {\mathcal K}$ such that
\[\Fhat^\prime[\rho;\delta \rho] + \langle v , \delta \rho \rangle > 0.\]
On the one-dimensional subspace spanned by $\delta \rho$, there is therefore a 
non-zero linear functional $\lambda$ such that $\lambda < \Fhat^\prime[\rho;\cdot\,] + 
\langle v,\cdot\, \rangle$.
A version of the Hahn-Banach theorem now implies the existence of an extension 
of $\lambda$, which we continue to denote by $\lambda$, 
such that 
\[\lambda(\cdot\,) \le \Fhat^\prime[\rho;\cdot\,] + \langle v,\cdot\, \rangle\] 
on all of ${\mathcal K}$.

The immediate object is to show that $\lambda$ acting on $\delta\rho\in{\mathcal K}$
is represented as
\[
\lambda(\delta \rho) = \int -w({\bm x}) \, \delta\rho({\bm x})\, d{\bm x}.
\]
This is not automatic.  We have only shown that $\lambda$ is a {\em linear}
fuctional, not that it is continuous.  But, for density perturbations nonzero
only on a bounded set $\Sigma$, consisting of a finite number of cells of $\Part$, 
there must be such a $w$, since then the functional is on a finite-dimensional
space.  Such $w$'s for different $\Sigma$'s must agree on overlaps and they
can be patched together to yield a single $\Part$-measurable function.
So, the $w$ representation holds on all of ${\mathcal K}$.

Thus, 
\[
0 \le \Fhat^\prime[\rho;\cdot\,] + \langle v + w, \cdot\, \rangle.
\]
This relation appears to say that $\rho$ is the ground-state density for 
$v+w$.  If $w$ is bounded below, that is correct and we can therefore apply
the coarse-grained Hohenberg-Kohn theorem to conclude that $w\equiv 0$.
But if $w$ is not bounded below, that argument is not immediately applicable.
In that case, consider, the density perturbation 
$\delta \rho_{\bm R}$ which just adds $1$ to some arbitrary cell ${\bm R}$.
Since $\Fhat$ is bounded above and below, and is convex,
$-w({\bm R}) - v({\bm R}) \le \Fhat^\prime[\rho;\delta \rho_{\bm R}] 
\le \Fhat[\rho + \, \delta \rho_{\bm R}] - \Fhat[\rho] \le \Fmax$.
Therefore, since $\bm R$ is arbitrary, $v+w$ is bounded below, so the Hohenberg-Kohn 
argument actually does apply to $v+w$, after all, showing that $w \equiv 0$.
\end{proof}

\noindent{\bf Proof of Theorem \ref{Diff-Thm}.}
Case (a) is clear.  So suppose there is $s > 0$ such that 
$\rho+s\delta \rho \in {\X}^+$.  
Renormalizing $\delta \rho$ if necessary, assume without loss that
$\rho+s\delta \rho \in {\X}^+$ for $s \le 1$.  

Fix $\epsilon > 0$.  Using Theorem \ref{Continuity-Thm}, find $\delta$ such that 
\beq
\left|\Fhat[\rho^\prime] - \Fhat[\rho]\right| < \epsilon/4, \quad \rho^\prime \in B_\delta(\rho),
\eeq
and then a bounded region $\Sigma$ large enough that 
$\| \delta \rho{\rest_{\Sigma^c}}\|_1 < \delta$.  

Now, we split $\Fhat[\rho + s\, \delta \rho] - \Fhat[\rho]$ as
\begin{eqnarray}
\Fhat[\rho + s\, \delta \rho] - \Fhat[\rho] &= &
\Big( \Fhat[\rho + s\, \delta \rho] - \Fhat[\rho + s\, \delta \rho\rest_\Sigma]\Big)
\nonumber \\
&+& 
\Big( \Fhat[\rho + s\, \delta \rho\rest_\Sigma] - \Fhat[\rho]\Big), 
\label{split} 
\end{eqnarray}
and deal with the two terms (``1st line'' and ``2nd line'') on the right-hand
side following a strategy illustrated by Figure \ref{nbhd-fig}.

\begin{figure}
\includegraphics[height=30mm]{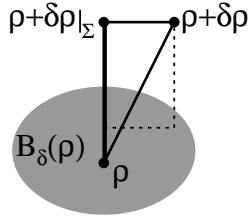}
\caption{Schematic of construction in proof of Theorem \ref{Diff-Thm}.}
\label{nbhd-fig}
\end{figure}

For the 2nd line, since $\delta \rho\rest_{\Sigma}$ has bounded support,
Lemma \ref{Baby-Diff} says that 
\[
{\Fhat[\rho+ s\, \delta \rho\rest_\Sigma]} - \Fhat[\rho]  
= -s \int v \delta \rho\rest_\Sigma \, d{\bm r} + {o}(s).
\]
As for the 1st line, convexity of $\Fhat$ implies that
\[
\Fhat[\rho+s\,\delta \rho] \le 
(1-s) \Fhat[\rho+s\,\delta \rho\rest_{\Sigma}] + 
s \Fhat[\rho+s\,\delta \rho\rest_{\Sigma} + \delta \rho\rest_{\Sigma^c}],
\]
so that the 1st line satisfies
\beq
\Fhat[\rho+s\,\delta \rho] - \Fhat[\rho+s\,\delta \rho\rest_\Sigma] 
 \le 
s \Big( \Fhat[\rho+s\,\delta \rho\rest_{\Sigma} + \delta \rho\rest_{\Sigma^c}]
- \Fhat[\rho+s\,\delta \rho\rest_{\Sigma}] \Big).
\label{1st-line-reprise}
\eeq
For $s$ small enough (Fig. \ref{nbhd-fig}),
both $\rho+s\,\delta \rho\rest_{\Sigma}$ and 
$\rho+s\,\delta \rho\rest_{\Sigma} + \delta \rho\rest_{\Sigma^c}$ are in
$B_\delta(\rho)$, so that their intrinsic energies do not differ by more than $\epsilon/2$.
Injecting that fact into inequality (\ref{1st-line-reprise}),
\[
\Fhat[\rho+s\,\delta \rho] - \Fhat[\rho+s\,\delta \rho\rest_{\Sigma}] \le s \epsilon/2.
\]
Putting everything back into Eq. (\ref{split}),
$
\Fhat[\rho + s\, \delta \rho] - \Fhat[\rho] \le 
s \left( {\epsilon}/{2} - \int v \delta \rho\rest_\Sigma \, d{\bm x}\right) + {o}(s),
$
showing that 
\[
\Fhat^\prime[\rho; \delta \rho] \le {\epsilon}/{2} - \int v\, \delta \rho\rest_\Sigma \, d{\bm x}.
\]
In the limit $\Sigma \nearrow \Part$, the integral tends to
$\int v\, \delta \rho \, d{\bm r}$ whether or not $\delta \rho$ is in the domain of
$v$.  If it is not, the value of the integral is unambiguously $-\infty$.
Taking now the limit $\epsilon \to 0$ gives 
$\Fhat^\prime[\rho; \delta \rho] \le - \int v \delta \rho\, d{\bm x}$.
The variational principle already secured the opposite inequality, thus
we have equality:
$\Fhat^\prime[\rho; \delta \rho] =- \int v\, \delta \rho\, d{\bm x}$.
\qedsymbol

The final lemma-theorem pair in this section is aimed at getting something
resembling continuity of $v[\rho]$ as a function of $\rho$.

\begin{lem}[`Local continuity' of potential] 
\label{local-niceness}
The potential $v[\rho]$ is a continuous function of $\rho$ with respect to the $L^1$ 
topology on ${\XPP}$ and the product topology on ${\V}$. 
\end{lem}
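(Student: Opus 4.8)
The plan is to localize. By definition of the product topology on $\V$ it suffices to prove that, for every bounded region $\Sigma$ — which we may take to be a finite union of cells of $\Part$ — the restriction $v[\rho]\rest_\Sigma$, viewed as a vector in the Euclidean space $\mathbb{R}^{|\Sigma|}$ indexed by the cells of $\Sigma$, is a continuous function of $\rho$ for the $L^1$ topology on $\XPP$. Write each $\rho \in \XPP$ as $\rho = \xi \oplus \eta$, where $\xi = \rho\rest_\Sigma$ and $\eta = \rho\rest_{\Sigma^c}$, and $\xi\oplus\eta$ denotes the element of $\X$ equal to $\xi$ on $\Sigma$ and to $\eta$ outside. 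For each nonnegative tail $\eta$ of finite mass, introduce the finite-dimensional function $g_\eta(\xi) \defeq \Fhat[\xi\oplus\eta]$ on the open convex cone $\Omega_\Sigma \subset \mathbb{R}^{|\Sigma|}$ of densities strictly positive on every cell of $\Sigma$. Each $g_\eta$ is finite there (since $\Fhat[\cdot]\le \Fmax\|\cdot\|_1$ on all of $\X^+$) and convex (inherited from $\Fhat$), and if $\rho \in \XPP$ then $\xi=\rho\rest_\Sigma$ is an interior point of $\Omega_\Sigma$.

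Two inputs from the single-scale theorems drive the argument. First, Theorem \ref{Diff-Thm}: for $\rho \in \XPP$ and any $\delta\rho$ supported in $\Sigma$ — for which $\rho + s\,\delta\rho$ stays in $\X^+$ for all small $s$ of either sign, because $\rho$ is bounded away from $0$ on the finitely many cells of $\Sigma$ — we have $\Fhat^\prime[\rho;\delta\rho] = -\int v[\rho]\,\delta\rho\,d{\bm x}$, which is linear in $\delta\rho\rest_\Sigma$. Hence the two-sided directional derivatives of $g_\eta$ at $\xi$ assemble into a linear map, so $g_\eta$ is differentiable at $\xi$ with $\nabla g_\eta(\xi)$ equal to $-v[\rho]\rest_\Sigma$ (in the obvious coordinates, up to the fixed positive cell-volume weights). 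Second, Theorem \ref{Continuity-Thm}: for fixed $\xi$ the map $\eta \mapsto g_\eta(\xi) = \Fhat[\xi\oplus\eta]$ is $L^1$-continuous, so along any sequence $\rho_n \to \rho$ in $L^1$ with $\rho_n,\rho \in \XPP$, writing $\eta_n = \rho_n\rest_{\Sigma^c}$ and $\eta = \rho\rest_{\Sigma^c}$, we get $g_{\eta_n} \to g_\eta$ pointwise on $\Omega_\Sigma$.

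I would then invoke the standard convergence theory for convex functions on a finite-dimensional space: a pointwise-convergent sequence of finite convex functions on an open convex set converges uniformly, and with uniform Lipschitz bounds, on compact subsets, and consequently $\limsup_n \partial g_{\eta_n}(\xi_n) \subseteq \partial g_\eta(\xi_0)$ whenever $\xi_n \to \xi_0$ in $\Omega_\Sigma$. At $\xi_0 = \rho\rest_\Sigma$ the limit $g_\eta$ is differentiable, so $\partial g_\eta(\xi_0) = \{-v[\rho]\rest_\Sigma\}$ is a single point; applying this with $\xi_n = \rho_n\rest_\Sigma \to \xi_0$ and using that each $g_{\eta_n}$ is differentiable at $\xi_n$ with gradient $-v[\rho_n]\rest_\Sigma$ gives $v[\rho_n]\rest_\Sigma \to v[\rho]\rest_\Sigma$ in $\mathbb{R}^{|\Sigma|}$, that is, uniform convergence of $v[\rho_n]$ to $v[\rho]$ on $\Sigma$. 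Since $\Sigma$ was an arbitrary finite union of cells and such sets generate the product-topology neighborhoods of $\V$, this is precisely the asserted continuity of $\rho \mapsto v[\rho]$ from $(\XPP, L^1)$ into $\V$ with the product topology.

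The step I expect to be the main obstacle is the last inference — passing from ``gradients of a pointwise-convergent sequence of convex functions converge at a fixed point of differentiability of the limit'' to ``they converge along a moving point $\xi_n \to \xi_0$ while the functions themselves also vary through the parameter $\eta_n$.'' Here one cannot simply differentiate a single fixed function but must combine the local equi-Lipschitz estimate supplied by pointwise convergence of convex functions with the graph-closedness (upper semicontinuity) of the subdifferential operator. The remaining points — that $\Omega_\Sigma$ is a legitimate common open domain on which every $g_{\eta_n}$ is finite and convex, that $\rho\rest_\Sigma$ is interior to it (this is exactly where $\rho\in\XPP$, not merely $\rho\in\X^+$, is used), and that the gradient produced by Theorem \ref{Diff-Thm} is the gauge-fixed representing potential of Theorem \ref{CG-HK-P} restricted to $\Sigma$ rather than some other representative — are routine and I would only sketch them.
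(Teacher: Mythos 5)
Your proposal is correct and follows essentially the same route as the paper: localize to a finite union of cells $\Sigma$, regard $\Fhat$ as a family of finite, convex, differentiable functions of $\rho\rest_\Sigma$ parametrized by the tail $\rho\rest_{\Sigma^c}$, get differentiability with gradient $-v[\rho]\rest_\Sigma$ from the bounded-support directional-derivative lemma, get pointwise convergence in the tail parameter from Theorem \ref{Continuity-Thm}, and conclude by the finite-dimensional convex-analysis fact that subgradients of a pointwise-convergent sequence of convex functions converge to the (unique) gradient of the limit. If anything, your diagonal formulation $\limsup_n \partial g_{\eta_n}(\xi_n)\subseteq\partial g_{\eta}(\xi_0)$ handles the simultaneous variation of the point and the function more explicitly than the paper, which treats continuity in $\rho\rest_\Sigma$ and in $\rho\rest_{\Sigma^c}$ separately before asserting joint continuity.
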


\begin{proof}
To say that $v[\rho]$ is continuous with respect to the product topology 
is to say that $v[\rho]\rest_\Sigma$ is a continuous function of $\rho$.
Thus, fix a bounded region $\Sigma$ and view 
\[
\Fhat[\rho] = \Fhat[{\rho\rest_\Sigma} + \rho\rest_{\Sigma^c}] 
\]
as a function of the two variables $\rho\rest_\Sigma$ and $\rho\rest_{\Sigma^c}$. 

For fixed $\rho\rest_{\Sigma^c}$, $\Fhat[\rho]$ is a bounded convex differentiable 
function (according to Lemma \ref{Baby-Diff}) 
of $\rho\rest_\Sigma$.
Temporarily denote the derivative with respect to $\rho\rest_\Sigma$ by
\[
D_\Sigma \Fhat[\rho]. 
\]
According to a theorem of convex analysis (Appendix \ref{cvx-appendix} (6)),
boundedness of $\Fhat$ and finite-dimensionality of the variable $\rho\rest_\Sigma$
imply that $D_\Sigma \Fhat[\rho]$ is a continuous function of ${\rho\rest}_\Sigma$.  

On the other hand, for fixed ${\rho\rest}_\Sigma$, $\Fhat[\rho]$ is a continuous
function of $\rho\rest_{\Sigma^c}$ according to Theorem \ref{Continuity-Thm}.
Again from boundedness and finite-dimensionality conclude that $D_\Sigma \Fhat[\rho]$ 
is {\em also\/} a continuous function of $\rho\rest_{\Sigma^c}$.
For, if not, there would be a sequence of values $\tau_n$ converging to
$\rho\rest_{\Sigma^c}$ such that $\Fhat[{\rho\rest_\Sigma} + \tau_n]$ 
converges to $\Fhat[{\rho\rest_\Sigma} + \rho\rest_{\Sigma^c}]$,
but $D_\Sigma \Fhat[{\rho\rest_\Sigma} + \tau_n]$ 
does not converge to $D_\Sigma \Fhat[{\rho\rest_\Sigma} + \rho\rest_{\Sigma^c}]$.
According to Appendix \ref{cvx-appendix} (5), that cannot happen.

But, the derivative of $\Fhat[\rho]$ with respect to $\rho\rest_\Sigma$ is precisely
$v[\rho]\rest_\Sigma$.
So the previous statement says that $v[\rho]\rest_\Sigma$ is continuous with respect to both 
$\rho\rest_{\Sigma}$ and $\rho\rest_{\Sigma^c}$, which is to say,
it is continuous with respect to $\rho$.
\end{proof}

\noindent{\bf Proof of Theorem \ref{V-Cont-Thm}.}
 Suppose not.  Then, there exists $\epsilon > 0$ and a sequence 
$\rho_m \to \rho$ such that 
\beq
\| v[\rho_m]\rho_m - v[\rho] \rho \|_1 > \epsilon,\quad \mbox{\rm for all }m.
\label{nonconvergence}
\eeq
We will derive a contradiction.

According to our convention [Eq. (\ref{offset-convention})] on the constant offset of $v[\rho]$,
\beq
\int_{\Sigma^c} v[\rho_m] \rho_m \, d{\bm x} = -\Fhat[\rho_m] - \int_\Sigma v[\rho_m] \rho_m\, d{\bm x},
\label{energy-split}
\eeq
for any bounded region $\Sigma$.
As $m\to\infty$, the first term on the right-hand side converges to $\Fhat[\rho]$ by Theorem 
\ref{Continuity-Thm}.
Also, by Lemma \ref{local-niceness}, the integral on the right-hand side converges to
$\int_\Sigma v[\rho]\, \rho\, d{\bm x}$, since the integration is over a finite region.
Actually, that Lemma shows more.
The contribution to Inequality (\ref{nonconvergence}) from integration over $\Sigma$ tends 
to zero as $m \to \infty$.  

So, as $m\to\infty$, both terms on the right-hand side of Eq. (\ref{energy-split}) 
tend to their counterparts with $\rho_m$ replaced by $\rho$.
Since the entire equation also holds with that substitution,
the left-hand side must converge to $\int_{\Sigma^c} v[\rho]\, \rho \, d{\bm x}$. 
Abbreviating $I_m^+(\Sigma) =  \int_{\Sigma^c} v_m^+ \rho_m \, d{\bm x}$,
$I_m^-(\Sigma) =  \int_{\Sigma^c} v_m^- \rho_m \, d{\bm x}$,
and similarly with no $m$ subscript, that means
\[
I_m^+(\Sigma) - I_m^-(\Sigma) \to I^+(\Sigma)-I^-(\Sigma).
\]
Now, both $I^+(\Sigma)$ and $I^-(\Sigma)$ tend to zero as $\Sigma \nearrow {\mathbb R}^3$.
This shows that 
\[
I_m^+(\Sigma) \to I_m^-(\Sigma),\;\; \mbox{\rm as } \Sigma\nearrow{\mathbb R}^3 
\,\mbox{\rm and }\, m\to\infty. 
\]

On the other hand, as has been remarked,
inequality (\ref{nonconvergence}) is almost entirely carried by
$\Sigma^c$ as $m\to\infty$, for any $\Sigma$. 
So, for large enough $\Sigma$ and $m$,
\[
I_m^+(\Sigma) + I_m^-(\Sigma) > \epsilon/2.  
\]
The last two displays together show that, for large enough $\Sigma$ and $m$,
{\em both\/} $I_m^+(\Sigma)$ and $I_m^-(\Sigma)$ are bounded away from
zero.  But that is clearly impossible. 
$v_m$ is bounded below ($v_m^- \le \Fmax$), so that
\[
I_m^-(\Sigma) \le \Fmax \int_{\Sigma^c} \rho_m\, d{\bm x}. 
\]
Yet, $\int_{\Sigma^c} \rho_m \, d{\bm x}$ tends to zero as $\Sigma \nearrow {\mathbb R}^3$ 
and $m\to\infty$, since $\rho_m {\to} \rho$.
This contradiction finishes the proof.
\qedsymbol

\section{Kohn-Sham Theory}
\label{KS}

This section concerns the implementation of Kohn-Sham theory within
the coarse-grained model.  Since all densities in $\XNPP$ are
both interacting and non-interacting V-representable, the endeavor
is off to a good start.  
Two additional issues are a little subtlety in the definition of
the Hartree energy, and the relation $v_{xc} = \delta E_{xc}/\delta\rho$.

Of course, the Kohn-Sham decomposition of the intrinsic energy for 
$\rho\in {\X}^+$ looks just like that for the fine-grained theory, 
Eq. (\ref{KS-decomp}):  $F[\rho] = T_s[\rho]+E_H[\rho]+E_{xc}[\rho]$.
Only the interpretation is changed.
$E_{xc}[\rho]$ is defined only on ${\XNP}$, since
outside, both $F[\rho]$ and $T_s[\rho]$ are $+\infty$.
Similarly, Eq. (\ref{KS-potls}) is taken over and used to define the
exchange-correlation potential, as
$v_{xc}[\rho] \defeq v_s[\rho] - v[\rho] -\phi[\rho]$ on ${\XNPP}$.

The definition of the Hartree energy requires supplementation, 
because a coarse-grained density only fixes the total particle number in each 
cell.  A choice must be made here which does not arise in the fine-grained 
theory, and there seem to be two possibilities.  Different choices here would 
imply slightly different $E_{xc}$.
One possibility is to use a density which is {\em uniform} throughout each cell.
In that case, we would begin using this surrogate for a coarse-grained equivalence 
class in a more explicit manner, and the Hartee energy would depend only on
the coarse-grained equivalence class.
The second possibility is to use one of the fine-grained densities 
which minimize the intrinsic energy.  This certainly seems natural in some
ways, but it raises problems.  First, which intrinsic energy, $F$ or $T_s$?
Even if that is decided, say for $F$, there may be multiple fine-grained
densities which minimize $F$, calling for another choice.
One of the nice features of the Hartree energy is its explicitness.
That would be lost with this choice, and possibly the ability to
establish continuity of $E_H$ as well.  For all these reasons, we
will take the first choice: the Hartree energy is calculated according
to a uniform distribution of charge in cells.

Off ${\X}^+$, $\rho$ can be negative, and if we interpret that as
positive electrical charge, $E_H$ becomes well-defined on all of ${\X}$.  
It is also\cite{Lieb-Loss} convex, bounded below by zero, and bounded above 
by a multiple of $\|\rho\|_1^2$, the multiple being determined by the 
energy to charge a single cell.  These properties imply that $E_H$ is
continuous as a function of $\rho \in {\X}$.

The derivative of $E_H$, which will be denoted $\phi$ is explicitly computable as 
an element of ${\X}^*$.
From $\varphi({\bm x}) = e^2 \int {\rho({\bm y})}/{|{\bm x}-{\bm y}|} \, d{\bm y}$,
$\phi$ is obtained by averaging over cells.
The map $\rho \mapsto \phi[\rho]$ is linear, and it is not difficult to see that 
it is continuous from ${\X}$ to ${\X}^*$, which is to say, $\|\phi\|_\infty \le C\|\rho\|_1$.
Thus, $\phi$ is a Fr\'echet derivative of $E_H$.
The Hartree potential behaves better than does $v[\rho]$.

Now we turn to the exchange-correlation potential.
It follows immediately from the definition that $\rho \mapsto v_{xc}[\rho] \rho$ 
is $L^1$-continuous, since the other three potentials have this property.
Is $v_{xc}$ the derivative of $E_{xc}$?  More precisely, does it
coincide with the directional derivative $E^\prime[\rho;\cdot\,]$?
Consider what can go wrong.  If $\delta \rho$ is such
that $\rho + s\, \delta\rho$ is not in ${\XNP}$ for any $s > 0$,
then $E_{xc}^\prime[\rho;\delta\rho]$ has no value, finite or infinite,
because $E_{xc}$ is not defined off ${\XNP}$.
Otherwise, we should have
$E_{xc}^\prime[\rho;\delta\rho] =
F^\prime[\rho;\delta\rho] - T_s^\prime[\rho;\delta\rho] - E_H^\prime[\rho;\delta\rho]
= \langle v[\rho],\delta\rho\rangle - \langle v_s[\rho],\delta\rho\rangle
-\langle \phi[\rho],\delta\rho\rangle$.
This is correct as long as at most one term on the right hand side is
infinite.  Either $F^\prime[\rho;\delta\rho]$ or $T_s[\rho;\delta\rho]$ might be $-\infty$.  
A condition which can be imposed directly on $\rho$ to make sure 
$E_{xc}^\prime[\rho;\delta\rho]$ and $\langle v_{xc},\delta\rho\rangle$ both
exist and are equal is that $\rho + s\, \delta\rho$ is in ${\XNP}$ for $s$ 
in some open interval around zero.
The fairly solid status of the coarse-grained exchange-correlation potential is
in stark contrast to its fine-grained counterpart.  

\section{Multiscale Model and Limits of Zero Coarse-Graining Scale}
\label{continuum}

The previous section aimed to show that coarse-graining cures some
of the bad behaviors of the continuum theory.
But not everything about the continuum theory is bad, and it is desirable that
coarse-grained models be good approximations to it in certain respects.  
This immediately leads us to ask about continuum limits.  What happens as 
the coarse-graining scale is taken to zero?  Is the continuum theory 
recoverd as smoothly as possible?  Are there misleading limits?  
These are the concerns of the present section.

\subsection{ideas}
\label{continuum-ideas}


A straightforward way to approach a continuum limit is to use a 
sequence of ever-finer acceptable partitions
${\Part}_n$, $n=1,2,3,\ldots$ of the sort introduced in Section \ref{CG},
with the maximum cell diameter of ${\Part}_{n}$, denoted $D_n$, 
tending to zero as $n\to\infty$.  
For the sake of the Poincar\'e inequality in Thm. \ref{F-limit}, we
impose the technical condition that the cells be convex.  
Finally, in order to have the collections of coarse-grained densities
strictly increasing with level, we require the cells of ${\Part}_{n+1}$ 
to be obtained by subpartitioning those of ${\Part}_{n}$.
For example, these requirements are satisfied if we take the cells of ${\Part}_n$ to be 
those of a simple triclinic lattice generated by lattice vectors $2^{-n}{\bm a}_k$ 
with ${\bm a}_1$, ${\bm a}_2$, and ${\bm a}_3$ noncolinear.

As in Section \ref{CG}, each $\Part_n$ gives rise to a space 
${\X}^n$ of equivalence classes of densities.  The object now is to 
make contact with the theory sketched in Section \ref{fine-grained}, 
where $X = L^1\cap L^3$ and its subset $\JN$ figured prominently, so 
we will equip ${\X}^n$ with the $L^1\cap L^3$ norm as well, instead of the
$L^1$ norm as we did with the single-scale model.  The point is 
that this does not require any essential change to what was done in the 
previous two sections because for any {\em fixed} ${\X}^n$, the $L^1\cap L^3$ 
and $L^1$ norms are equivalent: $\|f\|_3  \le c_n \|f\|_1$ for $f\in{\X}^n$.
For a sequence on coarse-graining scales tending to zero, 
though, convergence with respect to $L^1\cap L^3$ is more stringent 
than $L^1$ convergence. 
In studying the single-scale model, it was convenient and harmless to 
represent elements of ${\X}$ by $\Part$-measurable functions (that is, 
constant on cells).  In the multiscale setting, this conflation 
is not so innocuous.  Therefore, we introduce the injective isometry
\beq
\iota_n: {\X}^{n} {\hookrightarrow} {X},
\eeq
which takes $\rho \in {\X}^n$ to the unique ${\Part}_n$-measurable function
belonging to the equivalence class in $X$ associated to $\rho$.

For any $n$, each $\dotro\in X$ belongs to one equivalence class
corresponding to an element of ${\X}^n$, which will be denoted by $\pi_n \dotro$.  
As a notational cue, elements of $X$ will generally be denoted by a rho with
tucked-in tail ($\dotro$), and elements of one of the ${\X}^n$ by a normal rho ($\rho$).
We refer to $\pi_n\dotro$ as the (scale-$n$) {\it projection} of $\dotro$.
This nomenclature is doubly justified, since $\iota_n {\X}^n$ is a closed 
subspace of $X$ and $\iota_n \pi_n \dotro$ is precisely the $L^2$ orthogonal projection
of $\dotro$ onto this subspace.  We wish to think of coarse-grained densities
as formal objects independent of $X$, and thus stop short of simply identifying
them with equivalence classes in $X$.  Consistently with the notation just introduced,
the equivalence class corresponding to $\rho$ is $\pi_n^{-1} \rho$.

Each ${\X}^{n}$ has its own intrinsic energy functional $F^{n}$.
For $\rho \in {\X}^n$,
\[
F^{n}[\rho] \defeq \inf\{F[\dotro]: \dotro \in \pi_n^{-1} \rho \}.
\]
As a notational convenience, we write $F^n[\dotro]$ for $F^n[\pi_n\dotro]$.
As discussed in Section \ref{CG}, if $\rho \in ({\X}^n)_N^+$,
the infimum is attained at some fine-grained density, and possibly more than one, 
though such degeneracy is not generally expected.  
For $\rho \in ({\X}^n)_N^+$, the entire set of such minimizers will be denoted 
\beq
\Lambda_n \rho \defeq \{ \dotro \in \pi_n^{-1} \rho: F[\dotro] = F^n[\rho] \}.
\eeq
In particular, if $\rho$ is misnormalized or somewhere negative so that 
$F^n[\rho] = +\infty$, then $\Lambda_n \rho \defeq \emptyset$.
Thus, $\Lambda_n$ is a {\em set-valued\/} function or 
{\it multifunction}\cite{Aubin-Ekeland,Deimling,Aubin-Frankowska}.

The collection of all coarse-grained densities at all scales is denoted
\beq
{\X}^\infty \defeq \bigcup_{n=1}^\infty {\X}^n,
\eeq
and for $\rho\in {\X}^\infty$, the $\scale$ function is defined by 
$\rho \in {\X}^{\scale(\rho)}$.  The following chain of
inclusions, all of them strict, then holds:
\beq
\iota_1 {\X}^{1} {\subset} \iota_2{\X}^2\subset \cdots \subset \,
\iota {\X}^{\infty} = X\setminus\JN \, 
{\subset} {X}.
\eeq
On $\rho\in {\X}^\infty$, $\iota$ acts as $\iota_{\scale(\rho)}$.
The scale index on $\iota$ (and on $\Lambda$) is strictly unnecessary, 
but will be written at least when disambiguation is thereby provided. 
Note that $\iota_n {\X}^n$ is a closed vector subspace of $X$,
but $\iota {\X}^{\infty}$ is not a vector space at all.
It does not make sense to take a linear combination of coarse-grained
densities at different scales.
The fine-grained densities of interest are in $\JN$.  

%
Turning to potentials, we have a chain of spaces
\beq
{\V}^{1} {\subset} {\V}^2\subset \cdots \subset 
{\V}^{\infty} \defeq \bigcup_{n=1}^\infty {\V}^n,
\eeq
where ${\V}^n$ consists of ${\Part}_n$-measurable functions
bounded below.  
The coarse-grained theory of previous sections provides a 
representing potential map $v^n: ({\X}^n)_N^{++} \to {\V}^n$.
From the perspective of the fine-grained theory, 
$v^n[\rho]$ is the representing potential of $\Lambda_n\rho$,
and in making contact with that theory, the subspace
$({\X}^n)^* = \{ v \in {\V}^n: \|v\|_{X^*} < \infty\}$ 
of the potentials with finite $L^\infty \cap L^{3/2}$ norm, and understood
to be carrying that norm, is of more interest.  These nest as
\beq
({\X}^{1})^* {\subset} ({\X}^2)^*\subset \cdots \subset 
{\X^*}^{\infty} \defeq \cup_{n=1}^\infty ({\X}^n)^* \subset {X}^*.
\eeq
Note that for the potentials, there are no embedding maps 
analogous to $\iota$.  A coarse-grained potential {\em really is\/}
a ${\Part}_n$-measurable function.


Now we can begin to investigate how these ideas fit together. 
The first set of questions involve convergence, or lack thereof,
of $\iota \pi_n \dotro$ and $\Lambda \pi_n \dotro$ to $\dotro$.
The positive answers hold implications for convergence of $F^n[\pi_n \dotro]$.

\begin{thm}
\label{F-limit}
Given $\dotro$ in $\JN$, 
$\pi_n\dotro \stackrel{X}{\to} \dotro$.
$F^{n}[\dotro]$ is increasing with $n$, and 
$F^{n}[\dotro] = F[\Lambda_n\dotro] \nearrow F[\dotro]$ as $n\to\infty$.
Both $\Lambda \pi_n\dotro$ and $\iota \pi_n \dotro$ converge to $\dotro$
in $L^p$ norm as $n\to\infty$ for $1 \le p < 3$, with
$\| \dotro - \Lambda \pi_n\dotro \|_p, \| \dotro - \iota \pi_n\dotro \|_p \le
c D_n^{(3-p)/2} F[\dotro]^{(1+p)/4}$.
\end{thm}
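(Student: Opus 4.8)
I would treat the four assertions separately, writing $P_n:=\iota_n\pi_n$ for the scale-$n$ cell-averaging operator on $X$. For $\pi_n\dotro\to\dotro$ in $X$: by Jensen's inequality each $P_n$ is a contraction on every $L^q(\mathbb R^3)$; one checks directly that $P_nf\to f$ in $L^1$ and in $L^3$ for $f$ continuous with compact support (the oscillation of $f$ over a cell being at most its modulus of continuity at $D_n$, with supports trapped in a fixed bounded set); then density of such $f$ in $X=L^1\cap L^3$ together with uniform boundedness of the $P_n$ gives $P_n\dotro\to\dotro$ in $X$. I would flag that this ``soft'' argument is genuinely needed for the $L^3$ part, since no rate uniform in $F[\dotro]$ can survive at the $L^3$ endpoint.

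Monotonicity of $F^n[\dotro]$ is a refinement argument: since $\Part_{n+1}$ subpartitions $\Part_n$, one has $\pi_{n+1}^{-1}(\pi_{n+1}\dotro)\subseteq\pi_n^{-1}(\pi_n\dotro)$, and infimizing $F$ over a smaller set can only raise its value, so $F^{n+1}[\dotro]\ge F^n[\dotro]$; and $F^n[\dotro]\le F[\dotro]$ because $\dotro$ itself lies in $\pi_n^{-1}(\pi_n\dotro)$. Thus $F^n[\dotro]\nearrow L\le F[\dotro]$, and it remains to prove $L\ge F[\dotro]$. For this I would pick minimizers $\sigma_n\in\Lambda_n\dotro$ (nonempty, since $\pi_n\dotro\in(\X^n)_N^+$ and $F^n[\dotro]<\infty$, by the attainment fact recalled before the theorem), so that $F[\sigma_n]=F^n[\dotro]\le F[\dotro]$; then $|\sigma_n^{1/2}|_{H^1}^2$ is bounded via (\ref{Lieb-sandwich}), hence $\nabla\sigma_n=2\sigma_n^{1/2}\nabla\sigma_n^{1/2}$ is bounded in $L^1$, and a cell-by-cell Poincar\'e inequality gives $\|\sigma_n-P_n\dotro\|_1=\|\sigma_n-P_n\sigma_n\|_1\le c\,D_n$. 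Combined with $\|P_n\dotro-\dotro\|_1\to0$ this shows $\sigma_n\to\dotro$ in $L^1$, and $L^1$-lower semicontinuity of $F$ forces $F[\dotro]\le\liminf_nF[\sigma_n]=L$. Finally $F^n[\dotro]=F[\sigma_n]=F[\Lambda_n\dotro]$ is just the definition of $\Lambda_n$.

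For the rates I would work cell by cell. On a convex cell $\bm R$ of diameter $\le D_n$ the Poincar\'e and Poincar\'e--Sobolev constants are universal --- this is exactly why convex cells were required. Applying such an inequality to $\dotro$ (using $\nabla\dotro=2\dotro^{1/2}\nabla\dotro^{1/2}$), summing over the cells, converting the von Weizs\"acker energy into $F[\dotro]$ via (\ref{Lieb-sandwich}) and the $L^3$ size of $\dotro$ into $F[\dotro]$ via the Sobolev inequality $\|\dotro^{1/2}\|_6\le c\,|\dotro^{1/2}|_{H^1}$ (cf.\ (\ref{L3-Sobolev})), I would obtain an $L^1$ rate $\|\dotro-P_n\dotro\|_1\le c\,D_nF[\dotro]^{1/2}$ and an $L^3$ bound $\|\dotro-P_n\dotro\|_3\le c\,F[\dotro]$; interpolating these then gives $\|\dotro-P_n\dotro\|_p\le c\,D_n^{(3-p)/2}F[\dotro]^{(1+p)/4}$ for $1\le p<3$, which is the statement for $\iota\pi_n\dotro$. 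For $\Lambda\pi_n\dotro$: any $\sigma_n\in\Lambda_n\dotro$ has $\pi_n\sigma_n=\pi_n\dotro$ and $F[\sigma_n]=F^n[\dotro]\le F[\dotro]$, so the same estimate applied to $\sigma_n$ yields $\|\sigma_n-P_n\dotro\|_p\le c\,D_n^{(3-p)/2}F[\dotro]^{(1+p)/4}$, and the triangle inequality with the previous bound finishes it.

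The hard part will be that last step: choosing the cellwise Poincar\'e/Sobolev route so the constants stay uniform over all cells of all scales, and organizing the summation over cells and the interpolation so that precisely the stated powers of $D_n$ and $F[\dotro]$ emerge. The awkwardness is the square-root nonlinearity $\dotro=(\dotro^{1/2})^2$, which forces one to juggle $\|\nabla\dotro^{1/2}\|_{L^2(\bm R)}$, $\|\dotro^{1/2}\|_{L^6(\bm R)}$ and the cell masses $\int_{\bm R}\dotro$ simultaneously and to keep the cross terms from degrading either exponent after the $\ell^p$ sum over cells.
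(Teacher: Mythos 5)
Your proposal is correct and follows essentially the same route as the paper: Jensen plus density of $C_c(\mathbb{R}^3)$ for the $X$-convergence of $\iota\pi_n\dotro$; nested constraint sets for monotonicity of $F^n$ with $L^1$ lower semicontinuity closing the gap; and the convex-cell Poincar\'e inequality combined with $\int|\nabla\dotro|\,d{\bm x}=2\int\dotro^{1/2}|\nabla\dotro^{1/2}|\,d{\bm x}\le cNF[\dotro]^{1/2}$ and the $L^3$ Sobolev bound, followed by H\"older interpolation, for the rates. The "hard part" you flag at the end is in fact painless, since a single global Cauchy--Schwarz bounds $\|\nabla\dotro\|_1$ after summing the cellwise Poincar\'e inequalities, so no cell-by-cell juggling of the square-root nonlinearity is needed.
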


Convergence $\| \iota \pi_n\dotro - \dotro\|_X \to 0$ 
is just a general property of $L^p$ functions under cell-averaging.
$L^2$ convergence is immediate from the description of $\pi_n$ as
an orthogonal projection.  Convergence in $L^p$ for other $p$ 
is proven by Jensen's inequality.
Although it is not exactly the sort of approximation coarse-graining was
designed to produce, it is not surprising that $\iota \pi_n\dotro$ tends to
$\dotro$ in $X$ as $n\to\infty$.  
It is much less obvious that $\iota \pi_n\dotro$ should converge to $\Lambda \pi_n \dotro$
since $\Lambda \pi_n\dotro$ is a moving target.   
We might just as well ask how spread out $\pi_n^{-1} \rho$ is, as a subset of $X$.  
Every element $\dotro \in \pi^{-1} \rho$ must match $\iota \rho$ down to scale
$\scale(\rho)$, but might differ arbitrarily much at ``subgrid'' scales.
However, there are limits imposed by $F[\dotro]$ through the 
lower bound in (\ref{Lieb-sandwich}).  Roughness at subgrid scales come 
at the cost of intrinsic energy, as intuition suggests.  
Using the bound with a Poincar\'e inequality enables us to prove a
bound on the $L^p$ norm $\|\iota \pi_n \dotro - \dotro\|_p$ 
for $1\le p < 3$ in terms of $F[\dotro]$:
\beq
\|\iota \pi_n \dotro - \dotro \|_p \le c_p D_n^2 (F[\dotro]/D_n^2)^{(1+p)/4}, \quad 1 \le p \le 3,
\label{pi_rho-convergence-rate}
\eeq
where $D_n$ is the minimum cell diameter in ${\Part}_n$.
Thus, the set of fine-grained densities in $\pi^{-1} \rho$ with low intrinsic energy,
$\pi^{-1} \rho \cap \{F < M\}$ is bounded in $L^p$ norm.  
Although the bound in (\ref{pi_rho-convergence-rate}) holds for
$p=3$, it does not show convergence in that case since the factors of $D_n$ disappear.
This clears the way to show that $\Lambda \pi_{n} \dotro \to \dotro$ in $L^p$.
For, it is a triviality that $F^1[\dotro] \le F^2[\dotro] \le F^3[\dotro] \le \cdots \le F[\dotro]$,
since each successive term represents a minimization with additional constraints 
(over a smaller set).  Since $F[\Lambda \pi_n\dotro] = F^n[\dotro]$, there is a common bound for
both $\|\dotro - \iota \pi_n\dotro\|_p$ and $\|\Lambda \pi_n\dotro - \iota \pi_n\dotro\|_p$ 
improving with $n$.   
We have noted that $F^n[\dotro]$ is increasing with $n$ {\em toward} 
$F[\dotro]$, but could conceivably fail to reach $F[\dotro]$ in the limit.  
But, since $F$ is $L^1$ lower semicontinuous and $\Lambda_n\dotro$ converges 
to $\dotro$ in $L^1$, the gap does close.  No intrisic energy somehow goes unaccounted 
for in the limit.  

We can broaden the scope of the above a little by considering
not only sequences going ``straight up'' the hierarchy of spaces along 
$\pi_n\dotro$, but also those converging toward that one.
A sequence $(\rho_j)_{j=1}^\infty$ in ${\X}^\infty$ is said to
converge to $\dotro\in\JN$ if $\|\dotro - \iota \rho_j\|_X \to 0$ as $j\to\infty$.
Since $\iota \pi_n\dotro$ is the $L^2$ projection of $\dotro$ onto $\iota{\X}^n$ and 
is at nonzero $L^2$ distance from it, $\dotro$ is at nonzero $X$-distance 
from $\iota{\X}^n$.  Thus, norm convergence of $\iota \rho_j$ to $\dotro$ implies that
$\scale(\rho_j) \to \infty$.  And therefore, 
since Thm. \ref{F-limit} has shown that $\iota \pi_n\dotro \to \dotro$,
the condition defining ``coarse-grained sequence converging to $\dotro$''
can equivalently be written as
$\| \iota (\rho_j - \pi_{n_j}\dotro) \|_X \to 0$, where
$n_j = \scale(\rho_j)$.

The convergence results of Thm. \ref{F-limit} can thus be generalized mildly to sequences 
$(\rho_j)_{j=1}^\infty \subset {\X}^\infty$ converging to $\dotro$, but the bounds
on $F^{n_j}[\rho_j]$  ($n_j \defeq \scale(\rho_j)$)
are not automatic in that case and must be imposed as hypotheses.  

\begin{cor}
\label{F-not-straight}
Let $(\rho_j)_{j=1}^\infty$ be a coarse-grained sequence converging to $\dotro$.
If $F_j \defeq F[\Lambda \rho_j] \le M < \infty$, then $\Lambda \rho_j \to \dotro$ 
in $L^p$, where $n_j = \scale(\rho_j)$.
If $\limsup F_j \le F[\dotro]$, then $F_j \to F[\dotro]$.  
\end{cor}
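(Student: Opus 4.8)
The plan is to reduce the corollary to Theorem \ref{F-limit} by comparing a general converging coarse-grained sequence $(\rho_j)$ with the ``straight-up'' sequence $(\pi_{n_j}\dotro)$, where $n_j=\scale(\rho_j)\to\infty$ (the divergence of $n_j$ was already established in the discussion preceding the corollary). First I would treat the $L^p$-convergence statement. By the triangle inequality,
\[
\| \Lambda_{n_j}\rho_j - \dotro \|_p \le \| \Lambda_{n_j}\rho_j - \iota_{n_j}\rho_j \|_p + \| \iota_{n_j}\rho_j - \dotro\|_p .
\]
The second term goes to zero by hypothesis (it is $\le \|\iota_{n_j}\rho_j-\dotro\|_X$, with $p\le 3$, up to the standing norm equivalence, and in fact this is precisely the definition of ``coarse-grained sequence converging to $\dotro$''). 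For the first term, $\iota_{n_j}\rho_j$ and $\Lambda_{n_j}\rho_j$ lie in the same equivalence class $\pi_{n_j}^{-1}\rho_j$, so $\iota_{n_j}\rho_j$ is their common cell-average and the estimate (\ref{pi_rho-convergence-rate}) applies with $\dotro$ replaced by any member of $\Lambda_{n_j}\rho_j$, giving
\[
\| \Lambda_{n_j}\rho_j - \iota_{n_j}\rho_j \|_p \le c_p\, D_{n_j}^{(3-p)/2}\, F[\Lambda_{n_j}\rho_j]^{(1+p)/4}
= c_p\, D_{n_j}^{(3-p)/2}\, F_j^{(1+p)/4}.
\]
Under the hypothesis $F_j \le M$ and with $D_{n_j}\to 0$ (since $n_j\to\infty$ and $D_n\to 0$), this bound tends to zero for $1\le p<3$. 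Hence $\Lambda_{n_j}\rho_j\to\dotro$ in $L^p$, which is the first assertion.

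For the energy statement, I would use lower semicontinuity exactly as in the proof of Theorem \ref{F-limit}. The inequality $F^{n_j}[\rho_j] = F[\Lambda_{n_j}\rho_j] = F_j$ holds by definition of $F^{n_j}$ and $\Lambda_{n_j}$, and since $F^{n_j}[\rho_j]$ is a constrained minimization of $F$ over $\pi_{n_j}^{-1}\rho_j$, and any $\dotro$ itself is not generally in that class, the comparison is not direct as in the straight-up case; so we argue through convergence instead. We have just shown $\Lambda_{n_j}\rho_j \to \dotro$ in $L^1$. Lower semicontinuity of $F$ with respect to the $L^1$ topology then gives
\[
F[\dotro] \le \liminf_{j\to\infty} F[\Lambda_{n_j}\rho_j] = \liminf_{j\to\infty} F_j .
\]
Combined with the hypothesis $\limsup_j F_j \le F[\dotro]$, this forces $\lim_j F_j = F[\dotro]$, as claimed.

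The main obstacle is the first $L^p$ estimate, i.e., controlling $\| \Lambda_{n_j}\rho_j - \iota_{n_j}\rho_j \|_p$ uniformly. This is where one must be careful that (\ref{pi_rho-convergence-rate}) is genuinely a statement about the spread of the equivalence class $\pi_{n_j}^{-1}\rho_j$ in terms of the intrinsic energy of its members and the cell diameter $D_{n_j}$ — the Poincaré/von Weizsäcker machinery behind Theorem \ref{F-limit} — rather than something tied to a particular target density $\dotro$. Once one accepts that the bound applies to any finite-energy member of the class (in particular the minimizer $\Lambda_{n_j}\rho_j$), the uniform bound $F_j\le M$ does the rest, and everything else is the triangle inequality plus lower semicontinuity already invoked in Theorem \ref{F-limit}. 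The case $p=3$ is excluded here for the same reason as in that theorem: the factors of $D_{n_j}$ cancel and no convergence is obtained.
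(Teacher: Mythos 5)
Your proposal is correct and follows essentially the route the paper intends: the Poincar\'e/von Weizs\"acker bound of Theorem \ref{F-limit} applied to the minimizers $\Lambda_{n_j}\rho_j$ at their own scale (so that $\iota_{n_j}\rho_j$ is their cell-average and the bound reads $c_p D_{n_j}^{(3-p)/2}F_j^{(1+p)/4}$), combined with the triangle inequality against $\|\iota_{n_j}\rho_j-\dotro\|_X\to 0$ and, for the energy statement, $L^1$ lower semicontinuity of $F$ squeezed against the hypothesis $\limsup F_j\le F[\dotro]$. The paper leaves this as a ``mild generalization'' of Theorem \ref{F-limit} without writing it out, and your write-up supplies exactly the intended details.
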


We now shift attention to convergence questions related to representing
potentials.
\begin{thm}
\label{V-limit}
$\dotro \in \JN$ is $X^*$-representable if and only if
there exists a coarse-grained sequence $(\rho_j)_{j=1}^\infty$ 
converging to $\dotro$
such that $\left\{ \|v^{n_j}[\rho_j]\|_{X^*} : 1 \le j < \infty \right\}$ 
is bounded ($n_j \defeq \scale(\rho_j)$).
In that case, $v^{n_j}[\rho_j] \to v[\dotro]$ in weak-$*$ sense and 
$F^{n_j}[\rho_j] \to F[\dotro]$.
Further, there exists such a sequence such that the convergence
${v}^{n_j}[\rho_j] \to v[\dotro]$ is in $X^*$-norm.
\end{thm}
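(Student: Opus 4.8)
\emph{Strategy.} The plan is to treat the two implications separately. The ``only if'' direction — a bounded coarse-grained sequence converging to $\dotro$ forces $\dotro$ to be $X^*$-representable — is a weak-$*$ compactness argument that simultaneously delivers all the ``in that case'' assertions; the converse is constructive, built by projecting a representing potential to each scale and then correcting with the Ekeland principle at that scale.

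\emph{The bounded-sequence direction.} Suppose $(\rho_j)$ converges to $\dotro$ with $\|v_j\|_{X^*}\le C$, where $v_j:=v^{n_j}[\rho_j]$, $n_j:=\scale(\rho_j)$. The first things I would record are the two bookkeeping facts that a $\Part_{n_j}$-measurable potential couples to a density only through its cell averages, so $E[v_j]=E^{n_j}[v_j]=0$ by the gauge (\ref{offset-convention}), and that $\Lambda_{n_j}\rho_j$ is a genuine \emph{fine-grained} ground-state density of $v_j$, i.e. $F[\Lambda_{n_j}\rho_j]+\langle v_j,\Lambda_{n_j}\rho_j\rangle=0$. Since $X=L^1\cap L^3$ is separable, extract $v_{j_k}\rightharpoonup^{*} w$ with $\|w\|_{X^*}\le C$. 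Because $\iota\rho_j\to\dotro$ in $X$-norm while $\|v_j\|_{X^*}$ stays bounded, $\langle v_{j_k},\rho_{j_k}\rangle=\langle v_{j_k},\iota\rho_{j_k}\rangle\to\langle w,\dotro\rangle$, hence $F^{n_{j_k}}[\rho_{j_k}]=-\langle v_{j_k},\rho_{j_k}\rangle\to-\langle w,\dotro\rangle$. Thus $F^{n_j}[\rho_j]$ is bounded, Corollary \ref{F-not-straight} gives $\Lambda_{n_j}\rho_j\to\dotro$ in $L^1$, and $L^1$-lower semicontinuity of $F$ forces $-\langle w,\dotro\rangle\ge F[\dotro]$. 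On the other hand $E$ is weak-$*$ upper semicontinuous and $E[v_j]=0$, so $E[w]\ge0$; together with $E[w]\le F[\dotro]+\langle w,\dotro\rangle\le0$ this squeezes $F[\dotro]+\langle w,\dotro\rangle=E[w]=0$, so $\dotro$ is a ground-state density of $w\in X^*$. Hence $\dotro$ is $X^*$-representable, $w=v[\dotro]$ by Hohenberg--Kohn and the gauge, and $F[\dotro]=\lim_k F^{n_{j_k}}[\rho_{j_k}]$; a subsequence-of-subsequences argument then promotes these to full-sequence limits $v^{n_j}[\rho_j]\rightharpoonup^{*} v[\dotro]$, $F^{n_j}[\rho_j]\to F[\dotro]$.

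\emph{The representability direction.} Put $v:=v[\dotro]$. I would show that the scale-$n$ energetic excess $\epsilon_n:=\Delta_n[\pi_n\dotro,\pi_n v]=F^n[\dotro]+\langle\pi_n v,\pi_n\dotro\rangle-E^n[\pi_n v]$ tends to $0$; then $\pi_n v$ is an $\epsilon_n$-subgradient of $F^n$ at $\pi_n\dotro$, and since $F^n$ is proper, lsc and convex on the Banach space $\X^n$ (carrying the $L^1\cap L^3$ norm of Section \ref{continuum-ideas}, with dual $(\X^n)^*$), the Ekeland principle (Thm.\ I.6.2 of \cite{ET}, used with $\lambda=\sqrt{\epsilon_n}$, exactly as around (\ref{Ekeland-variational})) produces an exactly-representable pair $(\hat\rho_n,v^n[\hat\rho_n])$ with $\|\hat\rho_n-\pi_n\dotro\|_{\X^n}\le\sqrt{\epsilon_n}$ and $\|v^n[\hat\rho_n]-\pi_n v\|_{X^*}\le\sqrt{\epsilon_n}+o(1)$ (the $o(1)$ being the gauge-fixing constant, which is small because $E^n[\pi_n v]\to0$). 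Then $\rho_j:=\hat\rho_j$ is a coarse-grained sequence converging to $\dotro$, since $\|\iota_j\hat\rho_j-\dotro\|_X\le\sqrt{\epsilon_j}+\|\iota_j\pi_j\dotro-\dotro\|_X\to0$ by Theorem \ref{F-limit}, and $\|v^j[\rho_j]\|_{X^*}\le\|v\|_{X^*}+o(1)$. The heart of this is $\epsilon_n\to0$. Using $F^n[\dotro]\nearrow F[\dotro]$, $\langle\pi_n v,\pi_n\dotro\rangle=\langle v,\iota_n\pi_n\dotro\rangle\to\langle v,\dotro\rangle$, and the trivial $E^n[\pi_n v]\le F^n[\dotro]+\langle\pi_n v,\pi_n\dotro\rangle$, everything reduces to $\liminf E^n[\pi_n v]\ge0$. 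Writing $E^n[\pi_n v]=F[\varsigma_n]+\langle v,\iota_n\pi_n\varsigma_n\rangle$ for the fine-grained minimiser $\varsigma_n$ (self-adjointness of cell-averaging on the pairing), a first high-level truncation $v=v_{3/2}+v_\infty$ with $\|v_{3/2}\|_{3/2}$ below the Lieb constant of (\ref{Lieb-sandwich})--(\ref{L3-Sobolev}) (legitimate since $\int_{\{|v|>T\}}|v|^{3/2}\to0$) makes $F[\varsigma_n]$ uniformly bounded via $F^n[\sigma]\ge c\|\iota_n\sigma\|_3-c'$; then $E^n[\pi_n v]\ge\bigl(F[\varsigma_n]+\langle v,\varsigma_n\rangle\bigr)+\langle v,\iota_n\pi_n\varsigma_n-\varsigma_n\rangle\ge\langle v,\iota_n\pi_n\varsigma_n-\varsigma_n\rangle$, and the last term $\to0$: its $v_\infty$-part is at most $\|v_\infty\|_\infty\|\iota_n\pi_n\varsigma_n-\varsigma_n\|_1\to0$ by the $p=1$ case of (\ref{pi_rho-convergence-rate}), and its $v_{3/2}$-part is killed by a \emph{second} truncation $v_{3/2}=a_T+b_T$ ($a_T$ bounded, $\|b_T\|_{3/2}$ small), again using $\|\iota_n\pi_n\varsigma_n-\varsigma_n\|_1\to0$ and $\sup_n\|\varsigma_n\|_3<\infty$.

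\emph{Norm convergence, and the main obstacle.} For the final assertion I would reuse that sequence: $\|v^j[\rho_j]-v\|_{X^*}\le\|v^j[\rho_j]-\pi_j v\|_{X^*}+\|\pi_j v-v\|_{X^*}$, whose first term is $\le\sqrt{\epsilon_j}+o(1)$, so the claim comes down to $\|\pi_j v-v\|_{X^*}\to0$. The $L^{3/2}$-component of $\pi_j v-v$ does converge in $L^{3/2}$-norm (conditional expectations converge in every $L^p$, $p<\infty$); the essentially-bounded component I would handle by splitting $\mathbb R^3$ into a large ball, where $\pi_j v_\infty\to v_\infty$ in $L^{3/2}$ by dominated convergence, and its complement, whose far-field error must be absorbed into the $L^{3/2}$ slot. \emph{I expect this last step to be the real obstacle}: an arbitrary element of $L^{3/2}+L^\infty$ admits no $X^*$-norm approximation by $\Part_n$-measurable functions, so making it work is precisely where one must exploit that $v$ is a \emph{representing} potential — and, if need be, how the $\Part_n$ are nested — rather than a generic dual element.
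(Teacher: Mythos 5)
Your argument for the ``if'' direction is correct, and it takes a cleaner route than the paper's. The paper replaces the coarse pairs via the Ekeland principle in order to land in $\graph\partial F$ and then invoke its (norm)$\times$(weak-$*$) closedness. You bypass Ekeland entirely: extract a weak-$*$ convergent subsequence $v_{j_k}\to w$; use boundedness of $\|v_j\|_{X^*}$ together with strong convergence of $\iota\rho_j$ to get $\langle v_{j_k},\rho_{j_k}\rangle\to\langle w,\dotro\rangle$; then Corollary~\ref{F-not-straight} gives $\Lambda_{n_j}\rho_j\to\dotro$ in $L^1$, and the squeeze $0\ge F[\dotro]+\langle w,\dotro\rangle\ge E[w]\ge 0$ closes the argument via $L^1$ lower semicontinuity of $F$ and weak-$*$ upper semicontinuity of $E$. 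This yields everything the paper proves in this direction with one fewer tool, and the subsequence-of-subsequences step upgrades to the full-sequence limits in the usual way.

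Your ``only if'' direction is also essentially right, and here the departure from the paper is more than stylistic. The paper discretizes $v[\dotro]$ by asserting that compactly supported continuous functions are dense in $X^*=L^{3/2}+L^\infty$, hence that $\cup_n(\X^n)^*$ is dense, and then invokes Lipschitz continuity of $E$. That density assertion is false: with $u\equiv1$ and any $f\in C_c({\mathbb R}^3)$, for any split $1-f=a+b$ with $a\in L^{3/2}$ and $b\in L^\infty$, the set $\{|a|\le\delta\}\cap(\mathrm{supp}\,f)^c$ has infinite measure for every $\delta>0$, forcing $\|b\|_\infty\ge1-\delta$ and hence $\|1-f\|_{X^*}\ge1$. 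Your replacement --- take $v'=\pi_n v$ and prove directly that $\Delta_n[\pi_n\dotro,\pi_n v]\to0$, bounding $F[\varsigma_n]$ through a Lieb-type coercivity estimate preserved by $\pi_n$ (Jensen), and then controlling $\langle v,\iota_n\pi_n\varsigma_n-\varsigma_n\rangle$ via the Poincar\'e bound~(\ref{pi_rho-convergence-rate}) together with a second truncation for the $L^{3/2}$ piece --- sidesteps the issue and cleanly produces a coarse-grained sequence with $\|v^{n_j}[\rho_j]\|_{X^*}$ bounded. This is a genuine repair.

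The gap you flag in the $X^*$-norm-convergence clause is real, and it is in fact the same gap that afflicts the paper's own proof of that clause. You are right that $\|\pi_n v - v\|_{X^*}\to0$ is generally false --- the $L^\infty$ component of a generic $v$ admits no $X^*$-norm approximation by $\Part_n$-measurable functions --- and since the paper's proof of this assertion rests squarely on the same false density claim, neither argument establishes it. Your instinct that the proof must exploit the fact that $v$ is a representing potential, rather than an arbitrary element of $X^*$, is the natural place to look, but as it stands this part remains open in both your proposal and the paper.
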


Unfortunately, the tools do not seem to be at hand to say anything significant 
beyond the context of $X^*$-representability.  However, in the latter context
the closedness of the graph of $\partial F$ and the Ekeland variational 
principle, both discussed in Section \ref{fine-grained}, are powerful tools.
Suppose $(\rho_j)_{j=1}^\infty$ is a coarse-grained sequence converging to $\dotro \in \JN$,
with representing potentials $v^j = v[\Lambda \rho_j]$.
If $(v^j)_{j=1}^\infty$ is merely bounded in $X^*$-norm, then $\dotro$ is $X^*$-representable
and $v^j \to v[\dotro]$ in the weak-* topology.  At first, this might look like
a straight transcription of the closedness of $\graph \partial F$ discussed
in Section \ref{fine-grained}.  If it were the case that 
$\Lambda_{n_j} \rho_j$ converged to $\dotro$ in $X$, that would be correct.  
For, bounded subsets of $X^*$ are weak-* compact, and any weak-* cluster
point $v$ would be in $\partial F[\dotro]$.  Since the Hohenberg-Kohn
theorem guarantees that $\partial F[\dotro]$ is a singleton modulo constants,
the entire sequence would have to converge.  However, it does not
follow from $\iota \rho_j\to \dotro$ that $\Lambda_{n_j} \rho_j \to \dotro$.
This circumstance requires using the Ekeland variational principle.
Showing that $\dotro$ is almost a ground state of $v^j$ means that the
sequence $(\Lambda_{n_j} \rho_j,v^j)_1^\infty$ can be replaced by a new one
$(\tilde{\rho}_j,\tilde{v}^j)_1^\infty$ such that $\iota \tilde{\rho}_j \to \dotro$ and 
$\|\tilde{v}^j - v^j\|_{X^*}\to 0$.  Closedness of $\graph \partial F$
then applies in the ordinary way.

In the other direction, suppose that $\dotro\in\JN$ is $X^*$-representable.
Then, according to the theorem, this is accurately reflected in the 
coarse-grained hierarchy: there is a coarse-grained sequence 
$(\rho_j)_{j=1}^\infty$ converging to $\dotro$, such that $v^j \to v[\dotro]$
(in this case we get norm convergence of the potentials).  
The proof of this direction is perhaps a little more interesting.
To prove this, we find an approximation to $v[\dotro]$ in $({\X}^n)^*$
for which $\dotro$ is nearly a ground state.  Then we apply the 
Ekeland variational principle, but in ${\X}^n$, {\em not} in $X$.
This produces a sequence of pairs $(\rho_j,v^j)_1^\infty$ with the desired property.
The disappointing aspect of this half of the theorem is that we
cannot show that $v^j[\pi_{n_j}\dotro] \to v[\dotro]$.  But, there was
no good reason to suppose that true, anyway.

We finish this section with some reflection related to the computational 
and physical appropriateness and significance of topologies on ${\X}^\infty$.
In preparation, we have a grab-bag theorem about the multifunction 
$\Lambda_n$.
A remarkable aspect of the bound (\ref{pi_rho-convergence-rate})
is that it can be used to show that $\Lambda_n$ preserves compactness.
\begin{thm}
\label{Lambda-nice}
$\Lambda_n$ takes values in convex, $L^1$-compact sets.
If $K$ is $L^1$-compact in ${\X}^n$, then $\Lambda_n K$ is $L^1$-compact.
$\Lambda_n$ has a  ($L^1\times L^1$) closed graph and is
upper semicontinuous (meaning, given $\rho$ and $\epsilon > 0$, there is
there is $\delta > 0$ such that $\|\rho^\prime - \rho\|_1 < \delta$
implies that all of $\Lambda_n \rho^\prime$ is within $\epsilon$ of 
$\Lambda_n \rho$.)
\end{thm}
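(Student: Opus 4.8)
The plan is to reduce all four assertions to a single precompactness lemma, and then lean on the $L^1$-lower semicontinuity of the fine-grained $F$ (property (b) of Section \ref{fine-grained}) and the $L^1$-continuity of $F^n$ on $({\X}^n)_N^+$, which is the single-scale continuity theorem (Theorem \ref{Continuity-Thm}) applied to $\Part_n$. Convexity of the values is immediate: $\Lambda_n\rho$ is the set of minimizers of the convex functional $F$ over the convex (affine) set $\pi_n^{-1}\rho\subset X$, and when $\Lambda_n\rho=\emptyset$ it is vacuously convex. The engine for everything else is:

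\medskip
\noindent\emph{Key Lemma.} If $S\subset({\X}^n)_N^+$ is $L^1$-bounded and $L^1$-tight, then for every $M<\infty$ the set $\pi_n^{-1}(S)\cap\{F\le M\}$ is $L^1$-precompact in $X$.
\medskip

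\noindent I would prove this by verifying the three hypotheses of the Kolmogorov--Riesz compactness criterion in $L^1({\Bbb R}^3)$. $L^1$-boundedness is trivial, since $\pi_n$ is an $L^1$-contraction (it replaces $\dotro$ by its cell-averages) and hence $\|\dotro\|_1=\|\pi_n\dotro\|_1$. Tightness transfers from $S$: membership $\pi_n\dotro\in S$ pins down every cell-average of $\dotro$, so $\int_{\Sigma^c}\dotro=\int_{\Sigma^c}\iota_n(\pi_n\dotro)$ for $\Sigma$ any union of whole cells, and letting $\Sigma$ exhaust ${\Bbb R}^3$ (legitimate because cell diameters do not exceed $D_n$) the right-hand side is controlled uniformly by the tightness of the $L^1$-set $\iota_n S$. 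The third and decisive hypothesis---$L^1$-equicontinuity of translation---is exactly what the lower bound in (\ref{Lieb-sandwich}) buys: since $\{F\le M\}\subset\JN$, every $\dotro$ in question is a genuine density with $u:=\dotro^{1/2}\in H^1$, $\|u\|_2^2=N$, and $\|\nabla u\|_2^2=|\dotro^{1/2}|_{H^1}^2\le M/c$, so for a shift ${\bm h}$
\[
\|\dotro(\cdot+{\bm h})-\dotro\|_1
=\bigl\|(u(\cdot+{\bm h})-u)(u(\cdot+{\bm h})+u)\bigr\|_1
\le\|u(\cdot+{\bm h})-u\|_2\,\|u(\cdot+{\bm h})+u\|_2
\le 2\sqrt{NM/c}\,|{\bm h}|,
\]
uniformly over the whole family. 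I expect this translation estimate to be the only genuinely substantive step, and the subtlety to flag is that $L^1$-boundedness, tightness and uniform integrability alone (the last being easy from $\dotro\in L^3$ via (\ref{L3-Sobolev})) yield only \emph{weak} $L^1$ sequential compactness; the norm precompactness we actually need comes solely from the translation bound, i.e.\ from the ``subgrid roughness costs kinetic energy'' mechanism. (Equivalently, an $H^1$ bound on $\dotro^{1/2}$ by itself does not give $L^2$-precompactness of $\dotro^{1/2}$ on ${\Bbb R}^3$, since mass can escape to infinity; the cell-average constraint is precisely what restores the missing tightness.)

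Everything else then follows by routine arguments. First, $\Lambda_n\rho=\pi_n^{-1}\rho\cap\{F\le F^n[\rho]\}$ is $L^1$-closed ($\pi_n$ continuous, $F$ lower semicontinuous) and is contained in the precompact set of the Lemma with $S=\{\rho\}$ and $M=F^n[\rho]$, hence is $L^1$-compact. Next, for the closed graph: if $\rho_j\to\rho$ in ${\X}^n$, $\dotro_j\in\Lambda_n\rho_j$, and $\dotro_j\to\dotro$ in $L^1$, then $\pi_n\dotro=\lim\pi_n\dotro_j=\rho$, while lower semicontinuity of $F$ and continuity of $F^n$ give $F[\dotro]\le\liminf F[\dotro_j]=\lim F^n[\rho_j]=F^n[\rho]$; combined with $F[\dotro]\ge F^n[\rho]$ (the defining infimum) this forces $\dotro\in\Lambda_n\rho$. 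For $K$ compact, $F^n$ is continuous, hence bounded on $K$ by some $M_K$, so $\Lambda_n K\subset\pi_n^{-1}(K)\cap\{F\le M_K\}$, precompact by the Lemma ($S=K$ is tight because it is compact), and $\Lambda_n K$ is $L^1$-closed by the closed-graph property together with closedness of $K$ and continuity of $\pi_n$; a closed subset of a precompact set is compact. Finally, upper semicontinuity at $\rho\in({\X}^n)_N^+$: if it failed, there would be $\epsilon>0$, $\rho_j\to\rho$, and $\dotro_j\in\Lambda_n\rho_j$ with $\mathrm{dist}(\dotro_j,\Lambda_n\rho)\ge\epsilon$; but $S=\{\rho_j\}\cup\{\rho\}$ is compact and $F^n$ is bounded on it, so by the Lemma the $\dotro_j$ lie in a fixed precompact set, a subsequence $\dotro_{j_k}\to\dotro$ exists, and the closed graph puts $\dotro\in\Lambda_n\rho$, contradicting $\mathrm{dist}(\dotro_{j_k},\Lambda_n\rho)\ge\epsilon$. (At $\rho\notin({\X}^n)_N^+$ it holds trivially, since $({\X}^n)_N^+$ has open complement, so $\Lambda_n\rho$ and $\Lambda_n\rho'$ for nearby $\rho'$ are all empty.)
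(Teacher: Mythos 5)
Your proposal is correct, and for the derivation of the four assertions from a precompactness lemma it follows essentially the same architecture as the paper: convexity is immediate, the closed graph comes from lower semicontinuity of $F$ plus continuity of $F^n$, compactness of $\Lambda_n K$ is a closed subset of a precompact set, and upper semicontinuity is the same compactness-plus-closed-graph contradiction. Where you genuinely diverge is in the proof of the Key Lemma itself. The paper establishes total boundedness of $\pi_n^{-1}K\cap\{F\le c\}$ by descending to a finer scale $m$, using the Poincar\'e-inequality bound of Theorem \ref{F-limit} to place every low-energy density within $\epsilon$ of $\iota_m$ of a scale-$m$ coarse-graining, and then building a finite net by hand from the finite-dimensionality of the truncated scale-$m$ problem. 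You instead invoke the Kolmogorov--Riesz criterion, obtaining the decisive translation-equicontinuity estimate $\|\dotro(\cdot+{\bm h})-\dotro\|_1\le 2\sqrt{NM/c}\,|{\bm h}|$ directly from the von Weiszacker lower bound in (\ref{Lieb-sandwich}) via the factorization $\dotro(\cdot+{\bm h})-\dotro=(u(\cdot+{\bm h})-u)(u(\cdot+{\bm h})+u)$ with $u=\dotro^{1/2}\in H^1$; your tightness transfer through the cell-average constraint is also sound (and correctly requires $\dotro\ge0$, which $\{F\le M\}\subset\JN$ supplies). Both arguments exploit the same mechanism --- subgrid roughness costs kinetic energy, while the cell averages pin down the mass distribution --- but your route is more self-contained and yields an explicit Lipschitz modulus in ${\bm h}$, whereas the paper's route reuses the multiscale machinery already built for Theorem \ref{F-limit} and delivers total boundedness in the stronger $L^1\cap L^p$ norms for $1\le p<3$ (your version gives only $L^1$, which suffices for the theorem as stated and could be upgraded by interpolating with the uniform $L^3$ bound from (\ref{L3-Sobolev})). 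Your closing remark that boundedness, tightness and uniform integrability alone give only weak sequential compactness, so that the translation bound is the indispensable ingredient, is exactly the right point to flag.
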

\begin{proof}
That $\Lambda_n\rho$ is convex is trivial.  
That it is $L^1$ compact is a special case of the compactness of 
$\Lambda_n K$, shown below.  
Closedness of the graph of $\Lambda_n$ is a simple consequence of
the $L^1$ lower semicontinuity of $F$.
For, suppose $(\rho_j, \dotro_j)_{j=1}^\infty$
is an $L^1\times L^1$ Cauchy sequence in $\graph \Lambda_n$.
Then $\rho_j \to \rho \in ({\X}^n)_N^+$ and $\dotro_j \to \dotro \in {X}_N^+$.
Also, $F[\dotro] \le \lim_{j\to\infty} F[\dotro_j] = F^n[\rho]$, where the
inequality follows from lower semicontinuity of $F$ and the equality from
continuity of $F^n$.  Since $\dotro \in \pi_n^{-1} \rho$, 
$F[\dotro] \ge F^n[\rho]$, so that actually $F[\dotro] = F^n[\rho]$.  
In other words, $\dotro \in \Lambda_n \rho$, as was to be shown.

To see that $\Lambda_n$ preserves compactness, let $K \subset ({\X}^n)_N^+$ be $L^1$ compact.
Since $F^n$ is continuous, it is bounded above on $K$, say by $c$.   
Lemma \ref{Lambda-nice} in the next subsection and the previous paragraph 
then show $\Lambda_n K$ to be a closed totally bounded set, hence compact.

Suppose upper semicontinuity failed at $\rho$.  Then, there would be
a sequence $\rho_j \to \rho$ and $\dotro_j \in \Lambda_n \rho$ such
that every $\dotro_j$ was at $L^1$ distance greater than
$\epsilon$ from $\Lambda_n\rho$.  But, $\{\rho_j : 1 \le j < \infty \} \cup \{\rho\}$
is compact.  So $\{\dotro_j: 1 \le j < \infty\}$ is relatively compact, and
therefore by closedness of $\Lambda_n$ contains a subsequence converging to
an element of $\Lambda_n\rho$.  The contradiction proves upper semicontinuity.
\end{proof}
This theorem shows that, for compact sets $K\subset{\X}^n$, not only
is $\Lambda_n K$ not very far from $K$ (Theorem \ref{F-limit}), it is also
not much larger.

There are a couple of ways to look at the 
a single space ${\X}^n$, the $L^1$ or $L^1\cap L^3$ 
metrics are pullbacks via $\iota$ from $X$.

One way to look at the $L^1$ or $L^1\cap L^3$ metrics on a single space 
${\X}^n$ is to recognize that they correspond to the same topology, namely 
the weakest topology which makes the particle number in each cell of $\Part_n$ 
and the total particle number continuous.
This topology seems natural when considering coarse-grained densities.
Alternatively, we recognize them as pullbacks via $\iota$ or $\pi_n^{-1}$
from $X$.  While the connection with $\iota$ is fairly obvious, it is
also the case that
\[
d_1^{(n)}(\rho,\rho^\prime) \defeq
\| \rho,\rho^\prime \|_1 = 
\inf\{\|\dotro - \dotro^\prime \|_1 : \dotro \in \pi_n^{-1} \rho,
\, \dotro^\prime \in \pi_n^{-1} \rho^\prime\},
\]
and similarly for the $X$ ($L^1\cap L^3$) norm.
That is, the distance is just the distance between the sets (equivalence classes)
$\pi_n^{-1} \rho$ and $\pi_n^{-1} \rho^\prime$.

On the other hand, we have some tendency to see $\rho$ also as a kind
of surrogate for the low-intrinsic-energy densities in $\pi_n^{-1}\rho$,
specifically $\Lambda_n\rho$.  After all, that is what the definition
of $F^n$ was all about.  This inclines us to consider the metric
\[
d_\Lambda^{(n)}(\rho,\rho^\prime) = \inf\{\|\dotro - \dotro^\prime \|_1 : \dotro \in \Lambda_n\rho,
\, \dotro^\prime \in \Lambda_n\rho^\prime\}
\]
on $({\X}^n)_N^+$.  (Only $({\X}^n)_N^+$ can be metrized this way since $\Lambda$ 
is null-valued outside.)  It is clear that this is at least as strong as the
$d_1$ metric.   
What is interesting is that it is {\em topologically equivalent} to $d_1^{(n)}$.
In other words, $d_\Lambda^{(n)}(\rho,\rho^\prime)$ is continuous with respect to
the $d_1$ topology.  This conclusion follows from Thm. \ref{Lambda-nice}.


Moving from the single-scale of ${\X}^n$ to the multi-scale setting of ${\X}^\infty$,
uncovers a new oddity. 
Earlier in this section, for purposes of establishing the coarse-grained
model as a good approximation of the fine-grained theory, distances between 
coarse-grained densities at different scales were computed as the $X$-norm distance between
their images under $\iota$ or $\pi^{-1}$.
Arguably, this is somewhat too crude.  Certainly, it results in
some coarse-grained densities at different scales being at zero 
distance from one another, which does not seem a desirable outcome.
Likely, the analogous extension of $d_\Lambda^{(n)}$ does not have
that problem, but the uncertainty just points to a different difficulty,
which is that of computing $d_\Lambda$.

From a computational perspective, a coarse-grained density is a description 
with a degree of precision related to the scale and it seems that 
the $\scale$ function should be continuous.
An example of a metric which achieves that is
\beq
d(\rho,\rho^\prime) = \|\iota \rho - \iota\rho^\prime\|_X + 
|\scale(\rho)-\scale(\rho^\prime)|.
\eeq
It should be noted that all the results of Section \ref{CG}
are true for ${\X}^\infty$ under this metric.
There is no mathematical depth whatever to this metrization of ${\X}^\infty$,
and it is effectively the same as just treating the different scales as
incomparable.  But maybe the mere construction of ${\X}^\infty$ will cure 
us of the idea that some fundamental length scale is inherent in coarse-graining {\it per se}.

\subsection{deferred proofs}
\label{continuum-proofs}


\noindent{\bf  Proof of Theorem \ref{F-limit}.}
The convergence of $\iota \pi_n\dotro$ to $\dotro$ in $X$ norm 
hinges on Jensen's inequality. 
Since $y \mapsto |y|^p$ is convex for $1 \le p < \infty$,
\[
\left( \frac{1}{|\Omega|} \int_\Omega \dotro \, d{\bm x} \right)^p
\le 
\frac{1}{|\Omega|}\int_\Omega \dotro^p \, d{\bm x},
\]
for each cell $\Omega$ of a partition (volume $|\Omega|$).
Summing the left-hand side over cells yields $\|\pi\dotro\|_p^p$,
and summing the right-hand side yields $\|\dotro\|_p^p$.
Thus, the operator $\dotro \mapsto \dotro - \iota \pi_n\dotro$ is $L^p$ bounded,
with bound 2.  Since $C_c({\Bbb R}^3)$ (compactly supported continuous 
functions) is dense in $L^p$, we can find $f\in C_c({\Bbb R}^3)$ with
$\|f - \dotro\|_p < \epsilon/2$, so that 
$\|(f-\iota \pi_n f) - (\dotro - \iota \pi_n \dotro)\|_p < \epsilon$ for all $n$.  But, 
$f$ is {\em uniformly} continuous, being compactly supported, so that
$\|f - \iota \pi_n f\|_p \to 0$ as $n\to\infty$.  Thus, $\|\iota \pi_n\dotro - \dotro\|_X \to 0$.

A quantitative estimate of $\|\Lambda \pi_n\dotro - \dotro\|_p$ and $\|\iota \pi_n\dotro - \dotro\|_p$
hinges on an $L^1$ Poincar\'e\cite{Adams} (or Poincar\'e-Wirtinger\cite{Attouch}) 
inequality requiring convexity\cite{Acosta-Duran03,Bebendorf03}, but not
regularity, of the domain.
If $f$ is a function such that it and its (distributional) gradient $\nabla f$ are integrable 
over the convex bounded region $\Omega$ with diameter $\mathrm{Diam}(\Omega)$
(i.e., $f\in W^{1,1}(\Omega)$),
\[ 
\int_\Omega |f - \langle{f}\rangle_\Omega| \, d{\bm x} \le \frac{\pi}{2} \, \mathrm{Diam }(\Omega)
\int_\Omega |\nabla f| \, d{\bm x},
\] 
where $\langle{f}\rangle_\Omega$ is the mean of $f$ over $\Omega$.
Applying this to $\dotro$ and $\Lambda \pi_n\dotro$ on each cell and summing the
results, 
\beq 
\int |\dotro - \iota \pi_n \dotro| \, d{\bm x} \le \frac{\pi}{2} \, {D_n} \int |\nabla \dotro| \, d{\bm x},
\eeq 
and similarly for $\Lambda \pi_n\dotro$, more precisely (as should be understood for such
a locution) for each element of $\Lambda \pi_n\dotro$.
To make use of this, we bound $\|\nabla\dotro\|_1$ by using inequalities 
(\ref{Lieb-sandwich},\ref{L3-Sobolev}) and the Cauchy-Schwarz inequality, as
\begin{eqnarray}
\int |\nabla \dotro| \, d{\bm x} &= &
\int 2\dotro^{1/2} |\nabla \dotro^{1/2}| \, d{\bm x}
\nonumber \\
& \le &
2 \left( \int \dotro\, d{\bm x}\right)^{1/2}
\left( \int |\nabla \dotro^{1/2}|^2 \, d{\bm x} \right)^{1/2}
\nonumber \\ & \le & c^\prime N F[\dotro]^{1/2}.
\end{eqnarray}
Therefore, since all the $F^n[\dotro] = F[\Lambda \pi_n\dotro]$ are bounded above by $F[\dotro]$,
both $\|\dotro - \iota \pi_n \dotro\|_1$ and $\|\Lambda \pi_n\dotro - \iota \pi_n\dotro\|_1$
are bounded by $c D_n F[\dotro]^{1/2}$.
We obtain not only convergence, but a bound on the rate of convergence.

To extend this to $L^p$ for $1<p<3$, we make use of a H\"older inequality. 
For $f,g \in L^1\cap L^3$,
\begin{eqnarray}
\|f - g \|_p 
& \le &
\|f - g\|_3^{{(p-1)}/{2}} \|f - g\|_1^{{(3-p)}/{2}} 
\nonumber \\
& \le &
2^{{(p-1)}/{3}} \left(\|f\|_3^3 + \|g\|_3^3\right)^{{(p-1)}/{6}} \|f - g\|_1^{{(3-p)}/{2}}. 
\nonumber
\end{eqnarray}
Since each $\|\Lambda \pi_n \dotro\|_3$ as well as $\|\dotro\|_3$ is bounded by a 
constant times $F[\dotro]$, combining this with $L^1$ convergence of
$\Lambda \pi_n \dotro$ {to} $\dotro$ implies ${L^p}$ convergence as well.

Finally, we prove upward convergence of $F^n[\dotro]$ to $F[\dotro]$.
Since $F^{n}[\dotro]$ represents a minimization with increasing constraints as $n$ increases, 
$F^{1}[\dotro] \le F^{2}[\dotro]\le F^3[\dotro]\le \cdots \le F[\dotro]$ is trivial.
On the other hand, as discussed in Section \ref{fine-grained}, $F$ is $L^1$ lower 
semicontinuous.  Since $F^n[\dotro] = F[\Lambda \pi_n\dotro]$ and we have shown
that $\Lambda \pi_n\dotro$ converges to $\dotro$ in $L^1$, it follows that
$\liminf_n F^n[\dotro] \ge F[\dotro]$.  Thus, $F^n[\dotro] \nearrow F[\dotro]$.  

\qedsymbol

\noindent{\bf Proof of Theorem \ref{V-limit}.}

{\it ``If'' direction:}
To simplify notation, write $n_j$ for $\scale(\rho_j)$ and $v_j$ for $v^{n_j}[\rho_j]$.
Also, the $v^{n_j}[\rho_j]$ given by hypothesis satisfy our convention 
for fixing the constant offset in representing potentials
[Eq. (\ref{offset-convention})], but 
we otherwise lift that convention.  It will be restored at the end.

The idea is to show that $\dotro$ nearly attains the ground state energy of
$v_j$ for large $j$, so that the Ekeland variational principle can be called in.
First, we show that for $j$ large enough,
the potential energy of $\dotro$ in $v_j$ is not much more than that of
$\Lambda_{n_j} \rho_j$.  Since $v_j$ is ${\Part}_{n_j}$-measurable, 
it cannot distinguish $\dotro$ from $\pi_{n_j}\dotro$ or $\rho_j$ from $\Lambda \rho_j$.
Thus,
\[
\langle v_j, \dotro \rangle
= \langle v_j, \pi_{n_j} \dotro \rangle
= \langle v_j, \Lambda \rho_j \rangle + \langle v_j, (\pi_{n_j} \dotro - \rho_j) \rangle,
\]
so that, by hypothesis of boundedness of the potentials, 
for some $M$,
\beq
|\langle v_j, \dotro \rangle - \langle v_j, \Lambda_{n_j} \rho_j \rangle|
\le M \| \pi_{n_j} \dotro - \rho_j \|_X \to 0, \, \mathrm{as}\,\, j\to\infty.
\label{look-alike-to-vj}
\eeq
Second, we examine the intrinsic energy.  
Since $\Lambda_{n_j} \rho_j$ are the ground state densities of $v_j$,
and since as just shown, 
$\Lambda_{n_j} \rho_j$ and $\dotro$ look increasingly alike to $v_j$
as $j\to \infty$,
$F^{n_j}[\rho_j]$  cannot be much larger than $F[\dotro]$ for large $j$,
or eventually $\dotro$ would have lower total energy in $v^j$ than $\Lambda \rho_j$.
That is, 
\[
\limsup_{j\to\infty} F[\Lambda_{n_j} \rho_j] \le F[\dotro].
\]
Thus, by Corollary \ref{F-not-straight},
\beq
\lim_{j\to\infty} F[\Lambda_{n_j} \rho_j] = F[\dotro].  
\label{F-rhoj-to-Frho}
\eeq

It follows from (\ref{look-alike-to-vj}) and (\ref{F-rhoj-to-Frho})
that, given $\epsilon$, $F[\dotro] + \langle v_j, \dotro\rangle \le E[v_j] + \epsilon^2$
for $j \ge j(\epsilon)$.
Appealing to Ekeland's variational principle, conclude that there exists 
$(\tilde{\dotro}_\epsilon, \tilde{v}_\epsilon) \in \graph \partial F$ 
satisfying 
\[
\| \tilde{v}_\epsilon - v_{j(\epsilon)} \|_{X^*} \le \epsilon \quad \mbox{\rm and} \quad
\| \tilde{\dotro}_\epsilon - \dotro \|_X \le \epsilon.
\]

Suppose for the moment that $v_j$ converges to $v$ weak-$*$ in $X^*$. 
In that case, for $\epsilon \to 0$ $\tilde{v}_\epsilon \to v$ weak-$*$,
while $\tilde{\dotro}_{\epsilon} \to \dotro$ in $X$-norm.
From the closedness of $\graph \partial F$, it would follow that 
$\dotro$ is a ground state density of $v$. 

But, because $\{v_j\}$ is bounded in $X^*$, it is weak-$*$ compact,
hence has weak-$*$ accumulation points.  For any such, 
say $v$, we can extract a subsequence of $\{v_j\}$ converging 
to it weak-$*$ and the argument above applies.  On the other hand, the 
Hohenberg-Kohn theorem guarantees uniqueness of the representing potential for $\dotro$.
Putting it together, we conclude that the original sequence $v_j = v^{n_j}[\rho_j]$ 
converges weak-$*$ to $v$ and $\dotro$ is a ground state density of it.

Now check that $v$ satisfies the constant-offset convention, Eq. (\ref{offset-convention}).  
From weak-$*$ convergence, $\langle v_j, \dotro \rangle \to \langle v, \dotro \rangle$.
Together with the second displayed equation, this shows that
$\langle v_j, \rho_j \rangle \to \langle v, \dotro \rangle$.
On the other hand, $F^{n_j}[\rho_j] \to F[\dotro]$, so
$\langle v,\dotro\rangle + F[\dotro] = 0$.
\medskip

{\it ``Only if'' direction:} 
Let the tolerance $\epsilon > 0$ be given.
Compactly supported continuous functions are dense in
$X^*$, and any such can be uniformly approximated by some
element of $\cup_{n=0}^\infty ({\X}^n)^*$, so the latter is also 
dense in $X^*$.  Thus, there exists
$n$ and $v^\prime \in ({\X}^{n})^*$ with $\| v^\prime - v \|_{X^*}$ as small as desired.
$E[v]$ is Lipschitz continuous\cite{Lieb83} 
with respect to $X^*$-norm, so $n$ and $v^\prime$ can be chosen such that
$F[\dotro] + \langle v^\prime,\dotro\rangle = E[v] + \langle v^\prime - v,\dotro\rangle$ 
is a close to $E[v^\prime]$ as desired.
In fact, since  $v^\prime \in ({\X}^{n})^*$ and $F^n[\dotro] \to F[\dotro]$ as $n\to\infty$
by Theorem \ref{F-limit}, 
$n = n(\epsilon)$ and $v^\prime \in ({\X}^{n})^*$ can be chosen so that
$\| v^\prime - v \|_{X^*} < \epsilon/2$ and
\beq
F^n[\pi_n\dotro] + \langle v^\prime,\pi_n \dotro\rangle - E[v^\prime] < \epsilon^2/4.
\eeq

As in the ``if'' direction, appeal to Ekeland's variational principle again, but this
time using the pair $[{\X}^n,({\X}^n)^*]$ together with $F^n$, to conclude to
that there exists a pair 
$(\tilde{\rho},\tilde{v}) \in \graph \partial F^n \subset {\X}^n \times ({\X}^n)^*$ 
such that 
\[
\|\tilde{v} - v\|_{X^*} \le \epsilon
\]
and
\[
\|\tilde{\rho} - \pi_n \dotro \|_{X} \le \epsilon.
\]
Taking a sequence of values of $\epsilon$ tending to zero then produces a sequence 
as described in the statement of the Theorem.
\qedsymbol

\begin{lem}
Let $K \subset {\X}^n$ be toally bounded with respect to $L^1\cap L^p$ for $1\le p < 3$.  
Then for any $0 < c < \infty$, $\pi_n^{-1} K \cap \{F \le c\}$ is also totally bounded with 
respect to $L^1\cap L^p$.
\end{lem}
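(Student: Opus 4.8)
The plan is to split each $\dotro\in\pi_n^{-1}K\cap\{F\le c\}$ into its restriction to a large but \emph{fixed} bounded region $\Sigma$ (a union of cells of $\Part_n$) and a small tail: the tail will be controlled uniformly using the total boundedness of $K$ together with a Sobolev bound, and the bounded piece by a local Rellich compactness argument that is available precisely because $p<3$. First I would record the uniform bounds implied by $F[\dotro]\le c$. Since $\dotro\in\dom F=\JN$ we have $\int\dotro\,d{\bm x}=N$; the lower bound in (\ref{Lieb-sandwich}) gives $\int|\nabla\dotro^{1/2}|^2\,d{\bm x}\le M$ for a constant $M$ depending only on $c$, so $\dotro^{1/2}$ ranges over a norm-bounded subset of $H^1({\Bbb R}^3)$; and the Sobolev inequality (\ref{L3-Sobolev}) then gives $\|\dotro\|_3\le M_3$ for a constant $M_3=M_3(c,N)$.

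Next, the tail. Total boundedness of $K$ in $L^1$ yields, for each $\eta>0$, a bounded region $\Sigma_\eta$ — which we may take to be a union of $\Part_n$-cells — with $\|\rho\rest_{\Sigma_\eta^c}\|_1<\eta$ for all $\rho\in K$ (take a finite $L^1$ net in $K$, note that each net element has small $L^1$ tail, and unite the corresponding exceptional regions). Because the $L^1$ norm sees only cell integrals and $\Sigma_\eta$ is a union of cells, $\|\dotro\rest_{\Sigma_\eta^c}\|_1=\|(\pi_n\dotro)\rest_{\Sigma_\eta^c}\|_1<\eta$ for every $\dotro\in\pi_n^{-1}K$. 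Combining this with the uniform $L^3$ bound and the interpolation inequality $\|h\|_p\le\|h\|_1^{1-\theta}\|h\|_3^{\theta}$, valid with some $\theta=\theta(p)\in[0,1)$ when $1\le p<3$, gives $\|\dotro\rest_{\Sigma_\eta^c}\|_p\le\eta^{1-\theta}M_3^{\theta}$. Thus the $L^1$ and $L^p$ tails of $\dotro$ outside $\Sigma_\eta$ together do not exceed $\tau(\eta):=\eta+\eta^{1-\theta}M_3^{\theta}$, which tends to $0$ with $\eta$, uniformly over $\pi_n^{-1}K\cap\{F\le c\}$.

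For the bounded piece I would fix $\Sigma=\Sigma_\eta$ and work cell by cell. Each cell $\Omega\subset\Sigma$ is bounded and convex, hence a Lipschitz domain, and $\dotro^{1/2}\rest_\Omega$ lies in a norm-bounded subset of $H^1(\Omega)$ (with $\|\dotro^{1/2}\|_{H^1(\Omega)}\le(M+N)^{1/2}$). The hypothesis $p<3$ makes $2p<6$, which keeps us below the critical Sobolev exponent for $H^1$ in dimension three, so by Rellich--Kondrachov the embeddings $H^1(\Omega)\hookrightarrow L^{2p}(\Omega)$ and $H^1(\Omega)\hookrightarrow L^2(\Omega)$ are compact; hence $\{\dotro^{1/2}\rest_\Omega\}$ is totally bounded in $L^{2p}(\Omega)$ and in $L^2(\Omega)$. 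Since the squaring map obeys $\|g^2-h^2\|_{L^p(\Omega)}\le\|g-h\|_{L^{2p}(\Omega)}\|g+h\|_{L^{2p}(\Omega)}$ and the analogous $L^2\to L^1$ bound, it is Lipschitz on the relevant bounded sets and therefore carries totally bounded sets to totally bounded sets; so $\{\dotro\rest_\Omega\}$ is totally bounded in $L^p(\Omega)$ and $L^1(\Omega)$. As $\Sigma$ is a finite union of cells, $\{\dotro\rest_\Sigma:\dotro\in\pi_n^{-1}K\cap\{F\le c\}\}$ is totally bounded in $L^1(\Sigma)\cap L^p(\Sigma)$.

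To conclude: given $\epsilon>0$, choose $\eta$ with $\tau(\eta)<\epsilon/2$, take a finite $(\epsilon/2)$-net $\{g_1,\dots,g_m\}$ for $\{\dotro\rest_\Sigma\}$ in the $L^1\cap L^p$ norm, and extend each $g_i$ by zero off $\Sigma$; then for any $\dotro$ in our set the nearest such $g_i$ on $\Sigma$ satisfies $\|\dotro-g_i\|_1+\|\dotro-g_i\|_p<\epsilon/2+\tau(\eta)<\epsilon$, so $\{g_i\}$ is a finite $\epsilon$-net. The main obstacle is not any single estimate but getting the division of labour right: the total boundedness of $K$ is exactly what prevents mass from escaping to spatial infinity (local Rellich compactness cannot do this alone, since $F$ is translation invariant), while the energy bound $F\le c$ together with $p$ lying below the Sobolev exponent is what yields compactness of the cell-wise restrictions of $\dotro=(\dotro^{1/2})^2$. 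The one genuinely technical step is the cell-wise Rellich-plus-squaring argument; everything else is bookkeeping.
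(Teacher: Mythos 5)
Your proof is correct, and it organizes the work the same way the paper does at the top level --- uniform tail control coming from the total boundedness of $K$ (transferred to the fine-grained densities via the cell-integral identity for non-negative $\dotro\in\pi_n^{-1}\rho$), local compactness coming from the energy bound $F\le c$ --- but the local half is handled by a genuinely different mechanism. The paper stays inside the coarse-graining hierarchy: it invokes the quantitative Poincar\'e estimate of Theorem \ref{F-limit} to pick a finer scale $m$ at which every $\dotro$ with $F[\dotro]\le c$ lies within $\epsilon$ of its scale-$m$ projection, thereby reducing the problem to finding a net for the set of scale-$m$ densities projecting to $K$, which after the same tail truncation lives in a finite-dimensional space. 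You instead apply Rellich--Kondrachov cell by cell to $\dotro^{1/2}$, which the lower bound in (\ref{Lieb-sandwich}) confines to a bounded subset of $H^1$, and transfer compactness to $\dotro$ through the locally Lipschitz squaring map $L^{2p}\to L^p$; the hypothesis $p<3$ enters as subcriticality of the exponent $2p<6$ rather than as the survival of a positive power of $D_m$ in the Poincar\'e bound --- two faces of the same fact. The paper's route buys self-containment within machinery already built, plus an explicit rate; yours buys independence from the multiscale apparatus, since your argument establishes the compactness of $\pi_n^{-1}K\cap\{F\le c\}$ directly in $X$ and would stand even without Theorem \ref{F-limit}. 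Both arguments correctly locate the role of each hypothesis, and I see no gap in yours.
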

\begin{proof}
Since nothing in $\pi^{-1}\rho$ has finite intrinsic energy unless
$\rho$ is positive and properly normalized, we tacitly assume that all 
coarse-grained densities involved have those properties.
With $\epsilon > 0$ given, we show the existence of a finite $(3\epsilon)$-net
for $\pi_n^{-1} K \cap \{F \le c\}$.
According to Theorem \ref{F-limit}, there is $m \ge n$ such that
$\|\iota_m \rho - \dotro\|_{L^1\cap L^p} < \epsilon$ for any 
$\dotro \in \pi_m^{-1} \rho$ with $F[\dotro] \le c$.
Then, every $\rho$ in $\pi_n^{-1} K \cap \{F \le c\}$ is within $\epsilon$
of ($\iota$ of) some point of $(\pi^m_n)^{-1} K \cap ({\X}^m)^+$, the subset of $({\X}^m)^+$
which projects to $K \subset  {\X}^n$.
Thus, all that is needed is a finite $(2\epsilon)$-net 
for $(\pi^m_n)^{-1} K \cap ({\X}^m)^+$.  

Now note the following two simple facts.
For any scale $m$, the $L^p$ norm of an element of $({\X}^m)^+$ is bounded
in terms of its $L^1$ norm.  Also, a non-negative element of
$(\pi_n^m)^{-1}\rho$ has the same $L^1$ norm over any cell of ${\Part}_n$
as does $\rho$.  As a result, since $K$ is totally bounded by assumption,
there is a bounded ${\Part}_n$-measurable region $\Omega$ such that
the $L^1\cap L^p$ norm of every element of $(\pi_n^m)^{-1}K \cap ({\X}^m)^+$
over $\Omega^c$ (the complement of $\Omega$) is less than $\epsilon$.

Thus, it will suffice to restrict everything in $(\pi_n^m)^{-1}K \cap ({\X}^m)^+$
to $\Omega$ and find a finite $\epsilon$-net for that set.  But that is simple
because there are a finite number of degrees of freedom in those truncated
${\X}^m$ densities and each one is bounded due to total boundedness of $K$.
\end{proof}

\section{Conclusions}
\label{conclusion}

The results of this paper show that coarse-grained models of DFT
are mathematically well-behaved and are good approximations to the
standard, fine-grained, interpretation in that non-pathological
aspects are faithfully reflected.  These properties make the
coarse-grained models good regularizations of the fine-grained theory.
In such a r\^ole, coarse-grained models provide a controlled
arena in which to work, and also shed some light on the pathologies of
the fine-grained theory.

Beyond that, however, I have argued that the coarse-grained models
provide a superior {\it interpretation} of DFT.
The standard interpretation of DFT takes the somewhat reflexive view that 
a density involves an specification of infinitely fine spatial resolution.  
The coarse-grained interpretation understands each density to be specified 
with limited resolution, though there is no limit to that resolution.  
In DFT, one keeps track of spin density, but not degrees of 
freedom of the state which do not affect it.  Those are subject to an 
automatic energetic selection.  The step to coarse-grained DFT is very similar,
except that the degrees of freedom which the formalism gives over to
automatic energetic selection are the distribution of density within 
cells --- the short length scale degrees of freedom which resulted
in the fine-grained interpretation not being a good model.  
Paired in a natural way with a reinterpretation of external potential,
this yields a model of DFT where everything works {\em the way it is 
supposed to.}

\begin{acknowledgments}
I thank Cristiano Nisoli and Vin Crespi for suggestions on an early version of
the manuscript, John Clemens for a conversation about topological genericity,
and the Center for Nanoscale Science at the Pennsylvania State University
for financial support.
\end{acknowledgments}

\appendix
\section{functional analysis survival kit}
\label{fa-appendix}

This Appendix contains a quick review of the basic functional analysis used
in this paper.  Readers who need a quick reminder of a definition or notation
should find what they need.  With luck, readers unfamiliar with functional
analysis may find enough to follow the rough outlines of the ideas and arguments. 
Everything here is standard.

{\bf Topology and metric spaces.} Questions of continuity and convergence are
a major preoccupation in this paper.  The structure on a set $S$ which
makes these notions meaningful is {\it topology.}
{\it Neighborhood bases} are a convenient way to specify a topology.
A neighborhood ${\mathcal N}_x$ base for $x\in S$ is a collection $\{U_\alpha\}$ of
subsets of $S$ containing $x$, and satisfying the condition that,
whenever $U_\alpha,U_\beta \in {\mathcal N}_x$, then there is
some $U_\gamma \in {\mathcal N}_x$ such that 
$U_\gamma \subset U_\alpha,U_\beta \cap {\mathcal N}_x$.
Any superset of a member of ${\mathcal N}_x$ is a {\it neighborhood} of $x$
A subset $T$ of $S$ is {\it open} if, whenever $T$ contains a point $y$, 
it contains an entire neighborhood of $y$.
A sequence $(x_n)_{n=1}^\infty$ in topological space $S$ converges to
$x$ if, for any $U \in {\mathcal N}_x$, the entire tail of the sequence
from some $n$ (depending on $U$) onward is contained in $U$.
A subset $F\in S$ is said to be {\it closed} precisely when its
complement $S \setminus F$ is open.  The {\it interior} of a set $T$ is the
largest open set contained within $T$ and the {\it closure} of $T$ is
the smallest closed set containing $T$.

When one refers to ``the topology ${\mathcal T}$'' it means the collection of
open sets.  One topology ${\mathcal T^\prime}$ is called {\em stronger} than 
another, ${\mathcal T}$, if ${\mathcal T}\subset {\mathcal T^\prime}$.
For norm topologies (see Banach spaces below), the condition can be expressed by
$\|\cdot\| \le c \|\cdot\|^\prime$ for some $c>0$.  

Metric spaces are a particularly pleasant sort of topological space.
One choice of a neighborhood base of $x$ in a metric space $(S,d)$ is 
the collection of open balls $B(x,r) = \{y\in S: d(x,y) < r\}$ for
rational values of $r$.  Closed balls, or all real values of $r$ would
serve equally; they all specify the same topology.  
The metric space $S$ is even nicer if it is {\it complete}, meaning
that if the sequence $(x_n)_{n=1}^\infty$ is a Cauchy sequence, then
it actually converges (to some $x\in S$).

$T\subset S$ is {\it compact} if it has the property that for any
covering by open sets, $T \subset \cup_\alpha U_\alpha$, there
is a finite subfamily $U_{\alpha_1},\ldots U_{\alpha_n}$ which 
still covers $T$.  Roughly speaking, a compact set is 
almost topologically finite.  If $S$ is a metric space, then $T$
is compact if and only if it is complete and {\it totally bounded}.  
The latter condition means that,
given $\epsilon > 0$, $T$ can be covered by some finite collection of
balls of radius $\epsilon$.  For the metric space $T$, compactness
is equivalent to {\it sequential compactness} which means that
every sequence has a convergent subsequence (to a point in $T$).

{\bf Banach spaces.}  Let $V$ be a normed vector space, with norm 
$\|\cdot\|$.   The norm provides a metric via $d(x,y) = \|x-y\|$.
If $V$ is a complete metric space under this norm, then it is 
called a {\it Banach space}.
The classical Lebesgue spaces are Banach spaces defined by means
of integral norms.
The Lebesgue {$L^p$-norm} $\|\cdot\|_{p}$ for $1 \le p < \infty$ 
is given by
$
\|f\|_{p} 
\equiv \|f\|_{L^p} 
\defeq \left( \int |f({\bm x})|^p \, d{\bm x} \right)^{1/p}.
$
and the Lebesgue space $L^p({\Bbb R}^n)$ is the vector space of
all measurable functions on ${\Bbb R}^n$ with finite $L^p$-norm.
(In our case, we consider real functions.)  These spaces are Banach spaces.
The closed unit ball of a Banach space is compact in the norm topology 
if and only if the space is finite dimensional.
A Banach space (or even a topological space) is {\it separable}
if it contains a countable dense set.  All the classical $L^p$
spaces, except for $L^\infty$, are separable.

To each Banach space $X$ corresponds a {dual space}, denoted $X^*$,
comprising continuous linear functionals.  A common notation for the value 
of the functional $\lambda \in X^*$ on $x \in X$ is $\langle \lambda, x \rangle$.
An inequality of the form $|\langle u, f \rangle| \le \|u\|_{X^*} \|f\|_{X}$
expresses the content of {\it continuity} for a {\em linear} functional, 
and simultaneously serves to define a norm on $X^*$, which turns {\em it} into a Banach space.
The inequality also shows why continuous linear functionals are called {\it bounded}.

The duals of the Lebesgue spaces can be identified with Lebesgue spaces themselves.
For $1<p<\infty$, $L^p({\Bbb R}^n)^* = L^q({\Bbb R}^3)$ with ${1}/{q} + {1}/{p} = 1$
and the duality pairing given by an integral:
$\langle \lambda, u \rangle = \int \lambda(x) u(x) \, dx$.
For $p=1$, the formula gives $q = \infty$.  $L^\infty = (L^1)^*$ consists of functions which 
are essentially bounded, meaning that for $u \in L^\infty$, there is some
a number $\|u\|_\infty$ $M$ such that $|u({\bm x})| \le M$ off a set of measure zero.
$\|u\|_\infty$, the norm of $u$, is the smallest such $M$.
The spaces $L^p$ for $1< p < \infty$ are reflexive, meaning they are the duals
of their duals.  $L^1$ and $L^\infty$ are not reflexive.

The {\it Hahn-Banach} theorem says that whenever $\lambda$ is a bounded
linear functional on some vector subspace $V$ of Banach space $X$ with
bound $M$ on $V$, then there is a (non-unique) extension of $\lambda$ to all of $X$
with the same bound.

For two Banach spaces $X$ and $Y$, the Banach space $X \cap Y$ is the set intersection
normed by $\|x\|_{X\cap Y} = \|x\|_X + \|x\|_Y$.  The dual space is
$X^* + Y^*$, normed as $\|u\|_{X^*+Y^*} = \inf \{\|u_1\|_X + \|u_2\|_Y: u_1 + u_2 = u\}$.

{\bf Weak topologies.}
Apart from the norm topologies, there is another class of topologies on Banach spaces
which appear in this paper.  The weak topology on a Banach space $X$ is 
given by neighborhood bases of the origin of the form 
\[
U(0;\{\xi\},\epsilon) = \{ y\in X : | \langle \xi_i, y\rangle| < \epsilon:\, i=1,\ldots,n \},
\]
for some $\xi_1,\ldots, \xi_n \in X^*$ (a finite set) and $\epsilon > 0$.
Neighborhoods of other points in $X$ are obtained by translation.
Thus, a sequence $x_1,x_1,\ldots$ converges to $x$ in the weak topology if and
only if for any $\xi$ and $\epsilon$, there is some $N(\xi,\epsilon)$ such 
that $i > N(\xi,\epsilon)$ implies that $x_i \in U(x,\xi,\epsilon)$.
Reversing the roles of $X$ and $X^*$ gives the weak-$*$ topology on $X^*$.
The Banach-Alaoglu theorem, which is used a couple of times in this paper, says that
a norm-bounded, norm-closed subset $S$ of $X^*$ is compact in the weak-* topology.
If $X$ is separable, then the weak-* topology on $S$ is metrizable, so
weak-* compactness of $S$ is equivalent to sequential weak-* compactness.
If $X$ is reflexive, then the weak topology on $X$ is the same thing as the 
weak-* topology when viewing $X$ as $X^{**}$.  

\section{continuously differentiable convex functions in finite dimensions}
\label{cvx-appendix}

This Appendix contains a brief sketch of the convex analysis results
used in the proof of Lemma \ref{local-niceness}.
This material can be found in \S 6.7 of Ref. \cite{Florenzano},
and also in Ref. \cite{Rockafellar} with more work.  
It seems advisable to give a proof here because many books on 
convex analysis do not discuss it.

Here is the general setting in which we work:
$\{ f_n: n=1,2,\ldots \}$ and $f$ are convex functions on $B_{R} = B(0,R)$, 
the open ball of radius $R$ in ${\Bbb R}^n$, which are uniformly
bounded above and below.  (5) is the main result.

\noindent 1). The bounds, say $a < b$, imply that if ${\bm x} \in B_R$,
there is some open neighborhood of ${\bm x}$ on which 
$|f({\bm y}) - f({\bm y^\prime})|\le  c |{\bm y} - {\bm y^\prime}|$,
for $c$ which depends on the neighborhood.  This is called local Lipschitz continuity.

For, if ${\bm x} \in B_R$,
some open ball $B({\bm x},2\delta)$ around ${\bm x}$ is contained in $B_R$.
From the bounds and convexity of $f$ it is easy to see that
a chord to the graph of $f_n$ in $B({\bm x},\delta)$ then cannot exceed
$(b-a)/\delta$.

\noindent 2). The pointwise suprememum $\bigvee_{n} f_n$, and
the pointwise limit $\lim_{n} f_n({\bm x})$, if it exists,
are convex over $B_R$, since lack of convexity can be diagnosed from
only three points.  From these it follows that  
$\limsup_{n} f_n({\bm x})$, which always exists, is convex.

\noindent 3). If $f_n \to f$ pointwise, then the convergence is locally uniform:
given ${\bm x}$ and tolerance $\epsilon$, there is a neighborhood $U$
of ${\bm x}$ such that $|f_n - f|  < \epsilon$ over $U$, for
large enough $n$.  This follows from the fact that $\{ f_n \}$ and $f$
are equi-Lipschitz.

\noindent 4). Now, if $f_n \to f$ pointwise, then 
$\limsup f_n^\prime({\bm x};{\bm y}) \le f^\prime({\bm x};{\bm y})$
for any ${\bm y} \in {\Bbb R}^n$.  

This is essentially a one-dimensional problem,
and hinges on the fact that the difference quotients in the definition of
directional derivative converge monotonically:
With $\delta_\lambda f({\bm x};{\bm y}) \defeq \lambda^{-1}[f({\bm x}+\lambda {\bm y})-f({\bm y})]$ 
for $\lambda > 0$,
$\delta_\lambda f({\bm x};{\bm y}) \downarrow f^\prime({\bm x};{\bm y})$ as $\lambda \downarrow 0$. 
To prove the assertion, fix $\epsilon > 0$.  
Clearly, $\exists \lambda_0 > 0$ such that $\lambda \le \lambda_0$ implies that 
$\delta_\lambda f({\bm x};{\bm y}) < f^\prime({\bm x};{\bm y}) + \epsilon/2$.
Then, there is $N$ such that $n \ge N$ implies that both 
$|f_n({\bm x})-f({\bm x})|$ and $|f_n({\bm x}+\lambda_0 {\bm y})-f({\bm x}+\lambda_0 {\bm y})|$
are less than $\lambda_0 \epsilon/4$, so that
$\delta_{\lambda_0} f_n({\bm x};{\bm y}) < 
\delta_{\lambda_0} f({\bm x};{\bm y}) + \epsilon/2 < f^\prime({\bm x};{\bm y}) + \epsilon$.
Due to monotonicity of $\delta_\lambda f_n({\bm x};{\bm y})$, therefore,
$n\ge N_\epsilon$ implies that $f_n^\prime({\bm x};{\bm y}) < f^\prime({\bm x};{\bm y}) + \epsilon$,
and finally, $\limsup_n f_n^\prime({\bm x};{\bm y}) \le f^\prime({\bm x};{\bm y})$.

\noindent 5).  If $f_n \to f$ pointwise, then for ${\bm x} \in B_R$,
$\partial f_n^\prime({\bm x}) \subset \partial f({\bm x}) + \epsilon B_1$ for
large enough $n$.  If $f_n$ and $f$ are G\^ateaux differentiable, this
implies that $Df_n \to Df$ pointwise.

Since $f^\prime({\bm x};{\bm y})$ and $f_n^\prime({\bm x};{\bm y})$ are 
themselves convex functions of ${\bm y}$ for fixed ${\bm x}$, we can
apply (2) and (4) to see that $n \ge N({\bm y})$ implies 
$f_n^\prime({\bm x};{\bm y}) < f^\prime({\bm x};{\bm y}) + \epsilon$.
But, (3) shows that $N({\bm y})$ can be chosen independently of ${\bm y}$
in the unit sphere.

\noindent 6).  Applying (5) to $f_n({\bm x}) \defeq f({\bm x}-{\bm z}_n)$ 
with ${\bm z}_n\to 0$ shows that $f$ is $C^1$.

\section{\texorpdfstring{Genericity of $X^*$-nonrepresentable densities}{X*-nonrepresentability is generic}}
\label{bad-potentials}

\begin{prop}
\label{bad-neighbors}
Given $(\rho_0,v_0) \in \partial F$,
For any $\epsilon > 0$ and $\Lambda > 0$,
there is an $X^*$-representable density $\rho$ with
$\| \rho - \rho_0 \|_X < \epsilon$ and $|F[\rho] - F[\rho_0]| < \epsilon$,
such that $\|v[\rho] \rho \|_1 > \Lambda$, and therefore $\|v[\rho]\|_{X^*} > \Lambda$.
\end{prop}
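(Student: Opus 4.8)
The plan is to obtain $\rho$ as a ground-state density of $v_0$ after perturbing $v_0$ by a rapidly oscillating, compactly supported bump of \emph{fixed} amplitude; such a $\rho$ is automatically $X^*$-representable, the oscillation survives in $v[\rho]$, and that is what will force $\|v[\rho]\rho\|_1$ to be large even though the gauge convention keeps $\int v[\rho]\rho$ pinned near $-F[\rho_0]$ (the complementary picture being that rippling $\rho_0$ at wavelength $\ell$ makes the representing potential pick up a von Weizs\"acker term $\sim\nabla^2\sqrt\rho/\sqrt\rho$ of size $\ell^{-2}$; working on the potential side has the advantage of guaranteeing $X^*$-representability for every $N$). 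For the set-up I may assume $v_0$ already respects (\ref{gauge-convention}), i.e.\ $E[v_0]=0$. Choose a ball $B'$ with $\mu':=\int_{B'}\rho_0>0$, a larger ball $B\supset\overline{B'}$, a smooth $\eta$ with $0\le\eta\le1$, $\eta\equiv1$ on $B'$ and $\mathrm{supp}\,\eta\subset B$, and a unit vector $\bm e$; fix the amplitude $A>0$ once and for all, large enough that $\tfrac2\pi A\mu'-\|v_0\rho_0\|_1>\Lambda(1+\|\rho_0\|_X+\epsilon)$ (note $v_0\rho_0\in L^1$ since $v_0\in X^*=L^{3/2}+L^\infty$ and $\rho_0\in X$). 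For $\ell>0$ set $b_\ell(\bm x):=A\,\eta(\bm x)\sin(\ell^{-1}\bm e\cdot\bm x)$ and $v_\ell:=v_0+b_\ell\in X^*$. Since $b_\ell$ is bounded with compact support, $T+V_{ee}+v_\ell$ has the same essential spectrum as $T+V_{ee}+v_0$ while its bottom stays below it, so it has a ground state $\Gamma_\ell$; let $\rho_\ell$ be its density. Then $\rho_\ell\in\JN$ is $X^*$-representable, with $v[\rho_\ell]=v_\ell+c_\ell$ and $c_\ell=-E[v_\ell]/N$.

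The heart of the argument is the limit $\ell\downarrow0$: a fixed-amplitude, rapidly oscillating bounded potential averages to zero. One has $b_\ell\rightharpoonup0$ weak-$*$ in $L^\infty$, and $T+V_{ee}+v_\ell$ converges to $T+V_{ee}+v_0$ in strong resolvent sense (there is no effective-potential correction, because the amplitude is $O(1)$ rather than $O(\ell^{-1})$), so $E[v_\ell]\to E[v_0]=0$, hence $c_\ell\to0$, and $\Gamma_\ell\to\Gamma_0$ in trace norm for some ground state $\Gamma_0$ of $v_0$ — whose density I take $\rho_0$ to be. Energy bookkeeping, $F[\rho_\ell]=E[v_\ell]-\langle v_0,\rho_\ell\rangle-\langle b_\ell,\rho_\ell\rangle$, together with $\langle b_\ell,\rho_\ell\rangle\to0$ (weak-$*$ against the $L^1$-convergent $\rho_\ell$) and $\langle v_0,\rho_\ell\rangle\to\langle v_0,\rho_0\rangle$, then gives $F[\rho_\ell]\to F[\rho_0]$; and strong $H^1$ convergence of the ground eigenfunctions (weak $H^1$ convergence plus convergence of kinetic energies), combined with the Sobolev embedding $H^1({\Bbb R}^3)\hookrightarrow L^6$ applied to one coordinate and integrated over the remaining ones, upgrades $\rho_\ell\to\rho_0$ to convergence in $L^1\cap L^3=X$. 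Granting this, pick $\ell$ small enough that $\|\rho_\ell-\rho_0\|_X<\epsilon$ and $|F[\rho_\ell]-F[\rho_0]|<\epsilon$, and set $\rho:=\rho_\ell$.

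It remains to bound $\|v[\rho]\rho\|_1$ from below. On $B'$ one has $v[\rho]=v_0+A\sin(\ell^{-1}\bm e\cdot\bm x)+c_\ell$, so by the reverse triangle inequality
\[
\|v[\rho]\rho\|_1\ \ge\ \int_{B'}\Big(A\,|\sin(\ell^{-1}\bm e\cdot\bm x)|-|c_\ell|-|v_0|\Big)\,\rho\,d\bm x .
\]
As $\ell\to0$: equidistribution of $|\sin|$ gives $\int_{B'}|\sin(\ell^{-1}\bm e\cdot\bm x)|\rho_\ell\to\tfrac2\pi\mu'$ (using $\rho_\ell\to\rho_0$ in $L^1$); $|c_\ell|\int_{B'}\rho_\ell\to0$; and $\int_{B'}|v_0|\rho_\ell\to\int_{B'}|v_0|\rho_0\le\|v_0\rho_0\|_1$ (using $\rho_\ell\to\rho_0$ in $X$ and $v_0\in X^*$). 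Hence for $\ell$ small $\|v[\rho]\rho\|_1>\tfrac2\pi A\mu'-\|v_0\rho_0\|_1-o(1)>\Lambda(1+\|\rho_0\|_X+\epsilon)>\Lambda$; and since $\|v[\rho]\rho\|_1\le\|v[\rho]\|_{X^*}\|\rho\|_X$ with $\|\rho\|_X\le\|\rho_0\|_X+\epsilon$, also $\|v[\rho]\|_{X^*}>\Lambda$.

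The hard part will be the averaging limit: converting the (standard, expected) strong-resolvent convergence $T+V_{ee}+v_\ell\to T+V_{ee}+v_0$ into genuine $X$-norm convergence $\rho_\ell\to\rho_0$ together with $F[\rho_\ell]\to F[\rho_0]$, where the $L^3$ portion of the norm convergence is the most delicate point; one must also dispose of the cases in which the ground state of $v_0$ is degenerate or sits at the bottom of the essential spectrum, which can be handled by first perturbing $v_0$ by an arbitrarily small amount.
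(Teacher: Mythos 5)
Your mechanism is the same one the paper uses --- perturb $v_0$ by a compactly supported, rapidly oscillating potential of large fixed amplitude, argue that the ground-state density cannot exploit the oscillation and so barely moves, and read off the largeness of $\|v[\rho]\rho\|_1$ from the surviving oscillation --- and your final lower bound via the reverse triangle inequality and equidistribution of $|\sin|$ is fine. The gap is in the step you yourself flag as ``the hard part,'' and it is not a technicality: it is the crux. You need (i) that $T+V_{ee}+v_\ell$ actually \emph{has} a ground state, and (ii) that its density converges to $\rho_0$ in $X$ with $F[\rho_\ell]\to F[\rho_0]$. For (i), a bounded compactly supported perturbation preserves the essential spectrum but gives no discrete eigenvalue below it unless $E[v_0]$ sits strictly below the essential spectrum; for (ii), the asserted ``strong resolvent convergence'' does not follow from $b_\ell\rightharpoonup 0$ weak-$*$, since $\|b_\ell\psi\|_2^2\to \tfrac{A^2}{2}\int\eta^2|\psi|^2\neq 0$ --- one must integrate by parts against the oscillation (a homogenization-type argument, essentially the same computation as the paper's bound (\ref{unexploitable})), and then still upgrade to trace-norm convergence of states and $L^1\cap L^3$ convergence of densities. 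Your proposed repair of the degenerate/threshold cases --- ``first perturbing $v_0$ by an arbitrarily small amount'' --- is also problematic: by Hohenberg--Kohn the perturbed potential no longer represents $\rho_0$, and the whole point of this section of the paper is that there is \emph{no} continuity of $\rho\mapsto v[\rho]$ (or of ground-state densities in the potential) in the fine-grained theory to fall back on.

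The paper circumvents all of this with the Ekeland variational principle, and that is the idea you are missing. One never shows that $v_0+Mw_\ell$ has a ground state; one only shows, using the elementary estimate $|\langle w_\ell,\rho\rangle|\le \ell(c_1F[\rho]^{1/2}+c_2)$, that $\rho_0$ is an $\tilde\epsilon$-approximate minimizer of $F[\cdot]+\langle v_0+Mw_\ell,\cdot\rangle$ for $\ell$ small (and, via the same estimate, that any density doing better than $\rho_0$ cannot gain more than $O(\tilde\epsilon)$ from the oscillating term). Ekeland then delivers an \emph{exactly} representable pair $(\rho, v_0+\Delta v+Mw_\ell)$ with $\|\rho-\rho_0\|_X$ and $\|\Delta v\|_{X^*}$ both of order $\tilde\epsilon^{1/2}$, after which the closeness of $F[\rho]$ to $F[\rho_0]$ and the largeness of $\|v[\rho]\rho\|_1$ are short computations. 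If you want to complete your spectral-theoretic route you must supply the existence and convergence theory for the perturbed Hamiltonians; replacing that entire block by the $\epsilon$-subdifferential argument is both shorter and immune to the degeneracy and threshold issues.
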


\begin{proof}
Since Thm. \ref{V-limit} showed that any $X^*$-representable density is approximable
by $L^\infty$-representable densities, assume that $v_0 \in L^\infty$.

Define
\beq
{w}_\ell \defeq \eta({\bm x}) \sin \frac{x}{\ell}
= {\ell} \, \eta({\bm x}) \left(-\frac{\partial}{\partial x_1} \cos \frac{x_1}{\ell} \right),
\eeq
where $\eta({\bm x})$ is a smooth compactly supported bump function
($\eta \in C_0^\infty({\Bbb R}^3)$ and $0 \le \eta \le 1$).

The rough idea is that for very large $M$, if $\ell$ is small enough,
the ground state density for potential $v + M w_\ell$ cannot exploit $M w_\ell$
because doing so requires too much oscillation and therefore costs intrinsic energy
with the result that that ground state density is hardly different from $\rho_0$.
A slightly indirect approach and possibly a small additional modification of the 
potential is necessary to prove it.  The bound (\ref{unexploitable}) below and 
the Ekeland variational principle are key.

Integrating by parts, 
\beq
\left| \int {w}_\ell({\bm x}) \, \rho({\bm x}) d{\bm x} \right| 
 =
 {2\ell} \left| \int \cos \frac{x_1}{\ell} \, 
\left( \sqrt{{\rho}{\eta}} \frac{\partial \sqrt{{\rho}\eta}}{\partial x_1}
\right) d{\bm x} \right|.
\nonumber
\eeq
Applying the Cauchy-Schwarz inequality and using that $\cos^2 {x_1}/{\ell} \le 1$ yields,
\beq
|\langle {w}_\ell , \rho \rangle | \le 
{2\ell} \left( \int \eta \rho \, d{\bm x} \right)^{1/2}
\left[ 
\left( \int \eta | \nabla \rho^{1/2}|^2 \, d{\bm x} \right)^{1/2}
\right.
 + 
\left.
\left( \int \rho | \nabla \eta^{1/2}|^2 \, d{\bm x} \right)^{1/2}
\right].
\nonumber
\eeq
Finally, using inequality (\ref{Lieb-sandwich}), and overestimating the first
and last integrals,
\beq
|\langle w_\ell , \rho \rangle |  \le 
\ell \left( c_1 F[\rho]^{1/2} + c_2 \right),
\label{unexploitable}
\eeq
for some positive constants $c_1$ and $c_2$ which depend on total particle number.  

First, assuming $\ell$ small enough (depending on $\rho_0$) that
$ \| M w_\ell \rho_0 \|_{1} \ge \frac{M}{3} \int \rho_0 \eta \, d{\bm x} =: c_3 {M}$,
fix $M > 1$ large enough that (uniformly for small $\ell$)
\beq
\| (M w_\ell + v_0) \rho_0 \|_{1} > 2\Lambda,
\label{Mwv}
\eeq
and without loss of generality assume that $\epsilon < 1/M < 1$.
A few more upper bounds on $\ell$ will be imposed.

From now on, we will be interested only in densities $\rho$ satisfying the condition
\beq
M \langle w_\ell,\rho \rangle \le 
F[\rho] + \langle v_0 + Mw_\ell,\rho \rangle \le F[\rho_0] + \langle v_0 + Mw_\ell,\rho_0 \rangle
= M \langle w_\ell,\rho_0 \rangle
\label{better-than-rho0}
\eeq
Define $\tilde{\epsilon}$ by
\beq
{\tilde{\epsilon}}^{1/2} \defeq 
\epsilon \cdot \min \left\{ 1,\|v_0\|_{\infty}^{-1},\|\rho_0\|_X^{-1} \right\}
\le \epsilon.
\label{epsilon-tilde}
\eeq
Claim: if $\ell$ is small enough, then
$M\langle w_\ell, \rho_0 \rangle < {\tilde{\epsilon}}/3$,
and for any $\rho$ satisfying (\ref{better-than-rho0}), 
$M\langle w_\ell, \rho \rangle > {\tilde{\epsilon}}/3$.
The first part is simple since $\langle w_\ell, \rho_0 \rangle \to  0$ as $\ell \to 0$.
Given that, the second follows by use of the bound (\ref{unexploitable}).
Essentially the problem involves a quadratic in $F[\rho]^{1/2}$.
If $M \ell$ is very small, violation of $M\langle w_\ell, \rho \rangle > {\tilde{\epsilon}}/3$
would require such a large $F[\rho]^{1/2}$ that the upper bound in (\ref{better-than-rho0})
would also be violated.  We will not write down explicit criteria, but assume $\ell$
has been so constrained, with the result that 
$F[\rho_0] + \langle v_0 + Mw_\ell, \rho_0\rangle \le E[v_0 + Mw_\ell] + {\tilde{\epsilon}}$.

The ground is thus prepared for an application of the Ekeland variational principle.
It guarantees existence of a ground-state density $\rho$ of $v \defeq v_0 + \Delta v+ Mw_\ell$ 
where the perturbation $\Delta v$ satisfies $\|\Delta v\|_{X^*} < {\tilde{\epsilon}}^{1/2}$,
and $\rho$ satisfies $\|\rho - \rho_0\|_{X} < {\tilde{\epsilon}}^{1/2}$ and (\ref{better-than-rho0}).
Thus,
\beqa
F[\rho] 
& \le &
F[\rho_0] + \langle v_0, (\rho_0-\rho) \rangle + \langle M w_\ell, (\rho_0-\rho) \rangle
+ \langle \Delta v, (\rho_0-\rho) \rangle
\nonumber \\
& \le  &
F[\rho_0] + \epsilon + \tilde{\epsilon} + \tilde{\epsilon}^{1/2}
\le F[\rho_0] + 3 \epsilon.
\nonumber
\eeqa
This shows that we get $\rho$ close to $\rho_0$ in $L^1$ norm and intrinsic
energy as claimed in the statement of the Theorem.  
The upper bound on $F[\rho]$ suffices due to lower semicontinuity.

All that remains to be checked are the sizes of 
$\|v \rho\|_{1}$ and $\|v\|_{X^*}$.
Calculate using (\ref{Mwv}) and (\ref{epsilon-tilde}):
\beqa
\|(v_0+\Delta v + Mw_\ell)\rho\|_{1} &> &
\|(v_0+ Mw_\ell)\rho_0\|_{1} - 
\|(v_0+ Mw_\ell)(\rho - \rho_0)\|_{1} - 
\|\Delta v \, \rho\|_{1} 
\nonumber \\
& > & 2\Lambda - (M+\|v_0\|_{\infty}) {\tilde\epsilon}^{1/2} - 
\|\Delta v\|_{X^*} \|\rho_0\|_{X} - \|\Delta v\|_{X^*} \|\rho - \rho_0\|_{X}
\nonumber \\
& > & 2\Lambda - (1 + \epsilon) - \epsilon - \epsilon > 2\Lambda - 4.
\nonumber
\eeqa
Finally, notice that $\|v\|_{X^*} \ge \|v \rho \|_1/\|\rho\|_{X}$ and
that $\|\rho\|_{X} \ge 1$.
\end{proof}

This proposition will be used here as a springboard toward the conclusion that
failure of $X^*$-representability is topologically generic in a nontrivial sense.
Topological genericity is meant here in the sense of the Baire Category Theorem.
Recall that in a complete metric space $S$, a subset is {\it generic} if it is a countable 
intersection of open dense sets.  A subset is {\it meager} if it is a countable
union of nowhere dense (empty interior) closed sets.  The complement of a generic
set is meager.  Obviously, a countable intersection of generic subsets is itself
generic.  Though this notion of `generic' is not without flaws, it is commonly 
accepted as a criterion of just what the name implies.  At any rate, a generic
set is dense.
To apply the concept, we need a reasonable metric which makes $\JN$ a generic set in
its completion.
Thus, equip $\JN$ with the metric 
\beq
d(\rho,\rho^\prime) = \|\rho-\rho^\prime\|_{X} + |F[\rho]-F[\rho^\prime]|.
\eeq
This metric is actually just the graph norm of $F$ as $\graph F \subset X \times {\Bbb R}$.
an observation which is key to identifying the completion of $\JN$ with respect to $d$
with $\epi F \subset X\times{\Bbb R}$.  Pictorially,
$$
\begin{CD}
(\JN,d) @>{\subset}>> (\overline{\JN},d) \\
@VV{\simeq}V  @VV{\simeq}V \\
\graph F @>{\subset}>>  \epi F 
\end{CD}
$$
The identification is established by showing that $\epi F$ is a 
closed subset of $X\times {\Bbb R}$ (hence a $d$-complete space),
and that $\JN$ is dense in $\epi F$.
The first is a simple consequence of lower semicontinuity of $F$.
The second of the fact that $F$ is unbounded above on every 
$X$-neighborhood of any point in $\JN$:  
take $(\rho,z) \in \epi F$,  $z > F[\rho]$, and $\epsilon > 0$.
Then, there is $\rho^\prime$ satisfying
$\| \rho^\prime - \rho\|_X < \epsilon$ and $F[\rho^\prime] > z$.
Over the line segment $[\rho,\rho^\prime]$, $F$ goes from $F[\rho] < z$
to $F[\rho^\prime] > z$, hence must pass through $z$.  This shows that
there is a point $(\rho^{\prime\prime},z) \in \graph F$ within 
$d$-distance $\epsilon$ of $(\rho,z)$.  Hence, $\graph F$ is $d$-dense
in $\epi F$. 
In fact, $\JN$ is actually generic in $\overline{\JN}$.  For, since 
$\epi F$ is closed, so is $\epi (F + a)$ for $a > 0$, since this is just 
a translate of $\epi F$ in $X\times {\Bbb R}$.  The argument in the 
previous paragraph shows that $\epi (F+a)$, as a subset of $(\epi F,d)$, has empty interior.  
Therefore $\cup_{n\ge 1} \epi (F+1/n)$ is meager in $\overline{\JN}$.  But this last is {\em all} 
of $\epi F$ except for $\graph F$.  
This result is crucial to the endeavor.  A property $P$ defined on $\overline{\JN}$ is 
generic if and only if $P$-and-in-$\JN$ is generic, so that the status of
elements in $\overline{\JN}\setminus{\JN}$ is irrelevant.

Paraphrased in terms of $d$, the Proposition says that neither $\|v[\rho]\rho\|_1$ nor
$\|v[\rho]\|_{X^*}$ (without loss set equal to $+\infty$ outside the effective domain)
is upper semicontinuous with respect to $d$. 
Denote by $A_\alpha$ the set 
$A_\alpha \defeq \{ \rho\in\JN: \| v[\rho] \|_{X^*} \le \alpha \}$,
of densities with representing potentials having $X^*$ norms 
not exceeding $\alpha$.  According to Section \ref{fine-grained}, $A_\alpha$ is closed 
with respect to $X$ norm, {\it a fortiori} with respect to $d$, due to
lower semicontinuity of $\|v[\cdot]\|_{X^*}$.  
On the other hand, the Proposition shows that $A_\alpha$ is nowhere dense.
Consequently, ${\mathcal R} = \cup_{\alpha=1}^\infty {A}_\alpha$
is meager, being a countable union of closed nowhere dense sets.
Since ${\mathcal R}$ contains every $X^*$-representable density,
not being $X^*$-representable is generic.

%

\end{document}